\documentclass[11pt]{article}
\usepackage{amsmath,amssymb,amsthm}
\usepackage[noend]{algpseudocode}
\usepackage{algorithm}
\usepackage{graphicx}
\usepackage{fullpage}
\usepackage{url}
\usepackage{enumitem}
\usepackage[top=1in, bottom=1in, left=1in, right=1in]{geometry}
\usepackage{ifthen}
\usepackage{subcaption}
\usepackage{mathtools}
\usepackage{cleveref}

\newcommand{\dist}{\mathrm{\mathbf{d}}}
\newcommand{\M}{\mathcal{M}}
\newcommand{\ball}{\mathbf{B}}

\newcommand{\parent}{\mathrm{par}}
\newcommand{\child}{\mathrm{ch}}
\newcommand{\rel}{\mathrm{rel}}

\newcommand{\nt}{\mathrm{NT}}
\newcommand{\wnt}{\mathrm{LNT}}
\newcommand{\h}{\mathrm{h}}

\newcommand{\vor}{\mathrm{Vor}}
\newcommand{\spread}{\Delta}

\newcommand{\centersite}{\mathcal{C}}
\newcommand{\cell}{\mathcal{S}}
\newcommand{\bunch}{\mathcal{B}}

\newcommand{\splitaboveevent}{\Phi}
\newcommand{\splitbelowevent}{\Psi}
\renewcommand{\because}[1]{& \left[\text{#1}\right]}

\newtheorem{theorem}{Theorem}
\newtheorem{lemma}[theorem]{Lemma}

\newtheorem{definition}[theorem]{Definition}
\newtheorem*{invariant*}{Invariant}

\title{Randomized Incremental Construction of Net-Trees}
\author{
  Mahmoodreza Jahanseir\thanks{University of Connecticut \texttt{reza@uconn.edu}}
  \and
  Donald R. Sheehy\thanks{University of Connecticut \texttt{don.r.sheehy@gmail.com}}
}
\date{}

\begin{document}  \maketitle
  \setcounter{page}{0}
  \begin{abstract}
  Net-trees are a general purpose data structure for metric data that have been used to solve a wide range of algorithmic problems.
  We give a simple randomized algorithm to construct net-trees on doubling metrics using $O(n\log n)$ time in expectation.
  Along the way, we define a new, linear-size net-tree variant that simplifies the analyses and algorithms.
  We show a connection between these trees and approximate Voronoi diagrams and use this to simplify the point location necessary in net-tree construction.
  Our analysis uses a novel backwards analysis that may be of independent interest.
\end{abstract}

  \thispagestyle{empty}
  \clearpage
  \section{Introduction}
\label{sec:Introduction}

  Har-Peled \& Mendel introduced the net-tree as a linear-size data structure that efficiently solves a variety of (geo)metric problems such as approximate nearest neighbor search, well-separated pair decomposition, spanner construction, and others~\cite{har-peled06fast}.
  More recently, such data structures have been used in efficient constructions for topological data analysis (TDA)~\cite{sheehy13linear}.
  Net-trees are similar to several other data structures that store points in hierarchies of metric nets (subsets satisfying some packing and covering constraints) arranged into a tree or DAG.
  Examples include navigating nets~\cite{krauthgamer04navigating}, cover trees~\cite{beygelzimer06cover}, dynamic hierarchical spanners~\cite{cole06searching,gottlieb08optimal}, and deformable spanners~\cite{gao06deformable}.

  The extensive literature on such data structures can be partitioned into two disjoint groups: those that are easy to implement and those that can be constructed in $O(n\log n)$ time for doubling metrics (see Section~\ref{sec:preliminaries} for the definition).
  In this paper, we present an algorithm that is both simple and asymptotically efficient.
  We combine several ideas already present in the literature with a randomized incremental approach.
  The challenge is relegated to the analysis, where the usual tricks for randomized incremental algorithms do not apply to net-trees, mostly because they are not canonically defined by a point set.

  There are two known algorithms for building a net-tree~\cite{har-peled06fast} or a closely related structure~\cite{gottlieb08optimal} in $O(n\log n)$ time for doubling metrics.
  Both are quite complex and are primarily of theoretical interest.
  The algorithm of Har-Peled \& Mendel~\cite{har-peled06fast} requires a complex sequence of approximating data structures.
  Cole \& Gottlieb~\cite{cole06searching} proposed a similar data structure that supports dynamic insertions and deletions in $O(\log n)$ time.
  Their data structure maintains a so-called centroid path decomposition of a spanning tree of the hierarchy into a collection of paths that are each represented by a biased skip list.

  A much simpler algorithm due to Clarkson~\cite{clarkson02nearest} can be combined with an algorithm of Har-Peled \& Mendel~\cite{har-peled06fast} to run in $O(n\log \spread)$ time, where $\spread$ is the the \emph{spread} of the input, i.e.\ the ratio of the largest to smallest pairwise distances.
  Most of the complications of the theoretical algorithm are to eliminate this dependence on the spread.
  The goal of this paper is to combine the conceptual simplicity of Clarkson's idea with a simple randomized incremental algorithm to achieve the same $O(n\log n)$ running time of the best theoretical algorithms.

  The main improvement over the related data structures~\cite{cole06searching,har-peled06fast} that can be computed in $O(n\log n)$ time is the increased simplicity.
  In particular, the point location structure, the primary bottleneck in all related work is just a dictionary mapping uninserted points to nodes in the tree and is very easy to update.
  As testament to the simplicity, a readable implementation in python is only $\sim200$ lines of code and is available online~\cite{githubcode}.

  A second improvement is in the tighter bounds on the so-called relative constant.
  This is the constant factor that bounds the ratio of distances between relative links, the edges stored between nearby nodes in the same level of the tree.
  These relative links form a hierarchical spanner, and their number dominates the space complexity of the data structure.
  In similar constructions used in TDA, it was found that although a relative constant of $10$ was needed for a particular algorithm, the space blowup required using a constant closer to $3$ in practice, sacrificing theoretical guarantees~\cite{oudot14zigzag}.
  In previous work the relative constant was $13$ or more.
  In this work, we show that the relative constant can be pushed towards $2$ as a function of the difference in scales between adjacent levels in the tree.
  A value of $6$ is easily achievable in practice.

  \paragraph*{Related Work}
  Uhlmann~\cite{uhlmann91satisfying} proposed \emph{metric trees} to solve range searches in general metric spaces, but there are no performance guarantees on the queries.
  Yianilos~\cite{yianilos93data} devised a similar data structure, called the \emph{vp-tree}, and he showed that when the search radii are very small, queries can be run in $O(\log n)$ expected time.
  These structures are balanced binary search trees and they can be constructed recursively in $O(n\log n)$ time by partitioning the points into two subsets according to their distance to the median.

  Clarkson~\cite{clarkson99nearest} proposed two randomized data structures to answer approximate nearest neighbor queries in metric spaces satisfying a sphere packing property, which is equivalent to having constant doubling dimension.
  The first data structure assumes that the query points have the same distribution as the given input points and it may fail to return a correct answer.
  The second one always returns a correct answer, but it requires more time and space.
  Roughly speaking, both of these data structures can be constructed in $O(n\log\spread)$ time, with $O(n\log\spread)$ size, and they answer queries in $O(\log\spread)$ time.

  Karger \& Ruhl~\cite{karger02finding} proposed \emph{metric skip lists} for the so-called growth restricted metrics.
  Their data structure can be constructed in $O(n\log n\log\log n)$ time and has $O(n\log n)$ size.

  Krauthgamer \& Lee~\cite{krauthgamer04navigating} presented a dynamic, deterministic data structure called \emph{navigating nets} to address proximity searches in doubling spaces.
  Navigating nets are comprised of hierarchies of nested metric nets connected as a DAG.
  In more detail, the points at some scale $i$ are of distance $2^i$ and the balls centered at those points with radii $2^i$ contain all points of scale $i-1$.
  They showed that navigating nets have linear size and can be constructed in $O(n\log\spread)$ time.
  Gao et al.~\cite{gao06deformable} independently devised a very similar data structure called the \emph{deformable spanner} as a dynamic $(1+\epsilon)$-spanner in Euclidean spaces that can be maintained under continuous motion.

  Beygelzimer et al.~\cite{beygelzimer06cover} proposed the \emph{cover tree}, a spanning tree of a navigating net, to make the space independent of the doubling dimension.
  Their experimental results showed that cover trees have good performance in practice.
  Besides the space complexity, cover trees do not theoretically outperform navigating nets.
  Recently, we~\cite{jahanseir16transform} showed that cover trees with slight modifications also satisfy the stricter properties of net-trees, and a net-tree can be constructed from a cover tree in linear time assuming the space may depend on the doubling dimension.
  Refer to~\cite{clarkson06nearest,chavez01searching} for surveys on proximity searches in metric spaces.

  \paragraph*{Overview of the Algorithm and its Analysis}
  Our approach is to construct a net-tree incrementally.
  Each new point is added by first attaching it as a leaf of the tree.
  Then, new nodes are added for that point by propagating up the tree one level at a time using just local updates until the covering property is satisfied.
  We call this the \emph{bottom-up insertion}.
  By making only local updates, we can show that the total work of updating the tree is linear in the size and thus, linear in the number of points.
  This algorithm relies on a point location data structure that associates each uninserted point with a node in the tree called its \emph{center}, which provides a starting point for the bottom-up insertion.
  Each time a new node is added, a local search is required to see if any of the uninserted points should have the new node as their center.
  The challenge is to show that the expected total number of distance computations for the point location data structure is only $O(n\log n)$ for points inserted in random order.

  Our approach to point location is similar in spirit to that proposed by Clarkson~\cite{clarkson02nearest}.
  The difference is that instead of associating uninserted points with Voronoi cells of inserted points, we associate them with nodes in the tree.
  This allows us to work with arbitrary (and random) orderings of the points rather than just greedy permutations.

  Our analysis includes a novel backwards analysis on random orderings to bound the expected running time.
  The main difficulty is that the tree structure is far from canonical for a given set of points.
  Instead of looking at the tree construction backwards, we define a set of random events that can occur in a permutation of metric points and then charge the work of point location to these events.
  We show that each event is charged only a constant number of times and there are at most $O(n\log n)$ events in expectation.
  The resulting expected running time of $O(n\log n)$ matches the best theoretical algorithms.

  \section{Preliminaries}
\label{sec:preliminaries}

  \subsection{Doubling Metrics and Packing}
  \label{sub:doubling_metrics}

  The input is a set of $n$ points $P$ in a metric space.
  The \emph{closed metric ball} centered at $p$ with radius $r$ is denoted $\ball(p,r):=\{q\in P \mid \dist(p,q)\le r\}$.
	The \emph{doubling constant} $\rho$ of $P$ is the minimum $\rho$ such that every ball $\ball(p,r)$ can be covered by $\rho$ balls of radius $r/2$.
  We assume $\rho$ is constant.
  The \emph{doubling dimension} is defined as $\lg\rho$.
  A metric space with a constant doubling dimension is called a \emph{doubling metric}.
  Throughout, we assume that the input metric is doubling.
  The doubling dimension was first introduced by Assouad~\cite{assouad83plongements} and has since found many uses
  in algorithm design and analysis.
  Other notions of dimension for general metric spaces have also been proposed.
  The notion of growth-restricted metrics of Karger \& Ruhl~\cite{karger02finding} is similar to doubling
  metrics, though it is more restrictive.
  Gupta et al.~\cite{gupta03bounded} showed that the dimension of a growth restricted metric is upper bounded by its doubling dimension.

  Perhaps the most useful property of doubling metrics is that they allow for the use of packing and covering arguments similar to those used in Euclidean space to carry over to a more general class of metrics.
  The following lemma is at the heart of all the packing arguments in this paper.

  \begin{lemma}[Packing Lemma]\label{lem:packing}
    If $X\subseteq\ball(p,r)$ and for every two distinct points $x,y\in X$, $\dist(x,y)>r'$, where $r>r'$, then $|X|\le\rho^{\lfloor\lg (r/r')\rfloor+1}$.
  \end{lemma}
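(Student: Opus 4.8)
The plan is to prove this by the standard ``packing via covering'' argument: cover $\ball(p,r)$ by a controlled number of balls whose radii are small enough that none of them can contain two points of $X$, and then bound $|X|$ by the number of covering balls.

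First I would isolate the one metric fact that does the work: if $s\le r'/2$, then every ball of radius $s$ contains at most one point of $X$, since two distinct points $x,y$ lying in a ball $\ball(c,s)$ would satisfy $\dist(x,y)\le\dist(x,c)+\dist(c,y)\le 2s\le r'$, contradicting the hypothesis that distinct points of $X$ are more than $r'$ apart. Next I would iterate the definition of the doubling constant: $\ball(p,r)$ is covered by $\rho$ balls of radius $r/2$, each of those is covered by $\rho$ balls of radius $r/4$, and so on, so an easy induction on $k$ shows that $X\subseteq\ball(p,r)$ is covered by $\rho^k$ balls of radius $r/2^k$ for every $k\ge 0$. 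Finally I would take $k$ to be the number of halvings needed to bring the radius down to at most $r'/2$; for that $k$ each of the $\rho^k$ covering balls meets $X$ in at most one point, so $|X|\le\rho^k$, and substituting the value of $k$ (a short computation with $\lg(r/r')$) yields the stated bound $\rho^{\lfloor\lg(r/r')\rfloor+1}$.

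I do not expect a genuine obstacle here — this is a routine packing estimate — and the only points calling for a little care are that the doubling property must be legitimately reapplied to each of the smaller balls when recursing (which is fine, since our balls are taken about arbitrary centers) and the floor/ceiling bookkeeping in the choice of $k$, which is exactly where the exponent $\lfloor\lg(r/r')\rfloor+1$ in the statement comes from.
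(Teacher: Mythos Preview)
Your proposal is correct and follows essentially the same argument as the paper: iterate the doubling definition $k$ times to cover $\ball(p,r)$ by $\rho^k$ balls of radius $r/2^k$, then observe that once these balls are small enough each contains at most one point of $X$. If anything you are slightly more careful than the paper, which stops at ``radius less than $r'$'' whereas you correctly note that radius at most $r'/2$ is what guarantees at most one point of $X$ per ball; the remaining floor/ceiling bookkeeping you flag is indeed the only place requiring attention.
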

  \begin{proof}
    By the definition of the doubling constant, $X$ can be covered by $\rho$ balls of radius $r/2$.
    These balls can each be covered by $\rho$ balls of radius $r/4$.
    Repeating this $\lfloor\lg (r/r')\rfloor+1$ times results in at most $\rho^{\lfloor\lg (r/r')\rfloor+1}$ balls of radius less than $r'$, and these balls contain at most one point of $X$ each, so $X$ has at most $\rho^{\lfloor\lg (r/r')\rfloor+1}$ points.
  \end{proof}

  The spread of a point set $P$ often plays a role in running times of metric data structures.
  The following lemma captures the relationship between the spread and the cardinality of $P$ and follows directly from the Packing Lemma.
  \begin{lemma}
    A finite metric $P$ has at most $\rho^{O(\log\Delta)}$ points.
  \end{lemma}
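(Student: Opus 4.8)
The plan is to read the spread directly into the Packing Lemma (Lemma~\ref{lem:packing}): the entire point set lies inside a single ball whose radius is the diameter of $P$, while no two points of $P$ are closer than the minimum interpoint distance. Concretely, I would first set $D := \max_{x,y\in P}\dist(x,y)$ and $d := \min_{x\ne y\in P}\dist(x,y)$, so that $\Delta = D/d$ by definition of the spread. If $|P|\le 1$ the statement is trivial, so assume $D,d>0$ are well defined.

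Next, fix an arbitrary $p\in P$. Since every point of $P$ is within distance $D$ of $p$, we have $P\subseteq\ball(p,D)$; and since any two distinct points of $P$ are at distance at least $d$, they are in particular at distance strictly greater than $d/2$. Applying the Packing Lemma with $X=P$, $r=D$, and $r'=d/2$ (the hypothesis $r>r'$ holds because $D\ge d>d/2$) gives
\[
|P| \;\le\; \rho^{\lfloor\lg(D/(d/2))\rfloor+1} \;=\; \rho^{\lfloor\lg(2\Delta)\rfloor+1} \;\le\; \rho^{\lg\Delta+3} \;=\; \rho^{O(\log\Delta)}.
\]

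There is essentially no hard step here; the one point requiring a little care — and the closest thing to an obstacle — is that the Packing Lemma demands the \emph{strict} separation $\dist(x,y)>r'$, whereas we only know $\dist(x,y)\ge d$. This is precisely why I would take $r'=d/2$ rather than $r'=d$; the substitution costs only an additive constant in the exponent of $\rho$, which is harmless inside the $O(\cdot)$.
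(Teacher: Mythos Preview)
Your proof is correct and is precisely the intended argument: the paper itself does not spell out a proof but simply states that the lemma ``follows directly from the Packing Lemma,'' which is exactly what you do. Your care in taking $r'=d/2$ to ensure the strict-inequality hypothesis is appropriate, and the resulting additive constant in the exponent is indeed absorbed by the $O(\cdot)$.
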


  The distance from a point $p$ to a compact set $Q$ is defined as $\dist(p, Q) \coloneqq \min_{q\in Q} \dist(p,q)$.
  The \emph{Hausdorff distance} between two sets $P$ and $Q$ is $\dist_H(P,Q)\coloneqq\max\{\max_{p\in P} \dist(p,Q),\max_{q\in Q} \dist(q,P)\}$.

  \subsection{Metric Nets and Net-Trees}
  \label{sub:net_trees}
  A metric net or a \emph{Delone set} is a subset of points satisfying some packing and covering properties.
  More formally, an $(\alpha,\beta)$-net is a subset $Q\subseteq P$ such that: for all distinct points $p,q\in Q$, $\dist(p,q)>\alpha$ (packing), and for all $p\in P$, $\dist(p, Q)\le \beta$ (covering).

  \begin{figure}[htb]
    \centering
    \includegraphics[width=\columnwidth]{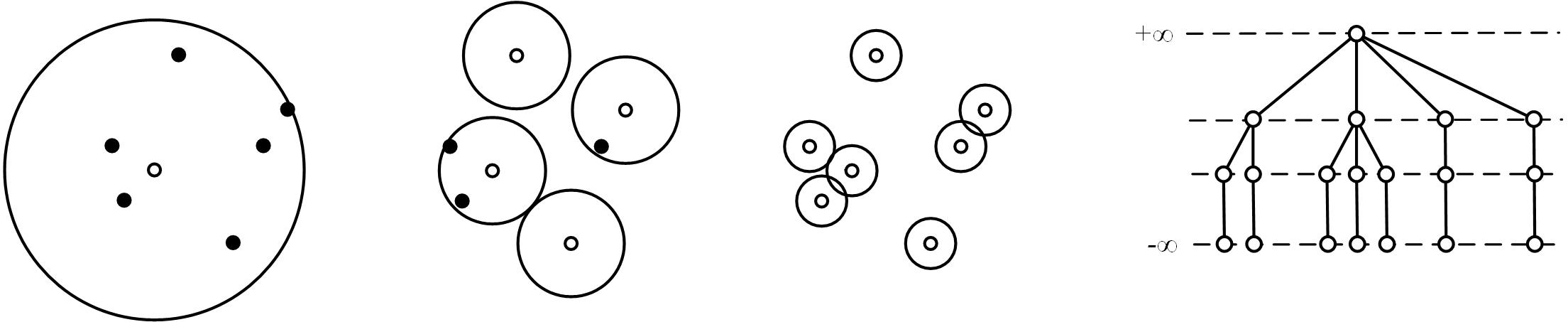}
    \caption{Nets at three different scales are shown from the left and the corresponding net-tree is illustrated on the right.
    White dots represent a net and the circles show the covering balls.}
    \label{fig:nets_and_nettree}
  \end{figure}

  A net-tree is a tree $T$ in which each level represents a metric net at some scale, see Fig.~\ref{fig:nets_and_nettree}.
  In net-trees, points are leaves in level $-\infty$ and each point can be associated with many internal nodes.
  Each node is uniquely identified by its associated point and an integer called its \emph{level}.
  The node in level $\ell$ associated with a point $p$ is denoted $p^\ell$.
  We assume that the root is in level $+\infty$.
  For a node $p^\ell\in T$, we define $\parent(p^\ell)$ and $\child(p^\ell)$ to be the parent and the set of children of that node, respectively.
  Let $P_{p^\ell}$ denote leaves of the subtree rooted at $p^\ell$.
  For each node $p^\ell$ in a net-tree, the following properties hold.
  \begin{itemize}
    \item \textbf{Packing:} $\ball(p, c_p \tau^\ell)\bigcap P \subseteq P_{p^\ell}$.
    \item \textbf{Covering:} $P_{p^\ell}\subset \ball(p, c_c\tau^{\ell})$.
    \item \textbf{Nesting:} If $\ell>-\infty$, then $p^\ell$ has a child with the same associated point $p$.
  \end{itemize}

  The constant $\tau>1$, called the \emph{scale factor}, determines the change in scale between levels.
  We call $c_p$ and $c_c$ the \emph{packing constant} and the \emph{covering constant}, respectively, and $c_c\ge c_p>0$.
  We represent all net-trees with the same scale factor, packing constant, and covering constant with $\nt(\tau,c_p,c_c)$.
  From the above definition, Har-Peled \& Mendel showed that each level of a net-tree is a metric net~\cite{har-peled06fast}.

  There are two different representations for net-trees.
  In the \emph{uncompressed} representation, every root to leaf path has a node in every level down to the scale of the smallest pairwise distance.
  The size complexity of this representation is $O(n\log\Delta)$, because there are $O(\log\Delta)$ explicit levels between $-\infty$ and $+\infty$.
  The $compressed$ representation is obtained from the uncompressed one by removing the nodes that are the only child of their parents and they have only one child and merging the two adjacent edges as a long edge, see Fig.~\ref{fig:SCNT}.
  We call such long edges \emph{jumps}.
  It is not hard to see that this representation has size of $O(n)$.
  Note that compressed net-trees are similar to compressed quadtrees.

  A net-tree can be augmented to maintain a list of nearby nodes called relatives.
  We define relatives of a node $p^\ell\in T$ to be
  \begin{align*}
    \rel(p^\ell)\coloneqq\{x^f\in T \text{ with } y^g\coloneqq\parent(x^f) \mid f \le \ell < g, \text{ and } \dist(p,x)\le c_r\tau^\ell\},
  \end{align*}
  see Fig.~\ref{fig:SCNT}.
  We call $c_r$ the \emph{relative constant}, and it is a function of the other parameters of a net-tree.
  In this paper, we assume that net-trees are always equipped with relatives.

  Har-Peled \& Mendel defined compressed net-trees in the class of $\nt(\tau=11,\frac{\tau-5}{2(\tau-1)},\frac{2\tau}{\tau-1})$ with $c_r=13$.
  The following easy to prove lemma uses the Packing Lemma and the definition of net-trees. It implies that a compressed net-tree on a doubling metric has $\rho^{O(1)}n$ size.

  \begin{lemma}
  \label{lem:ntsize}
    For each node $p^\ell$ in $T\in\nt(\tau,c_p,c_c)$, we have $|\child(p^\ell)|\le\rho^{\lfloor\lg (c_c\tau/c_p)\rfloor+1}$ and $|\rel(p^\ell)|\le\rho^{\lfloor\lg (c_r/c_p)\rfloor+1}$.
  \end{lemma}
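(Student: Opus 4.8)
The plan is to bound the two quantities separately, each as a direct application of the Packing Lemma (Lemma~\ref{lem:packing}) together with the defining properties of net-trees.

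First I would bound $|\child(p^\ell)|$. Let $x^f$ be any child of $p^\ell$. By the Covering property applied to $x^f$, every leaf in $P_{x^f}$ lies in $\ball(x, c_c\tau^f)$; in particular $x$ itself (as a leaf of its own subtree) is at distance at most $c_c\tau^f \le c_c\tau^{\ell-1}$ from each such leaf, but more usefully I want to place $x$ relative to $p$. Since the children of $p^\ell$ are at level at most $\ell-1$ and the leaves they cover form subsets of $P_{p^\ell}$, the Covering property of $p^\ell$ gives $x \in P_{p^\ell} \subseteq \ball(p, c_c\tau^\ell)$, so all the associated points of children of $p^\ell$ lie in $\ball(p, c_c\tau^\ell)$. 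For the separation, I would take two distinct children $x^f$ and $x'^{f'}$ of $p^\ell$. The associated points $x$ and $x'$ are distinct (nodes are identified by point and level, and the nesting property forces a child with point $p$, but two children cannot share a point at comparable levels without violating packing). Applying the Packing property of, say, $x^f$: since $x' \notin P_{x^f}$ (the subtrees of distinct children are disjoint in leaves), we get $x' \notin \ball(x, c_p\tau^f)$, i.e.\ $\dist(x,x') > c_p\tau^f$. The subtlety is that $f$ can be as small as we like in a compressed tree; the right bound to use is that the highest child level is at least $\ell-1$ is \emph{not} guaranteed, so instead I would argue the separation using the largest-level child or observe that for the packing bound I only need \emph{some} lower bound on pairwise distances of the points $\{x : x^f \in \child(p^\ell)\}$. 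Here is the clean route: each child subtree $P_{x^f}$ contains $x$, and by the Packing property of the \emph{parent's} perspective one shows $\dist(x,x') > c_p\tau^{\ell-1}$ — actually the standard argument is that each child, pushed up to level $\ell-1$, would have its packing ball $\ball(x, c_p\tau^{\ell-1})$ containing only its own leaves, hence the points are $c_p\tau^{\ell-1}$-separated. With radius $r = c_c\tau^\ell$ and separation $r' = c_p\tau^{\ell-1}$, the ratio is $r/r' = c_c\tau/c_p$, and Lemma~\ref{lem:packing} yields $|\child(p^\ell)| \le \rho^{\lfloor \lg(c_c\tau/c_p)\rfloor + 1}$.

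Next I would bound $|\rel(p^\ell)|$. By definition, every $x^f \in \rel(p^\ell)$ satisfies $\dist(p,x) \le c_r\tau^\ell$, so all associated points $x$ lie in $\ball(p, c_r\tau^\ell)$. For separation, take two distinct relatives $x^f$ and $x'^{f'}$ with parents at levels $g, g' > \ell$. Each relative spans level $\ell$ in the sense $f \le \ell < g$. I would apply the Packing property to whichever of $x^f, x'^{f'}$ has the larger foot-level, say $f \ge f'$, but cleanly: since $x'$ does not lie in the subtree of $x^f$ (if it did, then $x'^{f'}$ would be a proper descendant and could not straddle level $\ell$ as a separate relative), the Packing property of $x^f$ gives $\dist(x,x') > c_p\tau^f$; and because the parent of $x^f$ is above level $\ell$, the nesting and packing properties let us push this to $\dist(x,x') > c_p\tau^\ell$ — essentially, the node $x^f$ "exists" across level $\ell$, so its level-$\ell$ packing ball is valid. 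Thus the points $\{x\}$ are $c_p\tau^\ell$-separated inside $\ball(p, c_r\tau^\ell)$, and Lemma~\ref{lem:packing} with $r/r' = c_r/c_p$ gives $|\rel(p^\ell)| \le \rho^{\lfloor \lg(c_r/c_p)\rfloor + 1}$.

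The main obstacle is purely bookkeeping: justifying that the relevant points are separated at scale proportional to $\tau^\ell$ rather than at the possibly-much-smaller scale $\tau^f$ of a compressed child or relative. This is handled by the observation that a node $x^f$ with parent above level $\ell$ behaves, for packing purposes at level $\ell$, as though it had an explicit copy at level $\ell$ — its packing ball $\ball(x, c_p\tau^{\ell})$ still contains only leaves of $P_{x^f}$, because enlarging $f$ up to $\ell$ only enlarges the packing ball while the parent constraint still forbids it from swallowing points outside the subtree. Once that point is made precise, both bounds are immediate from Lemma~\ref{lem:packing}. The covering bound $\rho^{O(1)}n$ on the whole tree then follows by summing $|\child(p^\ell)| + |\rel(p^\ell)| = \rho^{O(1)}$ over the $O(n)$ nodes of the compressed tree.
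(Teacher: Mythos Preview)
Your proposal is correct and follows exactly the approach the paper indicates: bound each set by applying the Packing Lemma with the covering radius ($c_c\tau^\ell$ for children, $c_r\tau^\ell$ for relatives) and the packing separation inherited at level $\ell-1$ (resp.\ $\ell$) from the net-tree's packing property. The paper does not spell out a proof for this lemma beyond stating that it ``uses the Packing Lemma and the definition of net-trees,'' so your detailed justification---in particular the observation that a node $x^f$ whose parent sits above level $\ell$ inherits the level-$\ell$ packing ball because the compressed tree is derived from an uncompressed one with an explicit $x^\ell$ having the same leaf set---is precisely the bookkeeping the paper leaves implicit.
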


  We defined $\child(\cdot)$, $\parent(\cdot)$, and $\rel(\cdot)$ for a node of a tree; however, we abuse notation slightly and apply them to set of nodes.
  In such cases, the result will be the union of output for each node.
  Furthermore, the distance between nodes of a net-tree is the distance between their corresponding points.

  \section{Net-Tree Variants} 
\label{sec:variations_of_nettrees}

  In this section, we introduce two natural modifications to net-trees that simplify both construction and analysis.
  In the first variant, we replace the global packing and covering conditions of a net-tree with local ones that are easier to check, and we show that these local conditions imply the global conditions.
  In the second variant, we show how a less aggressive compression criterion still results in a linear-size data structure while guaranteeing that relatives are on the same level in the tree, are symmetric, and are consistent up the tree (i.e.\ parents of relatives are relatives).
  This makes it much simpler to reason about local neighborhoods by local search among relatives.

  \subsection{Local Net-Trees} 
\label{sec:local_nettrees}

  Here, we define a local version of net-trees and we show that for some appropriate parameters, a local net-tree is a net-tree.
  The ``nets'' in a net-tree are the subsets
  \begin{align*}
    N_\ell\coloneqq\{p\in P\mid p^m\in T \text{ for some } m\ge \ell\}.
  \end{align*}

  A \emph{local net-tree} $T\in\wnt(\tau,c_p,c_c)$ satisfies the nesting property and the following invariants.
  \begin{itemize}
    \item \textbf{Local Packing:} For distinct $p,q\in N_\ell$, $\dist(p,q)>c_p\tau^\ell$.
    \item \textbf{Local Covering:} If $p^\ell=\parent(q^m)$, then $\dist(p,q)\le c_c\tau^{m+1}$.
    \item \textbf{Local Parent:} If $p^\ell=\parent(q^m)$, then $\dist(p,q)=\dist(p,N_{m+1})$.
  \end{itemize}
  The difference between the local net-tree invariants and the net-tree invariants given previously, is that there is no requirement that the packing or covering respect the tree structure.
  It is easy to see that the local packing and local covering properties can be obtained from the stronger ones.
  We are interested in local packing and covering properties because they are much easier to maintain as invariants after each update operation on a tree and also to verify in the analysis.

  The switch to local net-trees comes at the cost of having slightly different constants.
  Theorem~\ref{thm:local_to_strong_nettree} gives the precise relationship.

  \begin{lemma}
  \label{lem:covering}
    For all $x\in P_{p^\ell}$ in $T\in\wnt(\tau,c_p,c_c)$, $\dist(p,x)<\frac{c_c\tau}{\tau-1}\tau^\ell$.
  \end{lemma}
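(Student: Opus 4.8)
The plan is to bound $\dist(p,x)$ by walking down the tree from $p^\ell$ to $x$ and summing the edge lengths along the way, each of which is controlled by the Local Covering invariant. Let $p^\ell=q_0^{m_0},q_1^{m_1},q_2^{m_2},\dots$ be the nodes on the path in $T$ from $p^\ell$ to the leaf associated with $x$, so that $q_{i}^{m_i}=\parent(q_{i+1}^{m_{i+1}})$. Since a child always sits strictly below its parent and levels are integers, we get $m_i\le \ell-i$ for every $i$. Before summing I would pin down that the ``interesting'' part of this path is finite: by Local Covering the parent of the leaf $x^{-\infty}$ is forced to be associated with $x$ (otherwise a nonzero distance would be bounded by $c_c\tau^{-\infty+1}=0$), and by nesting the same holds up the path until the first node whose associated point differs from $x$; hence only finitely many edges $q_{i-1}^{m_{i-1}}\to q_i^{m_i}$ change the associated point, and every other edge contributes distance $0$.

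Next I would apply the triangle inequality along the path together with Local Covering on each edge, giving $\dist(p,x)\le\sum_{i\ge 1}\dist(q_{i-1},q_i)\le\sum_{i\ge 1}c_c\tau^{m_i+1}$, a sum with only finitely many nonzero terms. Substituting $m_i\le \ell-i$ yields $\dist(p,x)\le\sum_{i\ge 1}c_c\tau^{\ell-i+1}$, which is a strict partial sum of the convergent geometric series $\sum_{i\ge 1}c_c\tau^{\ell-i+1}=c_c\tau^\ell\cdot\frac{1}{1-1/\tau}=\frac{c_c\tau}{\tau-1}\tau^\ell$. Because every term of the series is positive and the omitted tail is nonempty whenever $p\ne x$, the inequality is strict; and when $p=x$ the bound holds trivially since $0<\frac{c_c\tau}{\tau-1}\tau^\ell$. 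This gives the claim.

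The step I expect to be the main obstacle is not the geometric-series estimate itself but the careful bookkeeping around the level-$-\infty$ leaves and, in the compressed representation, the jumps: one must argue that the point-changing portion of the downward path is genuinely finite so that both the level bound $m_i\le\ell-i$ and the strictness of the comparison with the geometric series are legitimate. Jumps only help here (they make child levels smaller, hence the terms $c_c\tau^{m_i+1}$ smaller), but it is worth stating this explicitly. Everything after that reduction is routine.
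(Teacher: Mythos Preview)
Your proof is correct and follows essentially the same route as the paper: apply the triangle inequality along the root-to-leaf path, bound each edge by Local Covering, use $m_i\le\ell-i$, and sum the geometric series. Your additional care about finiteness of the path and the strictness of the inequality is welcome (the paper's one-line proof actually writes $\le$ twice and leaves strictness implicit); note, though, that the finiteness really comes from the tree being a finite data structure rather than from nesting, so your appeal to nesting there is unnecessary.
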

  \begin{proof}
  	By the local covering property and the triangle inequality, $\dist(p,x)\le \sum_{i=0}^{+\infty}c_c\tau^{\ell-i}\le \frac{c_c\tau}{\tau-1}\tau^\ell$.
  \end{proof}

  \begin{theorem}
  \label{thm:local_to_strong_nettree}
    For $\tau>\frac{2c_c}{c_p}+1$ and $0<c_p\le c_c<\frac{c_p(\tau-1)}{2}$, if $T\in\wnt(\tau,c_p,c_c)$, then $T\in\nt(\tau, \frac{c_p(\tau-1)-2c_c}{2(\tau-1)}, \frac{c_c\tau}{\tau-1})$.
  \end{theorem}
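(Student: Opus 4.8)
Two of the three conclusions are free: Nesting for $\nt(\tau,c_p',c_c')$ (with $c_p'=\frac{c_p(\tau-1)-2c_c}{2(\tau-1)}=\frac{c_p}{2}-\frac{c_c}{\tau-1}$ and $c_c'=\frac{c_c\tau}{\tau-1}$) is inherited verbatim from the local net-tree, and Covering, $P_{p^\ell}\subseteq\ball(p,c_c'\tau^\ell)$, is exactly Lemma~\ref{lem:covering}. So the whole plan concerns Packing, $\ball(p,c_p'\tau^\ell)\cap P\subseteq P_{p^\ell}$. I would first note $c_p'>0$ precisely because $\tau>\frac{2c_c}{c_p}+1$, and $c_p'<\frac{c_p}{2}<c_p$. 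Then I fix $p^\ell\in T$ and $q\in P$ with $\dist(p,q)\le c_p'\tau^\ell$, suppose toward a contradiction that $q\notin P_{p^\ell}$, and exploit the local invariants together with the shape of the tree. Two easy reductions come first: $p\in P_{p^\ell}$ by Nesting, so $q\ne p$; and if $q\in N_\ell$ then $p,q$ are distinct points of $N_\ell$, so Local Packing gives $\dist(p,q)>c_p\tau^\ell>c_p'\tau^\ell$, a contradiction. Hence $q\notin N_\ell$, i.e.\ the highest node of $q$ is at a level $m<\ell$.

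The core of the argument is to climb the root-to-leaf path of $q$. Set $\pi_0=q$ and inductively let $\hat\pi_i$ be the highest node on this path with associated point $\pi_i$, let $\mu_i=\level(\hat\pi_i)$, and let $\pi_{i+1}$ be the point of $\parent(\hat\pi_i)$. Since $\mu_0=m<\ell$, the $\mu_i$ strictly increase, and the root sits at level $+\infty$, there is a least $t\ge1$ with $\mu_t\ge\ell$; then $\mu_{t-1}\le\ell-1$, and, writing $g$ for the level of $\parent(\hat\pi_{t-1})$ (a node whose point is $\pi_t$), $\mu_{t-1}<g\le\mu_t$. Since $\hat\pi_{t-1}$ is an ancestor of $q$, Lemma~\ref{lem:covering} gives $\dist(\pi_{t-1},q)<\frac{c_c\tau}{\tau-1}\tau^{\mu_{t-1}}\le\frac{c_c}{\tau-1}\tau^{\ell}$ (trivial if $\pi_{t-1}=q$).

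If $\pi_t\ne p$, then $\pi_t$ (via $\hat\pi_t$, whose level is $\ge\ell$) and $p$ are distinct points of $N_\ell$, so Local Packing gives $\dist(p,\pi_t)>c_p\tau^\ell$; but $p\in N_{\mu_{t-1}+1}$ since $\mu_{t-1}+1\le\ell$, so the Local Parent property applied to $\parent(\hat\pi_{t-1})$ gives $\dist(\pi_{t-1},\pi_t)=\dist(\pi_{t-1},N_{\mu_{t-1}+1})\le\dist(\pi_{t-1},p)$, whence
\[
\dist(p,\pi_t)\le 2\dist(\pi_{t-1},p)\le 2\bigl(\dist(\pi_{t-1},q)+\dist(q,p)\bigr)<2\Bigl(\tfrac{c_c}{\tau-1}+c_p'\Bigr)\tau^\ell=c_p\tau^\ell,
\]
a contradiction. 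If instead $\pi_t=p$, then $p^g=\parent(\hat\pi_{t-1})$ is an ancestor of $q$, so $q\in P_{p^g}$; when $g\le\ell$, the nodes with point $p$ form a descending path (Nesting) so $p^g$ is a descendant of (or equal to) $p^\ell$ and $q\in P_{p^g}\subseteq P_{p^\ell}$, contradicting $q\notin P_{p^\ell}$; when $g>\ell$, the node $p^g$ has at least two children ($\hat\pi_{t-1}$ and, by Nesting, a child with point $p$, distinct as $\pi_{t-1}\ne p$), and in a net-tree the children of a node with $\ge2$ children share a common level — namely the level of the $p$-child of $p^g$, which, being the highest $p$-node strictly below $p^g$ and having $p^\ell\in T$ with $\ell<g$, is $\ge\ell$ — so $\mu_{t-1}\ge\ell$, contradicting $\mu_{t-1}<\ell$. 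Every case yields a contradiction, so $q\in P_{p^\ell}$.

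I expect the main difficulty to be squeezing the constants so that the full stated range $c_c<\frac{c_p(\tau-1)}{2}$ is reached: if one bounds $\dist(\pi_{t-1},\pi_t)$ by Local Covering, $\dist(\pi_{t-1},\pi_t)\le c_c\tau^{\mu_{t-1}+1}$, instead of by Local Parent, the same computation only yields the theorem for $c_c\le c_p/2$, so using Local Parent at exactly that step is essential. The other delicate point is the subcase $\pi_t=p$, $g>\ell$, where $q$ would dangle off a $p$-node above level $\ell$ while $p^\ell\in T$ nonetheless exists; eliminating it is the one place the argument steps outside the purely metric/local world and invokes the structure of the (compressed) tree, via the fact that a node with two or more children does not skip the level immediately below itself.
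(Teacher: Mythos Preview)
Your proof is correct and follows essentially the same route as the paper's: reduce to Packing, climb from the point $q$ (the paper's $x$) to the first ancestor at level $\ge\ell$ (the paper's $z^f$, your $\pi_t$), use Local Parent to get $\dist(\pi_{t-1},\pi_t)\le\dist(\pi_{t-1},p)$, Local Packing to get $\dist(p,\pi_t)>c_p\tau^\ell$, and combine with Lemma~\ref{lem:covering}. Your contrapositive computation $2\bigl(\tfrac{c_c}{\tau-1}+c_p'\bigr)=c_p$ is exactly the paper's bound rearranged.

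The only difference is in bookkeeping: the paper disposes of the degenerate case $z=p$ in a single parenthetical remark (``if $f>\ell$, then the edge \ldots\ is a jump and as a result [$z$] and $y$ are the same points''), whereas you split $\pi_t=p$ into the two subcases $g\le\ell$ and $g>\ell$ and argue each separately. Both arguments ultimately rest on the same structural fact about compressed trees---that a parent--child edge spanning more than one level has the same associated point at both ends, or equivalently that a node with two or more children has them all at the level immediately below---which neither the paper nor you derives from the bare local-net-tree axioms. Your version is more explicit about this dependence; the paper simply assumes the reader knows what a jump is.
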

  \begin{proof}
    The covering property can be proved using Lemma~\ref{lem:covering}.
    To prove the packing property, let $p^\ell$ be a node of the local net-tree and $x\notin P_{p^\ell}$.
    Also, let $z^f$ be the lowest ancestor of $x$ with $f\ge\ell$, and $y^g$ be the highest ancestor of $x$ with $g<\ell$.
    It is clear that $y^g\in\child(z^f)$.
    Note that if $f>\ell$, then the edge between $x^f$ and $y^g$ is a jump and as a result $x$ and $y$ are the same points.
    From the local parent property, $\dist(y,z)<\dist(y,p)$.
    By the triangle inequality, $\dist(p,z)\le\dist(p,y)+\dist(y,z)<2\dist(p,y)$.
    Also, by the local packing property, $\dist(p,z)>c_p\tau^\ell$.
    The last two inequalities imply $\dist(p,y)>c_p\tau^\ell/2$.
    Since $g\le\ell-1$, by Lemma~\ref{lem:covering}, $\dist(x,y)\le c_c\tau^\ell/(\tau-1)$.
    By the triangle inequality,
    \begin{align*}
      \dist(p,x)\ge\dist(p,y)-\dist(y,x)>\frac{1}{2}c_p\tau^\ell-\frac{c_c}{\tau-1}\tau^\ell>\frac{c_p(\tau-1)-2c_c}{2(\tau-1)}\tau^\ell.
    \end{align*}
    Therefore, $P_{p^\ell}\subset\ball(p,\frac{c_p(\tau-1)-2c_c}{2(\tau-1)}\tau^\ell)$.
  \end{proof}

  If $c_c=c_p=1$, then a local net-tree with $\tau>3$ belongs to $\nt(\tau,\frac{\tau-3}{2(\tau-1)}, \frac{\tau}{\tau-1})$, which results in a definition of net-trees similar to Har-Peled \& Mendel's~\cite{har-peled06fast}.


  \subsection{Semi-Compressed Net-Trees} 
\label{sec:semi_compressed}

  In this section, we define \emph{semi-compressed} net-trees
  and show that this intermediate structure between uncompressed and compressed net-trees has linear size.
  The resulting graph of relatives is easier to work with because edges are undirected and stay on the same level of the tree.

  Recall that for compressed net-trees, we remove a node (by compressing edges) if it is the only child of its parent and has only one child.
  In semi-compressed net-trees, we do not remove a node if it has any relatives other than itself.
  Fig.~\ref{fig:SCNT} illustrates different representations of a net-tree on a set of points on a line.
  In the following theorem, we show that the semi-compressed representation has linear size.

  \begin{figure}[!tbh]
    \centering
    \includegraphics[width=0.31\columnwidth]{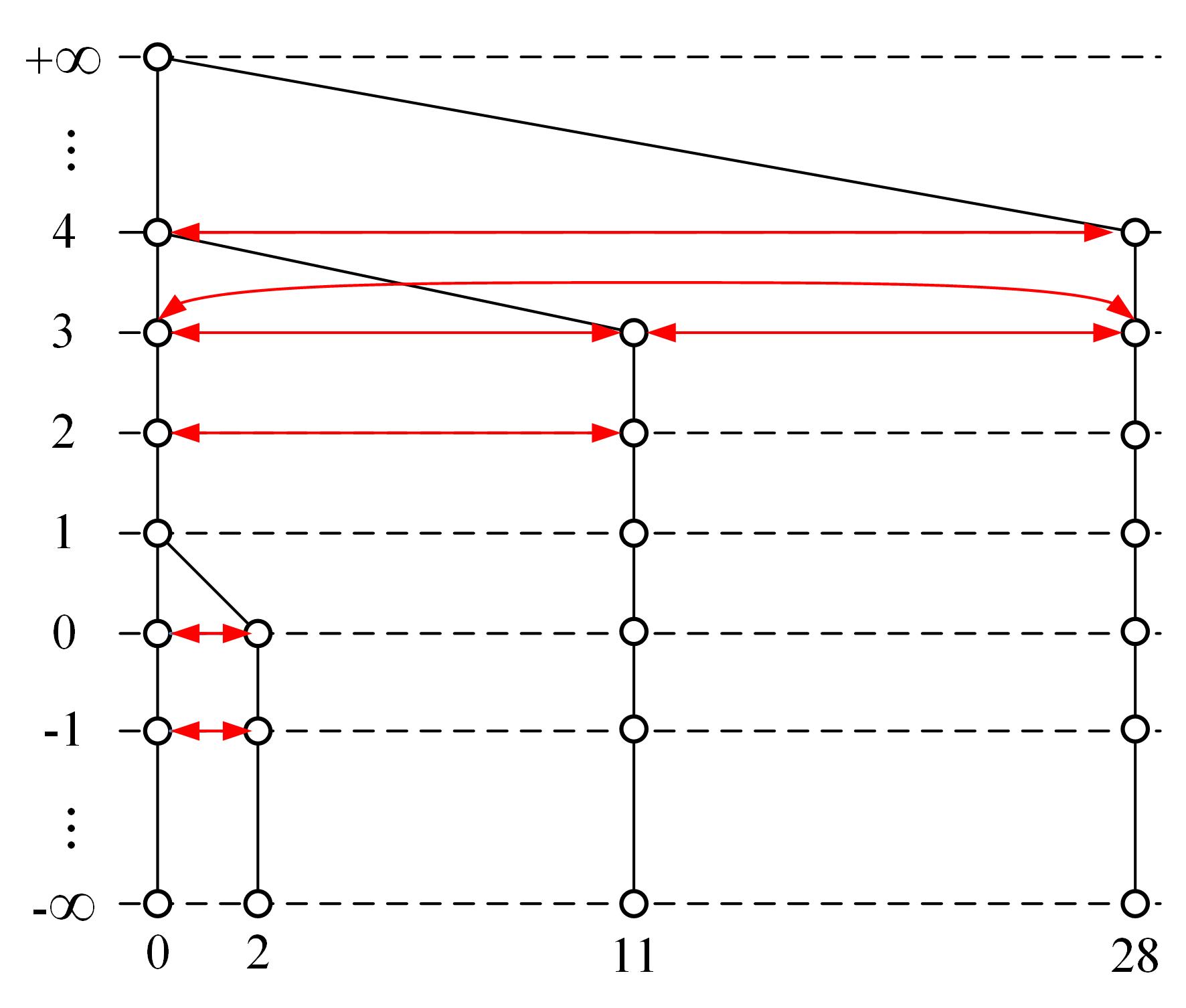}
    \label{fig:uncompressedNT}
    ~
    \includegraphics[width=0.31\columnwidth]{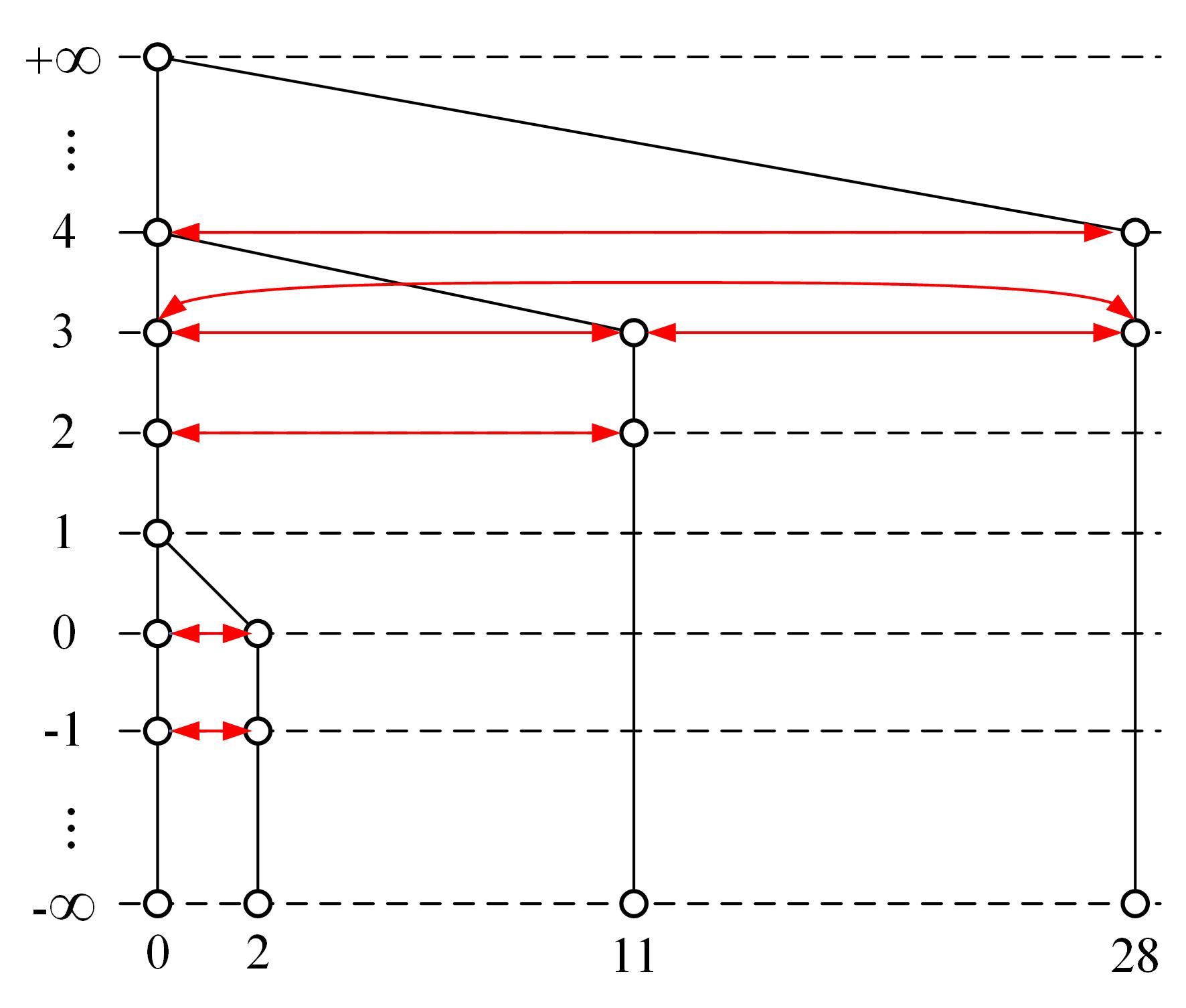}
    \label{fig:semicompressedNT}
    ~
    \includegraphics[width=0.31\columnwidth]{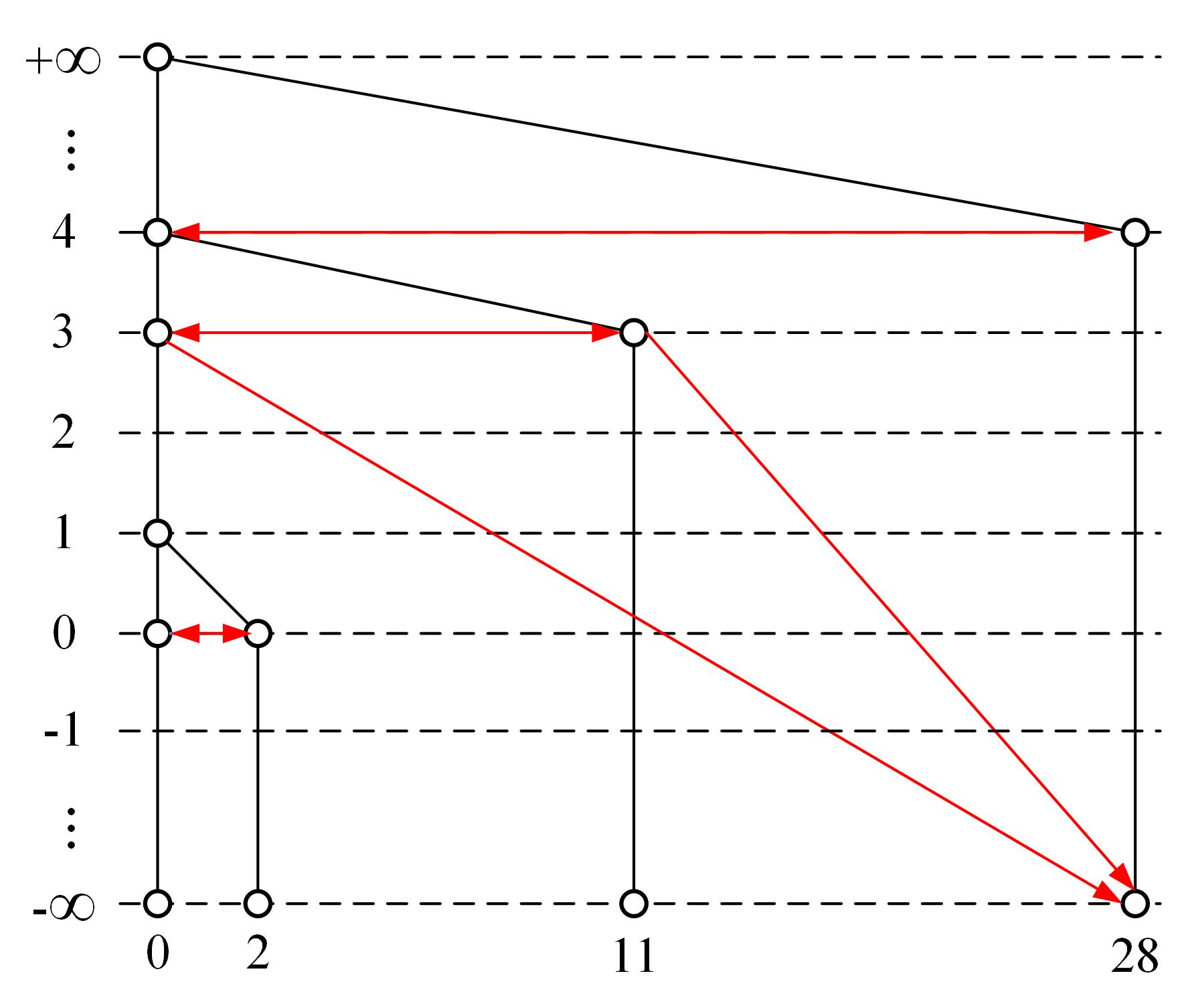}
    \label{fig:compressedNT}
    \caption{Different representations of a net-tree $T\in\nt(2,1,1)$ with $c_r=4$ on a set of points $\{0,2,11,28\}$ on a line.
    An arrow from node $a$ to node $b$ indicates $b$ is a relative of $a$.
    Left: uncompressed.
    Center: semi-compressed.
    Right: compressed.
    }
    \label{fig:SCNT}
  \end{figure}

  \begin{theorem}
    \label{thm:semi-compressed_linear_size}
    Given $n$ points $P$ in a doubling metric with doubling constant $\rho$. The size of a semi-compressed net-tree on $P$ is $O(\rho^{O(1)}n)$.
  \end{theorem}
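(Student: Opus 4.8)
We need to show that a semi-compressed net-tree has $O(\rho^{O(1)} n)$ nodes. The plan is to partition the nodes into three groups and bound each separately. Call a node $p^\ell$ \emph{branching} if it has at least two children, and call it \emph{non-branching} otherwise. Every leaf contributes $1$ node (there are $n$ leaves), and in any tree the number of branching nodes is at most the number of leaves, hence at most $n$. So it remains to bound the number of non-branching internal nodes. By the nesting property, a non-branching internal node $p^\ell$ has a unique child, and that child has the same associated point $p$ (if the unique child had a different point, there would be a point in $P_{p^\ell}$ not covered by any child — but nesting forces a same-point child, so the unique child is $p^{\ell'}$ for some $\ell' < \ell$).

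The key observation is the \emph{reason} such a node survives compression: in the fully compressed tree it would be removed, so in a semi-compressed tree it is retained precisely because it (or its parent, depending on the exact compression rule) has a relative other than itself. I would make this precise as follows: for each retained non-branching internal node $p^\ell$, charge it to a witnessing relative $x^f \in \rel(p^\ell)$ with $x \neq p$. I then need to argue that each node $x^f$ is charged only $\rho^{O(1)}$ times. For this, fix $x^f$ and consider all non-branching nodes $p^\ell$ that charge to it. Since $x^f \in \rel(p^\ell)$, the defining conditions give $\dist(p,x) \le c_r \tau^\ell$ and the level $\ell$ lies in the half-open interval $[f, g)$ where $g$ is the level of $\parent(x^f)$; by the local covering / covering property this interval has length $O(\log_\tau(c_c/c_p))= O(1)$, so there are only $O(1)$ possible levels $\ell$. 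For each such level, the points $p$ with $p^\ell$ charging to $x^f$ all lie in $\ball(x, c_r\tau^\ell)$ and are pairwise at distance $> c_p \tau^\ell$ (they all belong to $N_\ell$, which is a net by the packing property), so by the Packing Lemma there are at most $\rho^{\lfloor \lg(c_r/c_p)\rfloor + 1} = \rho^{O(1)}$ of them. Multiplying by the $O(1)$ levels, $x^f$ is charged $\rho^{O(1)}$ times. Since there are at most (number of all nodes)'s worth of potential witnesses — wait, that would be circular, so instead I charge to the \emph{leaf} reachable from $x^f$ or better, observe the total number of (node, relative) incidences is $\rho^{O(1)}$ times the number of nodes, which is again circular.

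To break the circularity, the cleaner route is: charge each retained non-branching node $p^\ell$ not to an arbitrary relative but to a \emph{branching or leaf} node in a controlled way, or alternatively bound things level-by-level via a global counting argument. The approach I would actually carry out: let $B$ be the set of branching nodes plus leaves, $|B| \le 2n$. For a retained non-branching node $p^\ell$, walk down from $p^\ell$ through its chain of unique (same-point) descendants until reaching a node that is branching or a leaf; call this the \emph{anchor} $a(p^\ell) \in B$. All non-branching nodes with the same anchor $b \in B$ form a vertical chain $p^{\ell_1}, p^{\ell_2}, \dots$ above $b$ with strictly increasing levels and all associated to the same point $p$ (the point of $b$). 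Each such $p^{\ell_i}$ is retained because $\rel(p^{\ell_i})$ contains some $x_i \neq p$. Now the pairs $(p^{\ell_i}, x_i^{f_i})$: by the relative definition $\dist(p, x_i) \le c_r \tau^{\ell_i}$, and consecutive $\ell_i$ can be far apart, so I instead bound, for each retained $p^{\ell_i}$, the cost by noting its witness $x_i^{f_i}$ together with $\parent(x_i^{f_i})$ straddles level $\ell_i$; the number of distinct nodes $x_i^{f_i}$ that can straddle levels in the chain is controlled because each such $x_i$ lies within $\ball(p, c_r \tau^{\ell_i})$, and crucially — this is the main obstacle — one must rule out the chain being long by observing that a long same-point chain with many distinct straddling relatives would, after accounting for the geometric growth of scales, force a packing violation or force those relatives themselves to terminate quickly. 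I expect the cleanest finish is a global sum: $\sum_{\text{retained } p^\ell} 1 \le \sum_{x^f \in T} |\{p^\ell : x^f \text{ witnesses } p^\ell\}| \le \rho^{O(1)} |T'|$ where $T'$ counts only nodes that are \emph{endpoints of relative links at their own level}, and these in turn inject into branching nodes and leaves because a maximal same-point chain carries the same relative-link structure at its bottom, whence $|T| \le \rho^{O(1)} n$. The main obstacle is exactly this last step: carefully setting up the charging so that it is not circular, i.e.\ ensuring every retained node is ultimately paid for by a distinct leaf or branching node up to a $\rho^{O(1)}$ factor, which requires exploiting that relatives are consistent up the tree (parents of relatives are relatives) so that a vertical chain of retained nodes cannot accumulate unboundedly many \emph{new} relatives.
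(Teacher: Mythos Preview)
Your proposal does not complete the proof: you correctly identify the circularity in charging retained non-branching nodes to their relatives, but your attempts to break it are sketches, not arguments. In particular, the claim that the interval $[f,g)$ (with $g$ the level of $\parent(x^f)$) has length $O(1)$ is false in a semi-compressed tree, where jumps make $g-f$ unbounded; and the final paragraph (``a long same-point chain \dots would force a packing violation or force those relatives themselves to terminate quickly'') is an aspiration, not a proof. You never actually bound the length of a same-point chain of retained nodes anchored at a given leaf or branching node.

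The paper avoids the circularity by changing what is being counted and what is being charged. Rather than counting retained nodes directly, it observes that the size of the semi-compressed tree is $O(n+m)$ where $m$ is the total number of relative edges (every extra retained node exists because it carries at least one relative edge). It then bounds $m$ by charging each relative \emph{pair} of points $(p,q)$ to the \emph{point} (not node) with the smaller top level $\h(\cdot)\coloneqq\max\{\ell:(\cdot)^\ell\in T\}$. Two ingredients finish it: (i) if $p$ and $q$ are relatives at any level, packing forces them to be relatives in at most $\lceil\log_\tau(c_r/c_p)\rceil=O(1)$ levels; (ii) at its top level, a point has at most $\rho^{O(1)}$ relatives (Lemma~\ref{lem:ntsize}). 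Hence each of the $n$ points is charged $\rho^{O(1)}$ times, giving $m=O(\rho^{O(1)}n)$. The key idea you are missing is to charge to the finite set of $n$ points rather than to nodes of the tree whose number you are trying to bound.
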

  \begin{proof}
    Let $T$ be an uncompressed net-tree on $P$.
    Let $S$ be the semi-compressed tree formed from $T$.
    That is, $S$ contains the nodes of compressed net-tree and every $p^\ell\in T$ such that $|\rel(p^\ell)|>1$.
    It should be clear that the size of $S$ is $O(n + m)$ where $n$ is the number of input points and $m$ is the number of relative edges in the whole tree.
    So, if we can show $m = O(n)$, we will have shown that $S$ has linear size.

    First, we show that if two points $p$ and $q$ are relatives at some level $\ell$ in $T$, then they can be relatives in at most $\lceil\log_\tau (c_r/c_p)\rceil$ levels.
    Without loss of generality, let $\ell-i$ be the lowest level in $T$ such that $p$ and $q$ are relatives, where $i\ge0$.
    Then, $\dist(p,q)\le c_r\tau^{\ell-i}$.
    By the packing property at level $\ell$, $\dist(p,q)>c_p\tau^\ell$.
    Combining the last two inequalities results $c_p\tau^\ell<c_r\tau^{\ell-i}$.
    Therefore, $p$ and $q$ are relatives in $i\le\lceil\log_\tau (c_r/c_p)\rceil$ levels.

    Now, we find the number of relative edges.
    For a point $p$, we define $\h(p)\coloneqq \max\{\ell \mid p^\ell\in T\}$.
    If $p$ and $q$ are relatives in $T$, then we charge the point having $\min\{\h(p),\h(q)\}$ with $O(\log_\tau (c_r/c_p))$ to pay for the total number of relative edges between $p$ and $q$.
    Therefore, using Lemma~\ref{lem:ntsize}, $m = O(\sum_{p\in P} O(\log_\tau (c_r/c_p)) |\rel(p^{\h(p)})|)=O(\rho^{O(1)}n)$.
  \end{proof}

  In semi-compressed net-trees, the relative relation is symmetric, i.e. if $p^\ell\in\rel(q^\ell)$ then $q^\ell\in\rel(p^\ell)$, and we use $\sim$ to denote this relation.
  In the following lemma, we prove that if two nodes are relatives, then their parents are also relatives.
  \begin{lemma}
    \label{lem:functorial}
      In a semi-compressed net-tree $T\in\wnt(\tau,c_p,c_c)$ with $c_r\ge \frac{2c_c\tau}{\tau-1}$, if $p^\ell\sim q^\ell$, then $\parent(p^\ell)\sim \parent(q^\ell)$.
  \end{lemma}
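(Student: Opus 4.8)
The plan is to reduce the claim to one application of the triangle inequality, once we know that $\parent(p^\ell)$ and $\parent(q^\ell)$ both sit exactly one level above $p^\ell$ and $q^\ell$. The case $p=q$ is immediate, so assume $p\neq q$; then $p^\ell$, and likewise $q^\ell$, has a relative distinct from itself.

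The first and most delicate step is to show that the edge from $p^\ell$ up to its parent in $T$ is not a jump, i.e.\ that $\parent(p^\ell)$ lies at level $\ell+1$. Viewing $T$ as a semi-compression of an uncompressed net-tree, the immediate parent of $p^\ell$ there is some node $w^{\ell+1}$. If $w\neq p$, then by nesting $w^{\ell+1}$ already has the two distinct children $p^\ell$ and $w^\ell$, so it is never compressed. If $w=p$, I would instead exhibit a relative of $p^{\ell+1}$ other than itself: writing $s^{\ell+1}\coloneqq\parent(q^\ell)$ in the uncompressed tree, local covering gives $\dist(q,s)\le c_c\tau^{\ell+1}$, so
\[
  \dist(p,s)\le\dist(p,q)+\dist(q,s)\le c_r\tau^\ell+c_c\tau^{\ell+1}\le c_r\tau^{\ell+1},
\]
where the last step uses $c_r\ge\frac{2c_c\tau}{\tau-1}\ge\frac{c_c\tau}{\tau-1}$; if $s\neq p$ this makes $s^{\ell+1}$ a relative of $p^{\ell+1}$ distinct from it, and if $s=p$ then $p^{\ell+1}=\parent(q^\ell)$ already has the two distinct children $p^\ell,q^\ell$. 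Either way $p^{\ell+1}$ is not compressed, so $\parent(p^\ell)=p^{\ell+1}$; and symmetrically $\parent(q^\ell)$ is at level $\ell+1$.

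With $a^{\ell+1}\coloneqq\parent(p^\ell)$ and $b^{\ell+1}\coloneqq\parent(q^\ell)$ both at level $\ell+1$, local covering gives $\dist(a,p)\le c_c\tau^{\ell+1}$ and $\dist(b,q)\le c_c\tau^{\ell+1}$, while $p^\ell\sim q^\ell$ gives $\dist(p,q)\le c_r\tau^\ell$; hence
\[
  \dist(a,b)\le\dist(a,p)+\dist(p,q)+\dist(q,b)\le 2c_c\tau^{\ell+1}+c_r\tau^\ell.
\]
The hypothesis $c_r\ge\frac{2c_c\tau}{\tau-1}$ is exactly the inequality $2c_c\tau^{\ell+1}+c_r\tau^\ell\le c_r\tau^{\ell+1}$, so $\dist(a,b)\le c_r\tau^{\ell+1}$. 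Since $a^{\ell+1}$ and $b^{\ell+1}$ have finite level, each has a parent strictly above level $\ell+1$, so this bound gives $b^{\ell+1}\in\rel(a^{\ell+1})$ and, by symmetry of the estimate, $a^{\ell+1}\in\rel(b^{\ell+1})$ --- that is, $\parent(p^\ell)\sim\parent(q^\ell)$.

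The main obstacle is the middle step: confirming that no jump can separate a node from its parent once that node carries a genuine relative, so that $\parent(p^\ell)$ and $\parent(q^\ell)$ really do live on the same level and can be compared as relatives at all. Everything after that is the displayed one-line distance estimate, and the value of $c_r$ in the hypothesis was evidently chosen to make precisely that estimate work.
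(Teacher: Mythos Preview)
Your proof is correct and follows the same route as the paper: bound $\dist(\parent(p^\ell),\parent(q^\ell))$ by the triangle inequality through $p$ and $q$, then use the hypothesis $c_r\ge\frac{2c_c\tau}{\tau-1}$ to conclude. The paper's proof is a single line that simply asserts $f=g=\ell+1$ alongside the distance estimate; you have supplied the argument for that assertion (via the semi-compression criterion and the uncompressed tree) that the paper omits, so your version is if anything more complete.
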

  \begin{proof}
      Let $x^f\coloneqq\parent(p^\ell)$ and $y^g\coloneqq\parent(q^\ell)$.
      By the triangle inequality and the local covering property, $\dist(x,y)\le\dist(x, p)+\dist(p,q)+\dist(q,y)\le c_c\tau^{\ell+1}+c_r\tau^\ell+c_c\tau^{\ell+1}\le c_r\tau^{\ell+1}$ and $g=f=\ell+1$.
  \end{proof}



  \section{Approximate Voronoi Diagrams from Net-Trees}
\label{sec:approx_vd}

  Many metric data structures naturally induce a partition of the search space.
  The use of hierarchies of partitions at different scales is a fundamental idea in the \emph{approximate near neighbor} problem (also known as \emph{point location in equal balls} (PLEB))  which is at the heart of many \emph{approximate nearest neighbor} algorithms, including high dimensional approaches using locality-sensitive hashing~\cite{indyk98approximate,har-peled01replacement,indyk04nearest,sabharwal06nearest,har-peled12approximate}.

  Given a set of points $P$ and a query $q$, the \emph{nearest neighbor} of $q$ in $P$ is the point $p\in P$ such that for all $p'\in P$, we have $\dist(q,p)\le \dist(q,p')$.
  Relaxing this notion, $p$ is a \emph{$c$-approximate nearest neighbor} (or $c$-ANN) of $q$ if for all $p'\in P$, we have $\dist(q,p)\le c\dist(q,p')$.

  The Voronoi diagram of a set of points $P$ is a decomposition of space into cells, one per point $p\in P$ containing all points for which $p$ is the nearest neighbor.
  The nearest neighbor search problem can be viewed as point location in a Voronoi diagram, though it is not necessary to represent the Voronoi diagram explicitly.

  In this section we give a particular decomposition of space, an approximate Voronoi diagram from a net-tree.
  The purpose is not to introduce a new approximate Voronoi diagram (there are several already~\cite{har-peled01replacement,sabharwal06nearest,arya02linear}), but rather to provide a clear description of the point location problem at the heart of our construction.
  Just as in Clarkson's \textbf{sb} data structure~\cite{clarkson02nearest}, we will keep track of what ``cell'' contains each uninserted point.
  However, instead of using the Voronoi cells, we will use the approximate cells described below.
  Moreover, instead of having one cell per point, we have one cell per node, thus we can simulate having a Voronoi diagram of a net at each scale.

  We want to associate points with the closest node in the tree that is close enough to be a relative.
  Ties are broken between nodes associated to the same point by always choosing the one that is lowest in the tree.
  Formally, we define the following function mapping a point of the metric space $\M$ and a node to a pair of numbers.
  \[
    f(x, p^\ell) :=
      \begin{cases}
        (\dist(x,p), \ell) & \text{if $\dist(x,p) \le c_r \tau^\ell$}\\
        (\infty, \infty) & \text{otherwise}
      \end{cases}
  \]
  The Voronoi cell of a node $p^\ell$ is then defined as
  \begin{align*}
    \vor(p^\ell) \coloneqq \{x\in \M\mid f(x,p^\ell) \le f(x, q^m) \text{ for all }  q^m\in T\},
  \end{align*}
  where ordering on pairs is lexicographical.
  For a point $q\notin P$, the \emph{center} for $q$ in $T$, denoted $\centersite(q)$, is the node $p^\ell\in T$ such that $q\in \vor(p^\ell)$.
  As we will see in Section~\ref{sec:the_bottomup_construction}, finding the center of a point is the basic point location operation required to insert it into the net-tree.
  Fig.~\ref{fig:approx_vd} illustrates the construction.
  \begin{figure}[htb]
    \centering
    \includegraphics[trim= 0 7in 0 0.5in,width=0.5\textwidth]{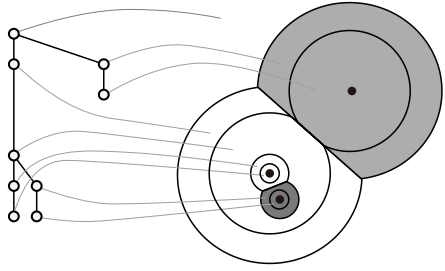}
    \caption{The net-tree on the left induces the approximate Voronoi diagram on the right.}
    \label{fig:approx_vd}
  \end{figure}

  The union of Voronoi cells $p^\ell$ for all $\ell$ gives an approximate Voronoi cell for the point $p$.
  The following lemma makes this precise.

  \begin{lemma}
  \label{lem:NN_center}
    Let $T$ be a net-tree in $\wnt(\tau,c_p,c_c)$ with $c_r>\frac{c_c\tau}{\tau-1}$ on a point set $P\subset \M$.
    For any point $q\in\M$, if $\centersite(q)=p^\ell$, then $p$ is a $(\frac{c_r\tau(\tau-1)}{c_r(\tau-1)-c_c\tau})$-ANN of $q$ in $P$.
  \end{lemma}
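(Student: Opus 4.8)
The plan is to exploit the lexicographic comparison in the definition of $\vor$: since $\centersite(q)=p^\ell$, we have $f(q,p^\ell)\le f(q,q^m)$ for \emph{every} node $q^m\in T$, so it suffices to produce a single ``valid'' node (one with finite $f$-value) whose associated point is provably close to $q$, and then $\dist(q,p)$ is bounded by the distance from $q$ to that point. First I would dispose of the degenerate case: if $\dist(q,P)=0$ there is nothing to prove, so set $\delta:=\dist(q,P)>0$ and let $p^\ast\in P$ attain it. I would also note that the root has finite $f$-value, so the minimizer $p^\ell$ genuinely satisfies $\dist(q,p)\le c_r\tau^\ell<\infty$.

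Next I would pick the target scale. Let $m$ be the least integer with $\tau^m\ge \delta/\bigl(c_r-\tfrac{c_c\tau}{\tau-1}\bigr)$; this is well-defined because the hypothesis $c_r>\tfrac{c_c\tau}{\tau-1}$ makes the denominator positive, and minimality of $m$ gives $\tau^m<\tau\delta/\bigl(c_r-\tfrac{c_c\tau}{\tau-1}\bigr)$. The point of this choice is exactly the inequality $\delta+\tfrac{c_c\tau}{\tau-1}\tau^m\le c_r\tau^m$.

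For the witness node I would walk up from the leaf $(p^\ast)^{-\infty}$ along parent pointers and let $r^a$ be the first node reached with $a\ge m$. The claim is $\dist(r,p^\ast)<\tfrac{c_c\tau}{\tau-1}\tau^m$: if $a=m$ this is immediate from Lemma~\ref{lem:covering} since $p^\ast\in P_{r^m}$; if $a>m$ then level $m$ has been swallowed by a compression jump on $p^\ast$'s root path, and here I would use the fact that a jump edge joins two nodes with the \emph{same} associated point (a consequence of nesting together with the compression rule), so that $P_{r^a}$ equals the leaf-set of $r^a$'s child at some level $b<m$, and Lemma~\ref{lem:covering} applied at level $b$ gives the same bound with $\tau^b\le\tau^m$. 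The triangle inequality and the scale choice then yield $\dist(q,r)<\delta+\tfrac{c_c\tau}{\tau-1}\tau^m\le c_r\tau^m\le c_r\tau^a$, so $f(q,r^a)=(\dist(q,r),a)$ is finite.

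Finally, $f(q,p^\ell)\le f(q,r^a)$ forces $\dist(q,p)\le\dist(q,r)<c_r\tau^m<\tfrac{c_r\tau(\tau-1)}{c_r(\tau-1)-c_c\tau}\,\delta$, and since $\delta\le\dist(q,p')$ for every $p'\in P$, this is precisely the statement that $p$ is a $\tfrac{c_r\tau(\tau-1)}{c_r(\tau-1)-c_c\tau}$-ANN of $q$. I expect the only delicate point to be the behaviour of the witness node in the compressed representation — level $m$ need not appear literally on $p^\ast$'s root-to-leaf path — which is why the argument routes through the same-point structure of jump edges rather than assuming a node exists at scale exactly $m$; everything else reduces to the triangle inequality, Lemma~\ref{lem:covering}, and the arithmetic of the scale choice.
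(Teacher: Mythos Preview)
Your argument is correct, but it is organized dually to the paper's. The paper fixes $m$ from the center's side, choosing $m=\lceil\log_\tau(\dist(p,q)/c_r)\rceil$ so that $c_r\tau^{m-1}<\dist(p,q)\le c_r\tau^m$, and then argues that no point of $P$ can be close to $q$: it shows $\dist(q,N_{m-1})>c_r\tau^{m-1}$ (else some other node would beat $p^\ell$) and combines this with a global Hausdorff bound $\dist_H(N_{m-1},P)\le \tfrac{c_c\tau}{\tau-1}\tau^{m-1}$, obtained by exactly the same ``ancestor at level $\ge m-1$, jump-top has same point as jump-bottom'' reasoning you use. You instead fix $m$ from the nearest-neighbor side, picking the smallest scale at which $\delta+\tfrac{c_c\tau}{\tau-1}\tau^m\le c_r\tau^m$, manufacture a single witness node on $p^\ast$'s root path, and use minimality of the center directly. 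Both routes have to confront the jump case and both resolve it via nesting (a jump joins two nodes with the same associated point), so neither avoids the ``delicate point'' you flag. What your version buys is a slightly more constructive, single-witness argument that sidesteps the two-step $N_m$/$N_{m-1}$ reasoning; what the paper's version buys is a reusable intermediate statement (the Hausdorff bound between a net level and $P$) that fits the paper's later use of net levels.
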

  \begin{proof}
    Let $m\coloneqq\lceil\log_\tau (\dist(p,q)/c_r)\rceil$.
    Then, $m\le\ell$ and $c_r\tau^{m-1}<\dist(p,q)\le c_r\tau^m$.
    Since $p\in N_m$ and $p^\ell$ is the center of $q$, $\dist(q,N_m)>c_r\tau^{m-1}$.
    Furthermore, $\dist(q,N_{m-1})>c_r\tau^{m-1}$, because otherwise $\centersite(q)$ should be a node other than $p^\ell$ so that the corresponding point belongs to $N_{m-1}$, which contradicts the assumption.
    Also note that each node associated to a point in $P\setminus N_{m-1}$ has an ancestor in a level at least $m-1$.
    If the lowest ancestor in a level at least $m-1$ is above $m-1$, then it is the top of a jump, and the bottom node with the same associated point is in a level less than $m-1$.
    Therefore, using Lemma~\ref{lem:covering}, $\dist_H(N_{m-1},P)\le c_c\tau^m/(\tau-1)$.
    Now, using the triangle inequality,
    \begin{align*}
      \dist(q,P)
      &\ge\dist(q,N_{m-1})-\dist_H(N_{m-1},P)
      > c_r\tau^{m-1}-\frac{c_c}{\tau-1}\tau^{m}
      >\left(\frac{1}{\tau}-\frac{c_c}{c_r(\tau-1)}\right)\dist(p,q).
    \end{align*}
    Therefore, $\dist(p,q)<\frac{c_r\tau(\tau-1)}{c_r(\tau-1)-c_c\tau}\dist(q,P)$.
  \end{proof}

  \section{Bottom-up Construction of a Net-Tree} 
\label{sec:the_bottomup_construction}

  Constructing a net-tree one point at a time has three phases.
  First, one finds the center (as defined in Section~\ref{sec:approx_vd}) of the new point.
  Second, the new point is inserted as a relative of its center, with its parent, children, and relatives computed by a constant-time local search.
  Third, new nodes associated with the point are added up the tree until the parent satisfies the covering property.
  In principle, this promotion phase can propagate all the way to the root.
  Along the way, it is sometimes necessary to split a compressed edge to create a node that now has a relative (our new point) or remove an existing node that now has no relatives.

  In the original work on net-trees, the difficult part of the algorithm finds not only the centers (or its equivalent), but also finds an ordering that avoids the propagation phase.
  Other algorithms have used the tree itself as the search structure to find the centers when needed~\cite{gao06deformable}, but this can lead to linear time insertions if the tree is deep.
  In this section, we will give the construction assuming the center of each new point is known, and we will describe the point location data structure in Section~\ref{sec:point_location}.

  \subsection{Insertion} 
  \label{sub:insertion}

    Once the center is found, $p$ is added to the tree as follows.
    Let $q^\ell\coloneqq\centersite(p)$.
    We find the lowest level $h$ in $T$ that $p$ has a relative (not itself).
    By the definition of relatives, $h\coloneqq\lceil\log_\tau (\dist(q,p)/c_r)\rceil$.
    If $p$ does not satisfy the packing property at level $h$, that is $\dist(p,q)\le c_p\tau^h$, then set $h\coloneqq h-1$.
    Next, we create node $p^h$.
    If $q^h$ is not already in the tree, then we add it to the tree.
    If the parent of $q^h$ is a node associated to point $q$ and $q^{h+1}\notin T$, then we create $q^{h+1}$ and add it to the tree.
    We also set the parent of $p^h$ to $\parent(q^h)$.

    To ensure that the parent, children, and relatives of the new node $p^h$ are correct, an update procedure will be executed.
    In this procedure, we find relatives and children of $p^h$ from $\child(\rel(\parent(p^h)))$ and $\child(\rel(p^h))$, respectively.
    Also, the parent of $p^h$ will be the closest node to $p$ among $\rel(\parent(p^h))$.
    Note that when node $p^h$ receives a new child, say $x^{h-1}$, we check the previous parent of $x^{h-1}$ against the semi-compressed condition to determine whether that node should be removed from the tree or not.
    The following lemma proves the correctness of the insertion algorithm.

    \begin{lemma}
    \label{lem:insert}
      Given a semi-compressed tree $T\in\wnt(\tau,c_p,c_c)$ with $c_r\ge\frac{2c_c\tau}{\tau-2}$ and an uninserted point $p$ with $q^\ell\coloneqq\centersite(p)$.
      The insertion algorithm adds $p$ into $T$ and results a semi-compressed local net-tree $T'\in\wnt(\tau,c_p,c_c+\frac{c_r}{\tau})$.
    \end{lemma}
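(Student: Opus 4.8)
The plan is to verify, one by one, that $T'$ satisfies nesting, local packing, local covering with the enlarged constant $c_c+\frac{c_r}{\tau}$, local parent, and the semi‑compressed condition, with everything driven by unpacking $q^\ell=\centersite(p)$. That hypothesis says $f(p,x^m)\ge f(p,q^\ell)=(\dist(p,q),\ell)$ for every node $x^m\in T$, i.e.\ for every node $x^{m'}\in T$ with $x\ne q$ we have $\dist(p,x)\ge\dist(p,q)$ or $\dist(p,x)>c_r\tau^{m'}$. Together with the definition of $h$ (the ceiling gives $\dist(p,q)>c_r\tau^{h-1}$; and $\dist(p,q)>c_p\tau^h$ either because the decrement test failed or because after a decrement $\dist(p,q)>c_r\tau^h\ge c_p\tau^h$), this yields the two facts used throughout: (i) $\dist(p,q)>c_p\tau^h$, and (ii) $\dist(p,z)>c_r\tau^m$ for every $m\le h-1$ and every $z\in N_m$ with $z\ne p$. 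I also use one structural observation: a jump only ever runs between two nodes with the same associated point, so when the algorithm creates $q^h$ it does so by splitting a $q$-jump, whence $\parent(q^h)$ — after a possible second split that creates $q^{h+1}$ — is a $q$-node at level $h+1$, and therefore $\parent(p^h)$ is initially a node at level $h+1$.

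Nesting, local packing, and local covering then follow fairly directly. Nesting holds because $p^h$ keeps the already-present leaf as its $p$-child through a jump, the split $q$-nodes keep their $q$-children, and the update procedure re-parents a node only to a strictly closer point, so no node loses its nesting child. For local packing, the only point newly added to any net is $p$, and only to $N_m$ for $m\le h$: (ii) handles $m\le h-1$, and for $m=h$ (i) handles the pair $p,q$ while for any other $z\in N_h$ the center inequality gives $\dist(p,z)\ge\dist(p,q)>c_p\tau^h$ or $\dist(p,z)>c_r\tau^h$. For local covering, a child re-parented onto $p^h$ moves to a point at least as close as its former parent, so that edge's covering does not worsen; a split jump contributes only zero-length $q$-edges and the inherited jump-covering edge; and the single edge carrying the enlarged constant is $p^h\to\parent(p^h)$. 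Letting $w$ be the point at $\parent(q^h)$, the triangle inequality gives $\dist(p,w)\le\dist(p,q)+\dist(q,w)$, and either $q^h$ was freshly created so $\dist(q,w)=0$ and the bound is $\le\dist(p,q)$, or $q^h$ already existed so local covering gives $\dist(q,w)\le c_c\tau^{h+1}$ and the total is $\le c_r\tau^h+c_c\tau^{h+1}=(c_c+\frac{c_r}{\tau})\tau^{h+1}$; replacing $\parent(p^h)$ by the nearest relative only decreases this.

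The heart of the proof is local parent. For an edge whose child $x^m$ sits at level $m\le h-2$: if $p$ had become strictly closer to $x$ than $x$'s old parent $y$, then $\dist(p,y)<2\dist(x,y)\le 2c_c\tau^{m+1}$ by local covering, contradicting (ii)'s $\dist(p,y)>c_r\tau^{m+1}$ since $c_r\ge\frac{2c_c\tau}{\tau-2}>2c_c$; hence all such deep edges are untouched and stay correct. For the children at level $h-1$ — the only nodes that can legitimately move, since $p^h$ is the only new candidate parent and, by (ii), $p$ is too far to matter at any lower net — the same triangle estimate shows that when $x^{h-1}$ truly should move to $p^h$ its former parent $y^h$ has $\dist(p,y)<2c_c\tau^h\le c_r\tau^h$, so $y^h\sim p^h$ and $x^{h-1}\in\child(\rel(p^h))$ is seen by the local search and re-parented, and since $y$ realized the nearest in the old $N_h$, $p$ now realizes the nearest; the children that do not move keep a correct parent for the same reason. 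The analogous estimate pins down $\parent(p^h)$ itself: with $w$ again the point at $\parent(q^h)$, the point $z$ nearest to $p$ in $N_{h+1}$ satisfies $\dist(w,z)\le 2\dist(p,w)\le 2(c_r\tau^h+c_c\tau^{h+1})$, which is $\le c_r\tau^{h+1}$ precisely when $c_r\ge\frac{2c_c\tau}{\tau-2}$, so $z$ appears among the relatives of $\parent(q^h)$ and the search finds it (using also Lemma~\ref{lem:functorial}, valid here, so that $\rel(p^h)$ is computed correctly from $\child(\rel(\parent(p^h)))$).

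Finally, the semi-compressed condition is preserved by the bookkeeping bundled into the insertion: jumps are split exactly where a new relative-bearing node is needed, and a stale former parent that has shed all of its relative-children is tested against the semi-compressed criterion and removed if it has become an only child with a single child; the new node $p^h$ is legitimate because it has a relative at level $h$, except after a decrement, when it tops $p$'s branch and hence is not the only child of its parent. The one genuinely delicate step is local parent — specifically, arguing that local search among the relatives of $\parent(q^h)$ is complete, so that no closer candidate parent or child hides outside that neighborhood — and that is exactly the calculation that forces $c_r\ge\frac{2c_c\tau}{\tau-2}$; everything else is triangle-inequality bookkeeping once (i) and (ii) are in hand.
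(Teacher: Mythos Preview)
Your proposal is correct and follows essentially the same approach as the paper's proof: both verify the local net-tree invariants (packing, covering, parent) and the semi-compressed condition via triangle-inequality calculations driven by the center hypothesis, and both pin the key constraint $c_r\ge\frac{2c_c\tau}{\tau-2}$ to the completeness of the local search for $\parent(p^h)$.

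The main organizational difference is that you front-load the two consequences (i) and (ii) of $q^\ell=\centersite(p)$ and then apply them uniformly, whereas the paper re-derives the needed inequalities inline at each step. This makes your packing argument more explicit (you check $p$ against every point of $N_h$ and of $N_m$ for $m\le h-1$, while the paper checks only against $q$ and leaves the rest implicit in the center property), and your treatment of edges with child at level $\le h-2$ is cleaner: you bound $\dist(p,y)>c_r\tau^{m+1}$ directly from (ii) rather than asserting $\dist(p,\parent(s^g))>c_r\tau^{h-1}$ as the paper does. You also handle the semi-compressed status of $p^h$ in the decremented case more carefully, observing that nesting forces the parent of the highest $p$-node to have another child; the paper's sentence ``the created nodes have more than one relative or more than one child'' glosses over this case. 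None of these differences is a genuinely different route, just a tidier packaging of the same argument.
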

    \begin{proof}
      We need to show that the resulted tree satisfies the covering, the packing, and the parent invariants, also relatives are correct and the output is semi-compressed.
      Let $x^{h+1}$ be the closest node to $p^h$ at level $h+1$.
      By the parent property, $\dist(p,x)\le\dist(p^h,\parent(q^h))$.
      By the triangle inequality,
      \begin{align*}
        \dist(p^h,\parent(q^h))\le\dist(p^h,q^h)+\dist(q^h,\parent(q^h))\le c_r\tau^h+c_c\tau^{h+1}=(c_c+\frac{c_r}{\tau})\tau^{h+1}.
      \end{align*}
      Therefore, $\dist(p,x)<(c_c+c_r/\tau)\tau^{h+1}$, which implies that the covering constant of $T'$ is $c_c+\frac{c_r}{\tau}$.
      Note that the distance of any node in any level $\ell$ in $T'$ except $p^h$ to its parent is at most $c_c\tau^{\ell+1}$.

      Let $h$ be the minimum value so that $\dist(p,q)\le c_r\tau^h$.
      Then $c_r\tau^{h-1}<\dist(p,q)\le c_r\tau^h$, as such $h\coloneqq\lceil\log_\tau(\dist(p,q)/c_r)\rceil$.
      Insertion of $p$ at level $h$ should preserve the packing property, i.e. $\dist(p,q)>c_p\tau^h$.
      Since $c_p\le c_c$ and $c_r\ge \frac{2c_c\tau}{\tau-2}$, we have $c_p\tau^h\le c_c\tau^h<\frac{2c_c\tau}{\tau-2}\tau^h\le c_r\tau^h$.
      However, $c_p\tau^h<c_r\tau^{h-1}$ does not necessarily hold, so if $p$ is inserted at level $h$, it may violate the packing property.
      Furthermore, we have $\dist(p,q)>c_r\tau^{h-1}\ge \frac{2c_c\tau}{\tau-2}\tau^{h-1}>c_c\tau^{h-1}\ge c_p\tau^{h-1}$, which implies that the insertion of $p$ at level $h-1$ satisfies the packing property.
      Therefore, the insertion algorithm correctly maintains the packing property

      To prove the parent property, we need to show that the parent and children of $p^h$ in $T'$ are correct.
      Since $p^h\sim q^h$, Lemma~\ref{lem:functorial} implies $\parent(p^h)\sim\parent(q^h)$, so the algorithm correctly finds the parent of $p^h$.
      To show that the children of $p^h$ in $T'$ are correct, we first prove that $p^h$ cannot serve as the parent of any node with a level less than $h-1$, then we show that the algorithm correctly finds its children in level $h-1$.
      Consider a node $s^g\in T$, where $g\le h-2$.
      We have $\dist(p,\parent(s^{g}))>c_r\tau^{h-1}$, otherwise $\parent(s^{g})$ should have been the center of $p$.
      By the triangle inequality,
      \begin{align*}
        \dist(p^h,s^g)
        &\ge \dist(p^h,\parent(s^{g}))-\dist(\parent(s^{g}),s^g)
        >c_r\tau^{h-1}-c_c\tau^{g+1} \\
        &> \frac{2c_c\tau}{\tau-2}\tau^{h-1}-c_c\tau^{h-1}=c_c\frac{\tau+2}{\tau-2}\tau^{h-1}>c_c\tau^{h-1}.
      \end{align*}
      Therefore, $p$ cannot cover $s^g$, which implies that we only need to check the nodes at level $h-1$ to find children of $p^h$.
      Furthermore, we show that if $p^h$ is the closest node at level $h$ to a node $s^{h-1}$, then $p^h\sim\parent(s^{h-1})$, which implies that the algorithm correctly finds children of $p^h$.
      By the parent property, $\dist(p,s)<\dist(s^{h-1},\parent(s^{h-1}))$.
      By the triangle inequality,
      \begin{align*}
        \dist(p^h,\parent(s^{h-1}))
        &\le \dist(p^h,s^{h-1})+\dist(s^{h-1},\parent(s^{h-1}))
        <2\dist(s^{h-1},\parent(s^{h-1}))
        <2c_c\tau^h
        <c_r\tau^h.
      \end{align*}

      Lemma~\ref{lem:functorial} implies that the algorithm correctly finds the relatives of $p^h$ (nodes that have $p^h$ as their relative will be updated too).
      If $q^h$ is added to the tree, we do not need to find its relatives separately because $p^h$ is its only relative.
      Also, if $q^{h+1}$ is inserted to the tree, we do not need to update $\rel(q^{h+1})$ because it does not have any relatives other than itself.
      Therefore, the algorithm correctly updates relatives after each insertion.

      Eventually, $T'$ is semi-compressed because the algorithm removes those nodes that do not satisfy the semi-compressed condition (while updating children) and the created nodes have more than one relative or more than one child.
    \end{proof}


  \subsection{Bottom-Up Propagation} 
  \label{sub:bottom_up_propagation}

    If the insertion of a new point $p$ violates the local covering property (change the covering constant from $c_c$ to $c_c+c_r/\tau$), then the bottom-up propagation algorithm restores the covering property by promoting $p^\ell$ to higher levels of the tree as follows.
    Let $q^{\ell+1}\coloneqq\parent(p^\ell)$.
    First, we create node $p^{\ell+1}$ and make it as the parent of $p^\ell$.
    Then, we make the closest node among $\rel(\parent(q^{\ell+1}))$ to $p$ as the parent of $p^{\ell+1}$.
    Finally, we find relatives and children of $p^{\ell+1}$ in a way similar to the insertion algorithm (we also remove the nodes that do not satisfy the semi-compressed condition).
    If node $p^{\ell+1}$ still violates the covering property, we use the same procedure to promote it to a higher level.
    Here, we use iteration $i$ to indicate promotion of point $p$ to level $\ell+i$.

    \begin{lemma}
    \label{lem:dist_violatingnode_to_parent}
      Given $c_r\ge \frac{2c_c\tau}{\tau-2}$ and a violating node $p^\ell$, in the $i$-th iteration of the bottom-up propagation algorithm, $\dist(p^{\ell+i},\parent(p^{\ell+i}))\le (c_c+\frac{c_r}{\tau})\tau^{\ell+i+1}< c_r\tau^{\ell+i+1}$.
    \end{lemma}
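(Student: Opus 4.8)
The plan is to prove both inequalities by a single induction on the iteration number $i\ge 1$, reusing the triangle-inequality estimate from the proof of Lemma~\ref{lem:insert}.

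Before the induction I would record two base facts. By Lemma~\ref{lem:insert}, immediately after $p$ is inserted the tree lies in $\wnt(\tau, c_p, c_c + \frac{c_r}{\tau})$, so the violating node $p^\ell$ has $\dist(p^\ell, \parent(p^\ell)) \le (c_c + \frac{c_r}{\tau})\tau^{\ell+1}$, while (by the remark at the end of that proof) every other node at any level $m$ still has distance at most $c_c\tau^{m+1}$ to its parent. Also, $\tau > 2$ and $c_r \ge \frac{2c_c\tau}{\tau-2}$ give $c_r > \frac{c_c\tau}{\tau-1}$, hence $c_c + \frac{c_r}{\tau} < c_r$; thus the second inequality in the statement follows from the first, and it suffices to prove $\dist(p^{\ell+i}, \parent(p^{\ell+i})) < (c_c + \frac{c_r}{\tau})\tau^{\ell+i+1}$.

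For the inductive step, fix an iteration $i$, set $q^{\ell+i} \coloneqq \parent(p^{\ell+i-1})$ (reading $p^{\ell+0}$ as $p^\ell$), and let $q$ be the point associated with $q^{\ell+i}$. The algorithm makes $\parent(p^{\ell+i})$ the node of $\rel(\parent(q^{\ell+i}))$ nearest to $p$, and since $\parent(q^{\ell+i})$ is its own relative this already gives $\dist(p^{\ell+i}, \parent(p^{\ell+i})) \le \dist(p^{\ell+i}, \parent(q^{\ell+i}))$. By the triangle inequality,
\[
  \dist(p^{\ell+i}, \parent(p^{\ell+i})) \le \dist(p, q) + \dist(q, \parent(q^{\ell+i})).
\]
The second term is at most $c_c\tau^{\ell+i+1}$, since $q^{\ell+i}$ is not the current violating node and its parent pointer is untouched in iteration $i$. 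For the first term, $\dist(p, q) = \dist(p^{\ell+i-1}, \parent(p^{\ell+i-1}))$, which is strictly below $c_r\tau^{\ell+i}$: for $i = 1$ this follows from the base facts above, and for $i \ge 2$ it is the induction hypothesis for iteration $i-1$. Adding the two bounds gives $\dist(p^{\ell+i}, \parent(p^{\ell+i})) < c_r\tau^{\ell+i} + c_c\tau^{\ell+i+1} = (c_c + \frac{c_r}{\tau})\tau^{\ell+i+1}$, closing the induction.

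The step I expect to require the most care is the claim $\dist(q, \parent(q^{\ell+i})) \le c_c\tau^{\ell+i+1}$, i.e.\ that $q^{\ell+i}$ still obeys the original covering constant $c_c$. The supporting invariant is that a single propagation pass creates only the chain $p^{\ell+1}, \dots, p^{\ell+i}$ and that at every moment only its topmost new node can violate $c_c$-covering; as $q^{\ell+i}$ is either a pre-existing node (for $i = 1$ it is $\parent(p^\ell) \ne p^\ell$) or the relative selected as a parent in an earlier iteration, it is never that node, and its own parent pointer is not modified while processing iteration $i$. I would also check the degenerate cases where $\parent(q^{\ell+i})$ lies strictly above level $\ell+i+1$ because of a jump, or does not exist because $q^{\ell+i}$ is the root: in the former the displayed inequalities are unaffected, as $\parent(q^{\ell+i})$ is still an admissible candidate in the nearest-relative selection, and in the latter the promotion simply terminates.
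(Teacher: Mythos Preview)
Your proposal is correct and follows essentially the same induction as the paper: establish the $i=0$ case from Lemma~\ref{lem:insert}, then in the inductive step bound $\dist(p,\parent(p^{\ell+i}))$ by $\dist(p,q)+\dist(q,\parent(q^{\ell+i}))$ with $q^{\ell+i}=\parent(p^{\ell+i-1})$, using the induction hypothesis for the first summand and the unperturbed covering constant $c_c$ for the second. The only cosmetic differences are that the paper uses the tighter bound $\dist(p,q)\le(c_c+c_r/\tau)\tau^{\ell+i}$ (rather than $<c_r\tau^{\ell+i}$) and derives $c_c+c_r/\tau\le c_r/2$ directly from $c_r\ge 2c_c\tau/(\tau-2)$; your extra discussion of why $q^{\ell+i}$ still obeys the $c_c$ covering bound and of the jump/root edge cases is more explicit than the paper's treatment.
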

    \begin{proof}
      We prove this lemma by induction.
      For the base case $i=0$, Lemma~\ref{lem:insert} implies $\dist(p,\parent(p^\ell))\le (c_c+\frac{c_r}{\tau})\tau^{\ell+1}$.
      Also, for $c_r\ge \frac{2c_c\tau}{\tau-2}$, $c_c+\frac{c_r}{\tau}\le \frac{c_r(\tau-2)}{2\tau}+\frac{c_r}{\tau}=\frac{c_r}{2}<c_r$.
      Assume that the lemma holds for some $i-1\ge 0$, and we show that it is also true for $i$.
      In other words, the distance between $p^{\ell+i-1}$ to $q^{\ell+i}\coloneqq\parent(p^{\ell+i-1})$ is greater than $c_c\tau^{\ell+i}$, as such $p$ should be promoted to level $\ell+i$.
      The algorithm finds the parent of $p^{\ell+i}$ among the relatives of $\parent(q^{\ell+i})$.
      Therefore, $\parent(p^{\ell+i})$ is a node in level $\ell+i+1$ so that $\dist(p^{\ell+i},\parent(p^{\ell+i}))\le \dist(p^{\ell+i},\parent(q^{\ell+i}))$.
      By the triangle inequality,
      \begin{align*}
        \dist(p^{\ell+i},\parent(p^{\ell+i}))
        &\le \dist(p^{\ell+i},\parent(q^{\ell+i}))
        \le \dist(p^{\ell+i},q^{\ell+i})+\dist(q^{\ell+i},\parent(q^{\ell+i}))\\
        &\le (c_c+\frac{c_r}{\tau})\tau^{\ell+i}+c_c\tau^{\ell+i+1}=(\frac{c_r}{\tau^2}+\frac{c_c}{\tau}+c_c)\tau^{\ell+i+1}\\
        &<(c_c+\frac{c_r}{\tau})\tau^{\ell+i+1}<c_r\tau^{\ell+i+1}. \qedhere
      \end{align*}
    \end{proof}

    The following lemma states that the bottom-up propagation algorithm correctly restores the covering property, and its proof is similar to the proof of Lemma~\ref{lem:insert}

    \begin{lemma}
    \label{lem:promotion}
      Given a semi-compressed tree $T\in\wnt(\tau, c_p, c_c+\frac{c_r}{\tau})$ with $c_r\ge\frac{2c_c\tau}{\tau-2}$.
      Let for all nodes $x^m\in T$ except $p^\ell$, $\dist(x^m,\parent(x^m))\le c_c\tau^{m+1}$ and for $p^\ell$, $c_c\tau^{\ell+1}<\dist(p^\ell,\parent(p^\ell))\le (c_c+\frac{c_r}{\tau})\tau^{\ell+1}$.
      Then, the bottom-up propagation algorithm results a semi-compressed tree $T'\in\wnt(\tau,c_p,c_c)$.
    \end{lemma}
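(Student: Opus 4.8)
The plan is to mirror the structure of the proof of Lemma~\ref{lem:insert}, verifying in turn the covering, packing, parent, and relative invariants, plus the semi-compressed condition, but now tracking the covering constant as it shrinks back from $c_c+\tfrac{c_r}{\tau}$ to $c_c$. The key conceptual point is that each iteration $i$ of the propagation creates exactly one new node $p^{\ell+i}$, and Lemma~\ref{lem:dist_violatingnode_to_parent} already gives us the crucial bound $\dist(p^{\ell+i},\parent(p^{\ell+i}))\le (c_c+\tfrac{c_r}{\tau})\tau^{\ell+i+1}<c_r\tau^{\ell+i+1}$, so the only node that could possibly violate $\dist(x^m,\parent(x^m))\le c_c\tau^{m+1}$ is the freshly created top node. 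If the propagation terminates at iteration $i^\star$ (meaning $\dist(p^{\ell+i^\star},\parent(p^{\ell+i^\star}))\le c_c\tau^{\ell+i^\star+1}$), then the covering property of $T'$ holds with constant exactly $c_c$; if it reaches the root, the covering property holds vacuously at the top. That handles local covering.

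For the packing property, I would observe that promoting $p$ to level $\ell+i$ never introduces a new point into any net $N_m$ — the point $p$ already belonged to $N_m$ for all $m\le\ell$ after insertion, and promotion only adds nodes $p^{\ell+i}$ for $p$ at successively higher levels, so the sets $N_m$ for $m>\ell$ gain only the already-present point $p$; since $p$ was in $N_\ell\subseteq N_{\ell+1}\subseteq\cdots$ we need $\dist(p,q)>c_p\tau^{\ell+i}$ for all other $q\in N_{\ell+i}$, which follows because the iteration condition $\dist(p^{\ell+i-1},\parent(p^{\ell+i-1}))>c_c\tau^{\ell+i}\ge c_p\tau^{\ell+i}$ (using the local parent property to identify the nearest point of $N_{\ell+i}$ to $p$) gives exactly this separation. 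For the local parent property, the algorithm picks $\parent(p^{\ell+i})$ as the closest node to $p$ among $\rel(\parent(q^{\ell+i}))$ where $q^{\ell+i}=\parent(p^{\ell+i-1})$; I would argue as in Lemma~\ref{lem:insert} — using Lemma~\ref{lem:functorial}, which applies since $c_r\ge\frac{2c_c\tau}{\tau-2}\ge\frac{2(c_c+c_r/\tau)\tau}{\tau-1}$ need be checked, or more simply since $p^{\ell+i-1}\sim q^{\ell+i}$ lifts to $\parent(p^{\ell+i-1})\sim\parent(q^{\ell+i})$ once we confirm the relative hypothesis of Lemma~\ref{lem:functorial} holds for the relevant constants — that the true closest point of $N_{\ell+i+1}$ to $p$ is a relative of $\parent(q^{\ell+i})$ and hence is examined.

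The main obstacle, and where I expect the bulk of the careful work, is verifying that the relatives and children of each new node $p^{\ell+i}$ are computed correctly from the local neighborhood $\child(\rel(\cdot))$, exactly as in the insertion proof. This requires re-running the two geometric arguments from Lemma~\ref{lem:insert}: first, that $p^{\ell+i}$ cannot cover any node at level $\le \ell+i-2$ (so only level $\ell+i-1$ children need checking), which uses $\dist(p,\parent(s^g))>c_r\tau^{\ell+i-1}$ — but here I must re-justify this inequality, since the ``otherwise it would be the center'' argument from insertion no longer directly applies; instead it should follow from the fact that $p^{\ell+i-1}$ needed promotion, i.e.\ no existing node at level $\ell+i$ was within the covering radius, combined with the parent property. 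Second, that whenever $p^{\ell+i}$ is the closest node at its level to some $s^{\ell+i-1}$, then $\parent(s^{\ell+i-1})\sim\parent(p^{\ell+i})$, so the update via relatives of the parent catches it — this is the same triangle-inequality computation bounding $\dist(p^{\ell+i},\parent(s^{\ell+i-1}))<2c_c\tau^{\ell+i+1}<c_r\tau^{\ell+i+1}$, noting that the old covering constant $c_c+c_r/\tau$ may need to be used for $s^{\ell+i-1}$'s parent distance, so the bound becomes $2(c_c+c_r/\tau)\tau^{\ell+i}$, and one checks $2(c_c+c_r/\tau)\le c_r\tau$ via $c_r\ge\frac{2c_c\tau}{\tau-2}$. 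Finally, the semi-compressed condition is maintained because the algorithm explicitly removes nodes that lose all relatives when $p^{\ell+i}$ steals their children, and each created node $p^{\ell+i}$ has relative $p^{\ell+i}$... wait, has at least the child $p^{\ell+i-1}$ and, if it is not the terminal node, will itself acquire a relative on the next iteration or already has $q^{\ell+i}$-related structure; the terminal node $p^{\ell+i^\star}$ has a relative by construction since it attaches to an existing subtree — this edge case I would state explicitly.
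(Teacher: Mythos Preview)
Your approach is essentially the same as the paper's: verify packing, covering, parent, children/relatives, and the semi-compressed condition, leaning on Lemma~\ref{lem:dist_violatingnode_to_parent} and the insertion-lemma arguments. Two points where your proposal drifts from the paper and could be tightened:

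\emph{Semi-compressed condition.} You fumble the justification that every created node $p^{\ell+i}$ is required. The paper's argument is clean and direct: by Lemma~\ref{lem:dist_violatingnode_to_parent} applied at iteration $i-1$, the old parent $q^{\ell+i}$ of $p^{\ell+i-1}$ satisfies $\dist(p,q)<c_r\tau^{\ell+i}$, so $p^{\ell+i}\sim q^{\ell+i}$. Hence every newly created $p^{\ell+i}$ has a relative other than itself and is needed in the semi-compressed tree. There is no terminal-node edge case to worry about; the same argument covers it.

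\emph{Unnecessary constant-tracking.} Your concern that the child argument must use $2(c_c+c_r/\tau)\tau^{\ell+i}$ for $\dist(s^{\ell+i-1},\parent(s^{\ell+i-1}))$ is misplaced: the lemma's hypothesis explicitly states that every node \emph{except} $p^\ell$ already satisfies $\dist(x^m,\parent(x^m))\le c_c\tau^{m+1}$, so the bound $2c_c\tau^{\ell+i}<c_r\tau^{\ell+i}$ from Lemma~\ref{lem:insert} applies verbatim. Similarly, the applicability of Lemma~\ref{lem:functorial} follows because the relevant parent distances (for nodes other than $p$) are bounded by $c_c$; the paper simply invokes Lemmas~\ref{lem:dist_violatingnode_to_parent} and~\ref{lem:functorial} together to conclude $\parent(p^{\ell+i})\sim\parent(q^{\ell+i})$.
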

    \begin{proof}
      First, we prove that the local packing, covering, and parent properties are mintained.
      Since $\dist(p^{\ell+i},\parent(p^{\ell+i}))>c_c\tau^{\ell+i+1}\ge c_p\tau^{\ell+i+1}$, the promotion does not modify the packing constant.
      Also, the violating node can be promoted up to the root (at level $+\infty$), so the algorithm results the covering constant of $c_c$.
      Using~\Cref{lem:dist_violatingnode_to_parent,lem:functorial}, $\parent(p^{\ell+i})\sim\parent(q^{\ell+i})$, so the parent is in $\rel(\parent(q^{\ell+i}))$.
      The proof of correctness of $\child(p^{\ell+i})$ is similar to Lemma~\ref{lem:insert}.

      It is easy to see that the relatives of $p^{\ell+i}$ are among $\rel(\parent(q^{\ell+i}))$ and the algorithm correctly finds the relatives.
      Finally, we need to show that $T'$ is semi-compressed.
      In other words, we should prove that all the created nodes for $p$ are required in $T'$.
      Lemma~\ref{lem:dist_violatingnode_to_parent} implies that node $p^{\ell+i}$ has at least one relative besides itself, i.e. $q^{\ell+i}$, which is the old parent of $p^{\ell+i-1}$ in iteration $i-1$.
      So, it is always necessary to create node $p^{\ell+i}$ in the $i$-th iteration.
    \end{proof}


  \subsection{Analysis} 
  \label{sub:analysis}

    In the following theorem, we analyze the running time of the bottom-up construction algorithm without considering the point location cost which will be handled in Section~\ref{sec:point_location}.

    \begin{theorem}
    \label{thm:construction_without_PL}
      Not counting the PL step, the bottom-up construction runs in $O(\rho^{O(1)}n)$ time.
    \end{theorem}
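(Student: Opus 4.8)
The plan is to charge the work to elementary tree modifications and then bound their number by $O(\rho^{O(1)}n)$. By Lemmas~\ref{lem:insert}, \ref{lem:dist_violatingnode_to_parent}, and~\ref{lem:promotion}, at every moment of the construction the current tree is a semi-compressed local net-tree whose packing and covering constants stay within a fixed constant factor of each other, so Lemma~\ref{lem:ntsize} applies throughout: every node always has $O(\rho^{O(1)})$ children and $O(\rho^{O(1)})$ relatives. Consequently each elementary operation the algorithm uses — creating a node, deleting a node, recomputing a node's parent/children/relatives by a local search through $\child(\rel(\cdot))$, and updating the nodes that had the touched node as a relative — inspects only $O(\rho^{O(1)})$ nodes and costs $O(\rho^{O(1)})$ time. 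So it suffices to bound the total number of node creations and deletions (each creation and each deletion triggers only $O(\rho^{O(1)})$ neighborhood recomputations of other nodes, so these are absorbed).

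First I would dispatch the easy contributions. Inserting a point $p$ creates its bottom node $p^{h}$ and at most two nodes $q^{h}, q^{h+1}$ on its center's path (to split a jump), i.e.\ $O(1)$ creations per insertion and $O(n)$ in total. Every deletion is triggered only when some freshly created node steals a child from an old parent and that old parent then fails the semi-compressed condition; since a created node has $O(\rho^{O(1)})$ children, each creation is responsible for $O(\rho^{O(1)})$ deletions. Hence it is enough to bound the total number of node creations, and the remaining nontrivial part is $\sum_p(\text{number of nodes created while inserting } p)$, which is essentially $\sum_p(\high_p - h_p)$ from the bottom-up propagation.

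The key structural fact, already exploited in the proof of Lemma~\ref{lem:promotion}, is that \emph{every node the algorithm ever creates has a relative at the instant it is created}: the bottom node $p^{h}$ is a relative of a node on its center's path, a jump-splitting node $q^{h+1}$ acquires the inserted point's node as a relative, and a propagation node $p^{\ell+i}$ has the old parent $q^{\ell+i}$ as a relative. I would charge each creation of a node $p^{\ell}$ to such a relative $r^{\ell}$ with $\dist(p,r)\le c_r\tau^{\ell}$, and bound the multiplicity of the charge. For fixed $p$ and $\ell$, the node $p^{\ell}$ can be created only when $p$ is inserted, when $p$ is promoted through level $\ell$ (which, since a point's top level only increases, happens at most once), or when some other point $x$ is inserted with center a $p$-node and with $\lceil\log_\tau(\dist(x,p)/c_r)\rceil=\ell$ (or one less). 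Because the nets $N_\ell$ are monotonically growing over the course of the construction, all such $x$ lie in $N_\ell\cap\ball(p, c_r\tau^\ell)$ at the end and are pairwise $c_p\tau^\ell$-separated, so the Packing Lemma caps their number by $\rho^{O(1)}$. Thus each $p^\ell$ is created $O(\rho^{O(1)})$ times, and running the same counting as in the proof of Theorem~\ref{thm:semi-compressed_linear_size} (each relative pair of points is linked at only $O(\log_\tau(c_r/c_p))$ levels, and each point has $O(\rho^{O(1)})$ relatives at its top level) bounds the total number of relative‑endowed creations — hence all creations, hence the whole running time excluding point location — by $O(\rho^{O(1)}n)$.

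I expect the third paragraph to be the main obstacle. Unlike the static setting, the semi-compressed tree is not monotone: compression repeatedly deletes and later re-creates internal nodes, so one cannot directly invoke the size bound of Theorem~\ref{thm:semi-compressed_linear_size}. The delicate points are (i) proving that a fixed node $p^\ell$ is rebuilt only $O(\rho^{O(1)})$ times, which is exactly where monotonicity of the $N_\ell$'s and packing are needed, and (ii) checking that the neighborhood recomputations forced by a deletion — when a node absorbs the subtree of its removed child — are also covered by the $O(\rho^{O(1)})$-per-creation budget rather than cascading.
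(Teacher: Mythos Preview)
Your plan is correct and would go through, but it takes a substantially longer route than the paper's. Both arguments charge each promotion step to a nearby relative; the difference is what happens next. The paper exploits one further observation that eliminates your entire third paragraph: when the $i$-th promotion of $p$ creates $p^{\ell+i}$ with relative $q^{\ell+i}$ (its old parent), \emph{neither of these two nodes can ever be deleted again}, because each permanently has the other as a nontrivial relative and hence both satisfy the semi-compressed condition for the remainder of the run. Thus every promotion step is charged directly to a node $q^{\ell+i}$ of the \emph{final} output tree; each output node is charged by at most $|\rel(\cdot)|=\rho^{O(1)}$ promotion steps, and there are $O(\rho^{O(1)}n)$ output nodes by Theorem~\ref{thm:semi-compressed_linear_size}. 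Together with the $\rho^{O(1)}$ one-time insertion cost per point, that is the whole proof.

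Your concern (i) --- bounding how often a fixed $p^\ell$ is rebuilt via a packing argument on the triggering points --- is therefore unnecessary: the nodes being charged are permanent, so the static size bound applies directly and there is no re-creation to track. Your detour does work (the monotonicity of the $N_\ell$'s lets the relative-edge count from Theorem~\ref{thm:semi-compressed_linear_size} be pushed through the whole history), but it costs an extra $\rho^{O(1)}$ factor and a case analysis that the permanence observation sidesteps in one sentence. One small correction: the jump-splitting node $q^{h+1}$ does \emph{not} acquire $p^h$ as a relative --- they lie at different levels and $p^h$ is a \emph{child} of $q^{h+1}$; the paper explicitly notes $q^{h+1}$ has no nontrivial relative and is retained because it has two children. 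This is harmless for your accounting since $q^{h+1}$ already falls in your $O(1)$-per-insertion bucket, but it shows your ``every created node has a relative at creation'' claim needs a caveat, and is another reason the paper's permanence-of-the-charged-relative shortcut is the cleaner way to close the argument.
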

    \begin{proof}
      To prove this theorem, we use an amortized analysis which imposes the cost of each iteration on a node in the output.
      In the promotion phase, Lemma~\ref{lem:dist_violatingnode_to_parent} implies that every node of $p^{\ell+i}$ has at least one relative besides itself, namely $q^{\ell+i}$.
      So, we can make $q^{\ell+i}$ responsible to pay the cost of iteration $i$ for $p$.
      Note that a node $q^{\ell+i}$ will not be removed by any points that will be processed next, because $p^{\ell+i}\sim q^{\ell+i}$ satisfies the semi-compressed condition.
      In other words, there is always a node in the output that pays the cost of promotion.
      By Lemma~\ref{lem:ntsize}, the cost of each iteration is $\rho^{O(1)}$ and $q^{\ell+i}$ has $\rho^{O(1)}$ relatives, as such $q^{\ell+i}$ receives $\rho^{O(1)}$ cost in total.
      Therefore, to pay the cost of all promotions for all $n$ points, each node in the output requires $\rho^{O(1)}$ charge.
      By Lemma~\ref{lem:promotion}, the output is semi-compressed and Theorem~\ref{thm:semi-compressed_linear_size} implies that it has $O(\rho^{O(1)}n)$ size.
      Thus, the total cost of all promotions for all $n$ points does not exceed $O(\rho^{O(1)}n)$.

      Notice that when a point is inserted to the tree for the first time, it does not necessarily have any other relatives.
      However, the insertion occurs only once for each point and it requires $\rho^{O(1)}$ time.
      Therefore, all insertions can be done in $O(\rho^{O(1)}n)$ time.
    \end{proof}


  \section{Randomized Incremental Construction} 
\label{sec:point_location}
  In this section, we show how to eagerly compute the centers of all uninserted points.
  The centers are updated each time either a new node is added or an existing node is deleted by doing a local search among parents, children, and relatives of the node.
  We show that the following invariant is satisfied after each insertion or deletion.

  \begin{invariant*}
    The centers of all uninserted points are correctly maintained.
  \end{invariant*}

  In Section~\ref{sec:pl_algorithm}, we present the point location algorithm.
  Then, in Section~\ref{sec:pl_events}, we show that for a random ordering of points, the point location takes $O(n\log n)$ time in expectation.
  As this point location work is the main bottleneck in the algorithm, the following theorem is main contribution of this paper.

  \begin{theorem}
    Given a random permutation $\pi=\langle p_1,\ldots,p_n\rangle$.
    A net-tree $T\in \nt(\tau,\frac{c_p(\tau-1)-2c_c}{2(\tau-1)},\frac{c_c\tau}{\tau-1})$ with $c_r=\frac{2c_c\tau}{\tau-4}$ can be constructed from $\pi$ in $O(\rho^{O(1)}n\log n)$ expected time, where $\tau\ge \max\{5,\frac{2c_c}{c_p}+2\}$ and $0<c_p\le c_c<\frac{c_p(\tau-1)}{2}$ are constants.
  \end{theorem}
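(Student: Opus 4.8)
The plan is to assemble the theorem from the machinery already developed, treating the point-location analysis of Section~\ref{sec:point_location} as the one genuinely new ingredient and the rest as bookkeeping. First I would check that the stated constants are admissible for every lemma I invoke. Since $\tau\ge 5$ we have $0<\tau-4<\tau-2<\tau-1$, so $c_r=\frac{2c_c\tau}{\tau-4}\ge\frac{2c_c\tau}{\tau-2}>\frac{2c_c\tau}{\tau-1}>\frac{c_c\tau}{\tau-1}$, which makes the hypotheses of Lemmas~\ref{lem:insert}, \ref{lem:functorial}, \ref{lem:dist_violatingnode_to_parent}, \ref{lem:promotion}, and \ref{lem:NN_center} all hold; and $\tau\ge\frac{2c_c}{c_p}+2>\frac{2c_c}{c_p}+1$ together with $0<c_p\le c_c<\frac{c_p(\tau-1)}{2}$ is exactly the hypothesis of Theorem~\ref{thm:local_to_strong_nettree}.

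Next I would argue by induction on $i$ that after processing $p_1,\ldots,p_i$ the algorithm maintains a semi-compressed tree $T_i\in\wnt(\tau,c_p,c_c)$ together with the invariant that the centers of all uninserted points are correctly stored. The base case is trivial. For the inductive step, this invariant supplies $\centersite(p_{i+1})$, so Lemma~\ref{lem:insert} shows the insertion phase produces a semi-compressed $T'\in\wnt(\tau,c_p,c_c+\frac{c_r}{\tau})$ in which only the newly created leaf node for $p_{i+1}$ can violate the tighter covering bound, and then Lemma~\ref{lem:promotion} shows bottom-up propagation restores a semi-compressed $T_{i+1}\in\wnt(\tau,c_p,c_c)$; running the update routine of Section~\ref{sec:pl_algorithm} on every created or deleted node restores the center invariant. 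Hence $T_n$ is a semi-compressed local net-tree in $\wnt(\tau,c_p,c_c)$, and Theorem~\ref{thm:local_to_strong_nettree} yields $T_n\in\nt(\tau,\frac{c_p(\tau-1)-2c_c}{2(\tau-1)},\frac{c_c\tau}{\tau-1})$; a semi-compressed net-tree still satisfies packing, covering, and nesting, so no extra surgery is needed.

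For the running time I would split the work into the point-location cost and everything else. Everything else, namely creating nodes, splitting and removing jumps, and recomputing local parents, children, and relatives during insertion and propagation, is bounded by Theorem~\ref{thm:construction_without_PL} by $O(\rho^{O(1)}n)$, the amortization being against a linear-size output via Theorem~\ref{thm:semi-compressed_linear_size}. It remains to bound the expected total point-location cost over a uniformly random $\pi$ by $O(\rho^{O(1)}n\log n)$. Here I would use a backwards analysis: fixing $P_i=\{p_1,\ldots,p_i\}$, the point $p_i$ is uniform in $P_i$, and the key claim is that the center of a fixed uninserted point $x$ relative to $P_i$, i.e.\ the lexicographic minimizer of $f(x,\cdot)$, which by Lemma~\ref{lem:NN_center} lies within an $O(1)$ factor of $\dist(x,P_i)$, is determined by only $\rho^{O(1)}$ points of $P_i$, namely those of the relevant scale's net that lie within the relative radius of $x$, a $\rho^{O(1)}$-size set by the Packing Lemma. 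Consequently the probability that inserting $p_i$ changes $x$'s center or deletes a node that $x$ points to is $\rho^{O(1)}/i$; multiplying by the $\rho^{O(1)}$ work per such change and summing over $i=1,\ldots,n$ and over all uninserted $x$ gives $\rho^{O(1)}n\sum_{i\le n}1/i=O(\rho^{O(1)}n\log n)$. Adding the two bounds proves the theorem.

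I expect the point-location analysis to be the main obstacle, for exactly the reason flagged in the introduction: the tree built on $P_i$ depends on the internal order of $p_1,\ldots,p_i$, not just on the set, so one cannot directly run the usual backwards argument on ``the tree.'' The remedy is to phrase the charging in terms of events defined purely on the permutation of metric points (Section~\ref{sec:pl_events}): basic touch events together with split-above and split-below variants that account for jumps being created and destroyed, proving each event is charged only $O(\rho^{O(1)})$ times while the expected number of events is $O(n\log n)$ via the packing argument above. Making the event definitions simultaneously cover every distance computation the point-location routine performs, including the re-searches triggered by deletions during semi-compression maintenance, while staying few in expectation, is the delicate part, and I expect it is what forces the particular choices $c_r=\frac{2c_c\tau}{\tau-4}$ and $\tau\ge 5$.
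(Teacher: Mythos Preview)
Your assembly of correctness and of the non-PL cost matches the paper exactly, and the constant-checking is right. The gap is in the point-location paragraph. The claim that ``the center of a fixed uninserted point $x$ relative to $P_i$ is determined by only $\rho^{O(1)}$ points of $P_i$'' is not well-posed: the center is a \emph{node} of the tree on $P_i$, and which nodes exist depends on the insertion order, not just the set. More importantly, even if one could make sense of it, bounding center \emph{changes} does not bound the PL cost. The cost is the number of \emph{touches}, and many touches do not change the center: split-below touches occur precisely when a jump above $\centersite(x)$ is split and $x$ is re-examined but stays in the same cell, and basic touches likewise need not move $x$. These are not controlled by any ``few points determine the center'' argument, so the $\rho^{O(1)}/i$ probability you write down does not follow.

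You correctly flag this in your last paragraph and point to Section~\ref{sec:pl_events}, but the actual mechanism there is different from your sketch. For basic touches the paper runs backwards analysis not on the center but on the \emph{metric nearest neighbor} of $p_i$ in $P_j$ (which is order-independent), and uses packing (Lemma~\ref{lem:const_basic_touch}) to show only $\rho^{O(1)}$ basic touches fit between consecutive nearest-neighbor changes. For split touches the paper introduces \emph{bunches} (Definition~\ref{def:bunch}): metrically defined, well-separated clusters near $p_i$, of which there are only $\rho^{O(1)}$ (Lemma~\ref{lem:constant_bunches}); it then defines order-only events (``$p_j$ is the farthest point in a bunch from its first point'' or ``the closest point outside a bunch'') and ties each split touch to such an event through the empty-annulus characterization of jumps (Lemma~\ref{lem:empty_annulus}). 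Merge touches are charged back to earlier split-above touches. The bunch machinery is exactly what replaces tree-dependent quantities by order-and-metric-only quantities on which backwards analysis is legitimate, and it is the missing piece in your proposal.
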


  \subsection{The Point Location Algorithm} 
\label{sec:pl_algorithm}

  We will describe a simple point location algorithm referred to as \emph{the PL algorithm} from here on.
  The idea is to store the center of each uninserted point, and for each node, a list of uninserted points whose center is that node (i.e.\ the Voronoi cell of the node).
  Formally, the \emph{cell} of a node $p^\ell$, denoted $\cell(p^\ell,T)$, is the list of points in $\vor(p^\ell)$.
  We partition the points $x$ of $\cell(p^\ell,T)$ into $\cell_{in}(p^\ell,T)$ and $\cell_{out}(p^\ell,T)$ depending on whether $\dist(p,q)\le c_p\tau^{\ell-1}/2$ or not.
  This separation saves some unnecessary distance computations.

  Each time a node is added to the tree $T$ to create a new tree $T'$, we update the centers and cells nearby.
  There are two different ways that a new node is created, either it splits a jump or it is inserted as a child of an existing node.
  If a jump from $p^h$ to $p^g$ is split at level $\ell$, then we select $\cell(p^\ell,T')$ from the nodes of $\cell(p^h,T)$.
  If $p^\ell$ is inserted as a child of $s^{\ell+1}$, then we select $\cell(p^\ell,T')$ from $\{\cell_{out}(x^h,T)\mid x^h\in\rel(s^{\ell+1})\cup\child(\rel(s^{\ell+1}))\cup\child(\rel(p^\ell))\}$.
  A node $p^\ell$ with parent $p^h$ may be removed if required by the compression.
  In such cases, $\cell_{in}(p^\ell,T)$ is added to $\cell_{in}(p^h,T')$ and the points in $\cell_{out}(p^\ell,T)$ are tested to determine which points belong to $\cell_{in}(p^\ell,T')$ or $\cell_{out}(p^\ell,T')$.

  The following lemma shows that the PL algorithm correctly maintains the invariant.
  \begin{lemma}\label{lem:pl_correct}
    The PL algorithm correctly maintains the invariant after the insertion or deletion of a node.
  \end{lemma}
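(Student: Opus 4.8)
The statement to prove is Lemma~\ref{lem:pl_correct}: the PL algorithm preserves the invariant that every uninserted point has its center correctly recorded. The proof is a case analysis over the four structural events that the PL algorithm handles — splitting a jump, inserting a new leaf-child node, removing a node during compression, and (implicitly) the fact that nothing else changes the tree. For each case I would argue two things: (a) \emph{soundness}, every point that the algorithm reassigns to a newly created or modified node actually belongs to that node's Voronoi cell in $T'$; and (b) \emph{completeness}, no point whose center changes is overlooked, i.e.\ the candidate set the algorithm searches provably contains the new center.

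The key observation driving completeness is \emph{locality}: when a node $p^\ell$ is created, the only points whose center can change to $p^\ell$ are points $x$ with $f(x,p^\ell)<f(x,\centersite_T(x))$, which forces $\dist(x,p)\le c_r\tau^\ell$ and $\dist(x,p)$ no larger (lexicographically) than the distance to $x$'s old center, which itself was a relative-eligible node at level $\ge$ something comparable. Using the net-tree packing/covering properties (Lemma~\ref{lem:covering}) together with the relative bound $c_r=\frac{2c_c\tau}{\tau-4}$, one shows that the old center of any such $x$ must be a relative of $p^\ell$, or a child of a relative of $p^\ell$'s parent, or a child of a relative of $p^\ell$ — exactly the set $\rel(s^{\ell+1})\cup\child(\rel(s^{\ell+1}))\cup\child(\rel(p^\ell))$ that the algorithm scans (and for the jump-split case, the coarser set $\cell(p^h,T)$ suffices since the split node sits strictly between $p^h$ and $p^g$ on a single root-to-leaf segment, so any point whose center it could steal had center on that segment, i.e.\ in $\cell(p^h,T)$). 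The $\cell_{in}/\cell_{out}$ split is a correctness-neutral optimization: points in $\cell_{in}(x^h,T)$ are so close to $x$ (within $c_p\tau^{h-1}/2$) that by the packing property no newly created node at the relevant level can be closer, so they can be skipped — I would verify this inequality explicitly but it is a one-line triangle-inequality argument.

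For the deletion case, when $p^\ell$ with parent $p^h$ (same associated point $p$, a jump being re-merged) is removed, every point formerly in $\vor(p^\ell)$ must land in $\vor(p^h)$ in $T'$: since $p$ is still in the tree and $p^h$ is now the unique surviving node for $p$ above level $\ell$, and removal does not bring any other point closer, the lexicographic minimum of $f(x,\cdot)$ over $T'$ is attained at $p^h$ for exactly these $x$. The $\cell_{in}$ points move wholesale; the $\cell_{out}$ points are re-bucketed into in/out by a single distance test against the new threshold $c_p\tau^{h-1}/2$. I would also note that deletion cannot change the center of any point currently assigned elsewhere, since removing a node only removes competitors in the $f$-order.

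\textbf{Main obstacle.} The delicate part is the completeness argument in the child-insertion case: proving that the search set $\rel(s^{\ell+1})\cup\child(\rel(s^{\ell+1}))\cup\child(\rel(p^\ell))$ is genuinely exhaustive. This requires showing that if $x$'s center changes to $p^\ell$, then $x$'s \emph{old} center $q^m$ satisfies $q^m\in$ that set. The level $m$ of the old center could in principle be anything $\ge$ roughly $\ell$; one must rule out the old center sitting on a distant branch. The argument hinges on: $\dist(x,q)\le\dist(x,p)\le c_r\tau^\ell$ (lexicographic minimality plus relative-eligibility), the chosen value of $c_r$ making $c_r\tau^\ell$ small enough relative to $c_c\tau^{m+1}$ that $q^m$ (or its parent) must be within relative distance of $p^h=s^{\ell+1}$ at level $\ell+1$, and then Lemma~\ref{lem:functorial}-style consistency of relatives up the tree to conclude $q^m\in\child(\rel(s^{\ell+1}))$ or $q^m\in\rel(s^{\ell+1})$ itself. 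Getting the constants to line up here is where the hypothesis $c_r=\frac{2c_c\tau}{\tau-4}$ and $\tau\ge 5$ are actually used, and it is the step I would write most carefully.
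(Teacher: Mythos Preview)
Your plan matches the paper's proof: the same case split (jump-split, child-insertion, deletion), the same completeness argument for the child-insertion case by pinning down the level of the old center (the paper shows $\ell-1\le m\le\ell+1$ via two short contradiction arguments and then reads off membership in $\rel(s^{\ell+1})\cup\child(\rel(s^{\ell+1}))\cup\child(\rel(p^\ell))$), and the same packing contradiction for why points in $\cell_{in}$ need not be examined. One sign to flip when you execute the obstacle paragraph: since $p^\ell$ is the \emph{new} center of $x$, lexicographic minimality gives $\dist(x,p)\le\dist(x,q)$, not $\dist(x,q)\le\dist(x,p)$ as you wrote; the paper uses precisely this direction to rule out $m<\ell-1$, and for $m>\ell+1$ it uses instead that the old center $q^m$ beat $s^{\ell+1}$ in $T$, i.e.\ $\dist(x,q)<\dist(x,s)$.
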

  \begin{proof}
    First, we prove that after deletion of a node $p^\ell$, the PL algorithm correctly updates the center of every uninserted point $q\in \cell(p^\ell,T)$.
    Let $p^h$ be the parent of $p^\ell$.
    Note that by the definition of a center, $p^h$ should be the new center for $q$ after deletion of $p^\ell$.
    If $q\in\cell_{in}(p^\ell,T)$, then $\dist(p,q)\le c_p\tau^{\ell-1}/2<c_p\tau^{h-1}/2$, which means $q\in\cell_{in}(p^h,T')$.
    Otherwise, $q$ can belong to the inner or the outer cell of $p^h$.

    Now, we prove that if $p^\ell$ is added and $q\in\cell(p^\ell,T')$, then $q$ belongs to the set of nearby uninserted points of $p^\ell$.
    \begin{enumerate}
      \item[(a)] \emph{$p^\ell$ splits a jump from $p^h$ to $p^g$}:
      Since $q\in\cell(p^{\ell},T')$ and $T'$ has only one node $p^\ell$ more than $T$, $q$ should have been in a cell of a node of $p$.
      Also, $\dist(p,q)>c_r\tau^g$, because otherwise $p^g$ should be the center of $q$ in $T$.
      So, $\dist(p,q)\le c_r\tau^\ell<c_r\tau^h$, which means $q\in \cell(p^h,T)$.
      \item[(b)] \emph{$p^\ell$ is inserted as a child of $s^{\ell+1}$}:
      \label{prf:pl_correctb}
      First we show that $\dist(s,q)\le c_r\tau^{\ell+1}$.
      From Lemma~\ref{lem:dist_violatingnode_to_parent}, $\dist(p,s)\le (c_c+c_r/\tau)\tau^{\ell+1}$.
      By the triangle inequality, $\dist(s,q)\le \dist(s,p)+\dist(p,q)\le (c_c+2c_r/\tau)\tau^{\ell+1}$.
      For $c_r\ge c_c\tau/(\tau-2)$, $\dist(s,q)\le c_r\tau^{\ell+1}$.
      So, there exists at least one node in level $\ell+1$ that can be served as the center of $q$ before the insertion of $p^\ell$ and it is $s^{\ell+1}$.
      However, $q$ might be closer to any other nodes, so $q$ is not necessarily in $\cell(s^{\ell+1},T)$.

      If $q\in \cell(x^h,T)$, then we show that $\ell-1\le h\le\ell+1$.
      Suppose for contradiction that $h<\ell-1$.
      Then,
      \begin{align*}
        \dist(p,x)
        &\le\dist(p,q)+\dist(q,x) \because{by the triangle inequality} \\
        &<\dist(q,x)+\dist(q,x) \because{$\dist(p,q)<\dist(q,x)$ because $q\in \cell(p^\ell,T')$} \\
        &\le 2c_r\tau^h \because{$q\in \cell(x^h,T)$} \\
        &\le c_r\tau^{\ell-1} \because{$h\le\ell-2$ and $\tau\ge 2$}
      \end{align*}
      Therefore, $p^{\ell-1}\sim x^{\ell-1}$ and $q\in\cell(p^{\ell-1},T)$, which is a contradiction because $p$ is inserted at level $\ell$.

      Suppose for contradiction that $h>\ell+1$.
      Then,
      \begin{align*}
        \dist(x,s)
        &\le\dist(x,q)+\dist(q,s) \because{by the triangle inequality} \\
        &<\dist(q,s)+\dist(q,s) \because{$\dist(x,q)<\dist(q,s)$ because $q\in \cell(x^h,T)$} \\
        &\le 2(\dist(q,p)+\dist(p,s)) \because{by the triangle inequality} \\
        &\le 2(c_r/\tau+c_c+c_r/\tau)\tau^{\ell+1} \because{by Lemma~\ref{lem:dist_violatingnode_to_parent}, $\dist(p,s)\le(c_c+c_r/\tau)\tau^{\ell+1}$} \\
        &\le c_r\tau^{\ell+1} \because{for $c_r\ge 2c_c\tau/(\tau-4)$}
      \end{align*}
      So, $s^{\ell+1}\sim x^{\ell+1}$.
      Also, $\dist(q,x)<\dist(q,s)\le c_r\tau^{\ell+1}$.
      Therefore, $q\in\cell(x^{\ell+1},T)$, which is a contradiction.
      It is easy to see that $x^h$ belongs to $\rel(s^{\ell+1})\cup\child(\rel(s^{\ell+1}))\cup\child(\rel(p^\ell))$.

      Finally, we prove that the points in cell $\cell(p^\ell,T')$ are in the outer cells of the nearby nodes.
      For contradiction, suppose that $q\in \cell_{in}(x^h,T)$, where $\ell-1\le h\le\ell+1$.
      So, $\dist(q,x)\le c_p\tau^{h-1}/2$.
      Then, $\dist(p,q)<\dist(q,x)$ and by the triangle inequality, $\dist(p,x)\le\dist(p,q)+\dist(q,x)<2\dist(q,x)\le c_p\tau^{h-1}$.
      If $\ell\le h\le\ell+1$, then $\dist(p,x)\le c_p\tau^\ell$ and it contradicts with the packing property at level $\ell$.
      Otherwise, if $h=\ell-1$, then $\dist(p,x)\le c_p\tau^{\ell-2}$ and it also contradicts with the packing property at level $\ell-1$.\qedhere
    \end{enumerate}
  \end{proof}


  \subsection{Analysis of the PL Algorithm} 
\label{sec:pl_events}
  When a node of $p$ checks an uninserted point $q$ to see if $q$ belongs to its cell, we say $p$ \emph{touches} $q$.
  To analyze the point location algorithm, we should count the total number of touches, because each touch corresponds to a distance computation.
  Note that a point does not change its center each time it is touched.
  This is the main challenge in point location, to avoid touching a point too many times unnecessarily.

  We classify the touches into three groups of \emph{basic touches}, \emph{split touches}, and \emph{merge touches}.
  Then, we use a backwards analysis to bound the expected number of such touches.
  The standard approach of using backwards analysis for randomized incremental constructions will not work directly for the tree construction, because the structure of the tree is highly dependent on the order the points were added.
  Instead, we define random events that can happen for each point $p_i$ of a permutation $\langle p_1,\ldots ,p_n\rangle$ in $P_j\coloneqq\{p_1,\ldots,p_j\}$, where $j<i$.
  These events are defined only in terms of the points in the permutation, and do not depend on a specific tree.
  We show that each point is involved in $O(\log n)$ such events.
  Later, we show that the touches all correspond to these random events.

  \begin{figure}[!tbh]
    \centering
    \begin{subfigure}[b]{0.47\columnwidth}
      \includegraphics[width=\textwidth]{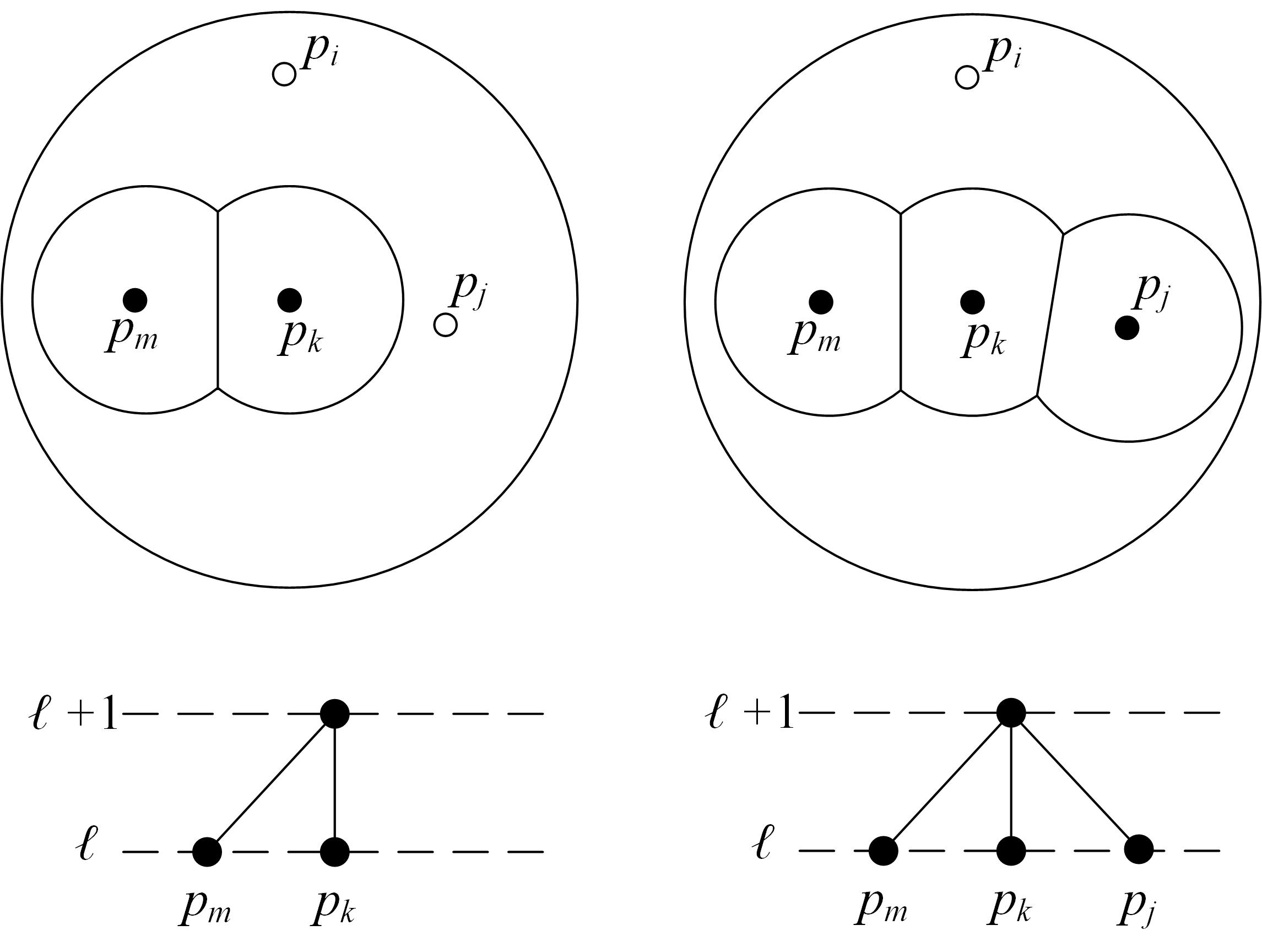}
      \centering
      \caption{}
      \label{fig:basictouch}
    \end{subfigure}
    \quad\quad
    \begin{subfigure}[b]{0.47\columnwidth}
      \includegraphics[width=\textwidth]{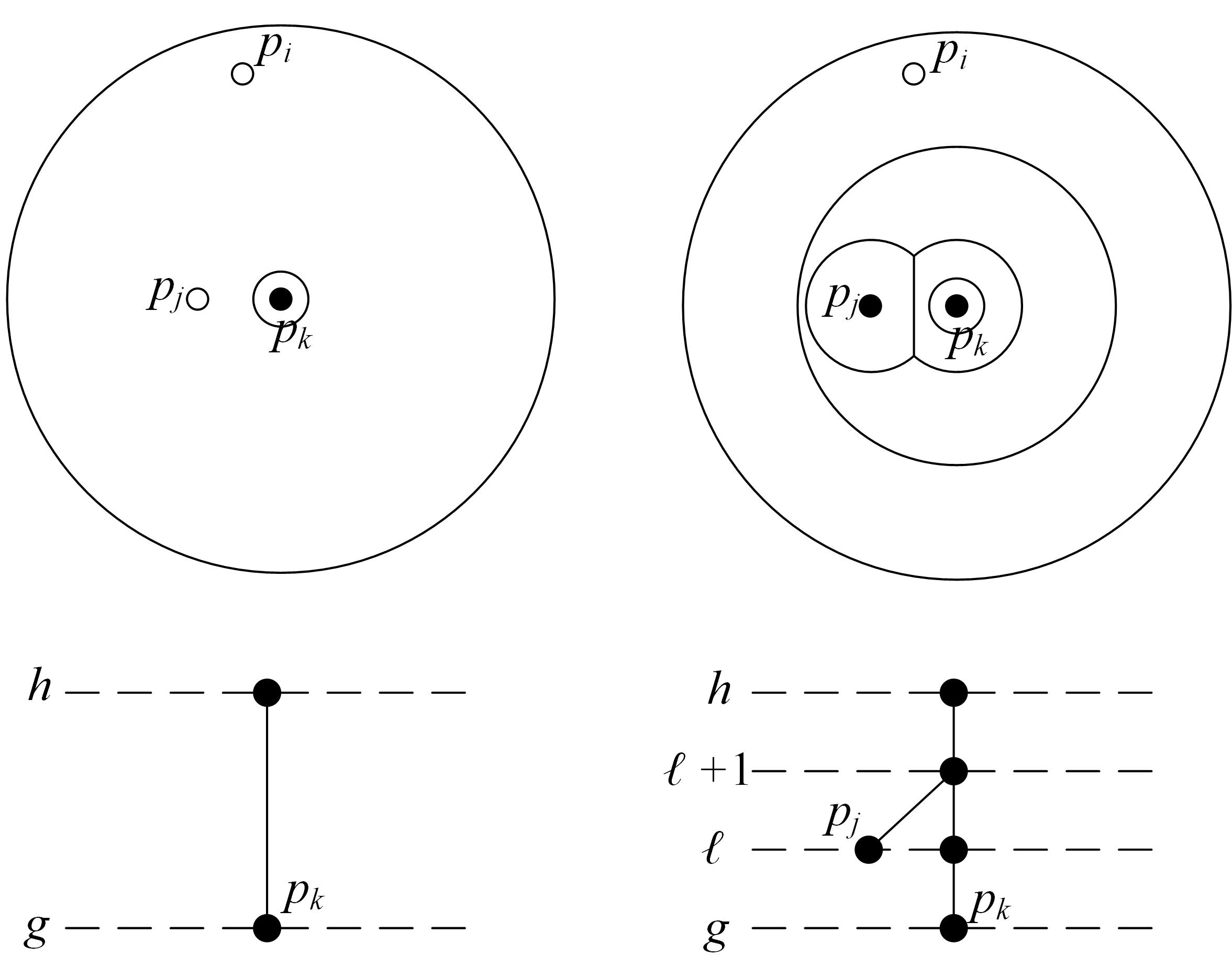}
      \centering
      \caption{}
      \label{fig:splittouch}
    \end{subfigure}
    \caption{The approximate Voronoi diagrams at the top are induced on the part of net-trees at the bottom.
    White dots show the uninserted points.
    (a) The insertion of $p_j$ at level $\ell$ results a basic touch from $p_j$ on $p_i$.
    Before the insertion, $p_i$ and $p_j$ belong to cell $\cell(p_k^{\ell+1},T)$, and after that, $p_i$ remains in the same cell.
    (b) The insertion of $p_j$ at level $\ell$ results a split touch from $p_k$ on $p_i$.
    Before the insertion, $p_i$ and $p_j$ belong to cell $\cell(p_k^{h},T)$, and after that, $p_i$ remains in the same cell.}
    \label{fig:touch}
  \end{figure}

  If $p_i$ is touched by a new point $p_j$, then we say a \emph{basic touch} has happened, see Fig.~\ref{fig:basictouch}.
  If $p_i$ is touched by the point of $\centersite(p_i)$ after the insertion of $p_j$, then a \emph{split touch} has happened, see Fig.~\ref{fig:splittouch}.
  Intuitively, a split touch in the tree occurs when $\centersite(p_i)$ is the top of a jump and the insertion of $p_j$ results that jump to be split at a lower level.
  By the PL algorithm, the cell of a new node can be found from the cell of its parent.
  Therefore, $p_i$ and all other points in the cell of $\centersite(p_i)$ will be touched by the point of $\centersite(p_i)$ at a smaller scale.
  A split touch is either below or above, which will be discussed later.
  Similarly, If $p_i$ is touched by the point of $\centersite(p_i)$ after the deletion of $\centersite(p_i)$ triggered by the insertion of $p_j$, then a \emph{merge touch} has happened.
  In other words, a merge touch occurs if the insertion of $p_j$ results $\centersite(p_i)$ to be deleted and its adjacent edges merged to a jump.
  In this case, the PL algorithm moves $p_i$ and other points in the cell of $\centersite(p_i)$ to the cell of the parent of $\centersite(p_i)$.
  For the sake of simplicity, we abuse the notion of touches for split and merge cases in the following way.
  If in a split or merge touch, the point of $\centersite(p_i)$ touches $p_i$, then we charge $p_j$ for that touch and we say that $p_j$ touches $p_i$.

  \begin{lemma}
  \label{lem:const_touch}
    A point $p_j$ can touch $p_i$ at most $\rho^{O(1)}$ times.
  \end{lemma}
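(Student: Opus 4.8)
The plan is to bound separately the three types of touches -- basic, split, and merge -- that $p_j$ can inflict on $p_i$, and show each happens only $\rho^{O(1)}$ times. The unifying principle is that every touch occurs at a node $x^m$ whose associated point $x$ is either $p_j$ itself or a point lying within a constant factor (in units of $\tau^m$) of both $p_j$ and $p_i$, and that the levels $m$ at which touches can occur span only a constant range. Once both of these are established, the Packing Lemma caps the number of admissible $(x,m)$ pairs by a function of $\rho$ and the net-tree constants, and each such pair accounts for $O(1)$ touches.

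First I would handle basic touches. When $p_j$ is inserted and creates a node $x^m$ (either $x=p_j$ at its insertion level, or $x=p_j$'s center promoted, or a center-node whose jump gets split), the node $x^m$ checks $p_i$ only if $\dist(x,p_i)\le c_r\tau^m$. By Lemma~\ref{lem:insert} and Lemma~\ref{lem:dist_violatingnode_to_parent}, the new nodes created for $p_j$ all have $\dist(p_j,x)\le(c_c+c_r/\tau)\tau^{m+1}<c_r\tau^{m+1}$, so $x$ is within a constant factor of $p_j$ at scale $\tau^m$; combined with $\dist(x,p_i)\le c_r\tau^m$ this forces $\dist(p_j,p_i)=O(\tau^m)$. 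Since $p_i$ is uninserted at this point, its center before insertion has scale at least $\lceil\log_\tau(\dist(p_j,p_i)/c_r)\rceil$ or so, pinning the relevant levels $m$ to a constant-width window determined by $\dist(p_j,p_i)$. Within that window, the distinct points $x$ that can host a touching node are pairwise $>c_p\tau^m$ apart and all lie in a ball of radius $O(\tau^m)$ around $p_j$, so by the Packing Lemma there are only $\rho^{O(1)}$ of them, and the bottom-up propagation creates at most one node of $p_j$ per level; hence $\rho^{O(1)}$ basic touches.

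Next, split and merge touches. Here the touching node is $\centersite(p_i)=x^h$ (with $x\ne p_j$ in general), and the touch is charged to $p_j$. For a split touch, the jump topped by $x^h$ is split at some level $\ell$ by $p_j$'s insertion, so $p_j$ lands strictly inside the interval spanned by that jump and, crucially, $p_j$ must be close to $x$: roughly $\dist(x,p_j)=O(\tau^\ell)$ because $p_j^\ell$ becomes a child-branch of $x$. Meanwhile $p_i\in\vor(x^h)$ gives $\dist(x,p_i)\le c_r\tau^h$, and the split level $\ell$ is within a constant of $h$ (the PL algorithm only moves cells down, and once $\dist(x,p_i)\le c_r\tau^\ell$ fails, $p_i$ leaves the cell). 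So again all three of $x$, $p_i$, $p_j$ are mutually within $O(\tau^h)$, and -- fixing $p_i$ and $p_j$ -- the number of distinct center-points $x$ that can be split-touched is $\rho^{O(1)}$ by packing, each over a constant range of levels. The merge case is symmetric: deletion of $x^h$ caused by $p_j$'s insertion again forces $p_j$ near $x$ (the deleted node had $p_j$-related nodes as its only relatives), and $p_i$ inherits the parent; the same packing argument applies.

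The main obstacle will be the split/merge bookkeeping -- specifically, pinning down exactly which node plays the role of $x$ and why $p_j$'s insertion must place $p_j$ within $O(\tau^h)$ of $x$ when $x$ is neither $p_j$ nor adjacent to $p_j$ in the tree. This requires carefully unwinding what ``the insertion of $p_j$ causes the jump above $x^h$ to split'' means in terms of the insertion and promotion rules of Section~\ref{sub:insertion} and Section~\ref{sub:bottom_up_propagation}: the jump splits only because some new node of $p_j$ (or of $p_j$'s displaced center) needs a parent or child on the jump, and the parent/child relations are all governed by the relative constant $c_r$, so the geometric closeness is forced -- but making this rigorous is where the real work lies. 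Everything after that (choosing the level window, applying Lemma~\ref{lem:packing} with $r=O(\tau^h)$ and $r'=c_p\tau^h$, and Lemma~\ref{lem:ntsize}) is routine.
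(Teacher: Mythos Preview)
Your decomposition into basic, split, and merge touches matches the paper, but the arguments for the split and merge cases have real gaps, and the packing apparatus you build is not what actually does the work.

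For \textbf{basic touches}, the paper's argument is much more elementary than yours: when $p_j$ sits at level $m$ it only scans cells whose level lies in $\{m-1,m,m+1\}$ (this is literally the rule in Section~\ref{sec:pl_algorithm}). Since $\centersite(p_i)$ is at one fixed level, $p_j$ can meet it for at most three values of $m$. There is nothing to pack---the touching point is $p_j$ itself, always---so your enumeration of candidate points $x$ and the subsequent Packing Lemma step are idle.

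For \textbf{split touches}, your central claim that ``the split level $\ell$ is within a constant of $h$'' (with $x^h=\centersite(p_i)$ the top of the jump) is false as stated. Jumps can be arbitrarily long, and $p_j$ may first land near the \emph{bottom} of the jump, so $h-\ell$ is unbounded. What actually caps the count is a different mechanism: each split touch after the first two comes from a promotion step of $p_j$ whose parent is the same point $q$ (the point of $\centersite(p_i)$), and Lemma~\ref{lem:dist_violatingnode_to_parent} forces $\dist(p_j,q)\le c_c\tau^{\ell+k}$ after $k\ge\lceil\log_\tau(c_r/c_c+\tau)\rceil$ steps, at which point covering is restored and promotion with parent $q$ stops. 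The bound is $O(1)$, and the Packing Lemma never enters. Your justification ``once $\dist(x,p_i)\le c_r\tau^\ell$ fails, $p_i$ leaves the cell'' has the direction backwards: split levels \emph{increase} during promotion, so this condition gets easier, not harder.

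For \textbf{merge touches}, you are missing the mechanism entirely. The paper's bound rests on the inner/outer cell partition: a merge only checks $\cell_{out}$, and after at most $1+\log_\tau(2c_r/c_p)$ successive merges the point $p_i$ necessarily satisfies $\dist(p_i,p_m)\le \tfrac{1}{2}c_p\tau^{\ell+k-1}$ and falls into an inner cell, where it is shielded from all further merge checks. Your sketch (``$p_j$ near $x$, same packing argument'') gives no reason the chain of deletions terminates.

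In short, the lemma is proved by three short, algorithm-specific counting arguments (level-window of size three; bounded promotion with a fixed parent; absorption into an inner cell), each yielding an $O(1)$ bound in terms of $\tau,c_r,c_c,c_p$ only. The $\rho^{O(1)}$ in the statement is slack; no packing is required.
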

  \begin{proof}
    First we count the number of basic touches.
    Note that when we promote $p_j$ to a higher level, $p_j$ might touch $p_i$ more than once.
    From the algorithm in Section~\ref{sec:pl_algorithm}, the promoting node only checks nearby cells from one level down to one level up.
    Therefore, $p_j$ can only touch $p_i$ at most three times.

    Now, we compute the number of split touches.
    When $p_j$ splits a jump on $\centersite(p_i)$, it may create two new nodes for $\centersite(p_i)$, see Fig.~\ref{fig:splittouch}.
    So, $p_i$ can be touched by $p_j$ at most twice.
    If $p_j$ requires to be promoted to a higher level, $p_i$ may receive more touches.
    This case only happens when $p_i$ is touched, but its center remains unchanged.
    Let $p_j$ be inserted at level $\ell$ and $q^{\ell+1}\coloneqq\parent(p_j^\ell)$.
    From Lemma~\ref{lem:dist_violatingnode_to_parent}, $\dist(p_j,q)\le c_r\tau^\ell+c_c\tau^{\ell+1}$.
    In the following, we will show that in the promotion process, $q$ cannot touch $p_i$ more than $\log_\tau (c_r/c_c+\tau)$ times.
    To prove this bound, we show that the promotion cannot continue more than $k>1$ levels above $\ell$ with the same parent $q$.
    In other words, $\dist(p_j,q)\le c_c\tau^{\ell+k}$, which satisfies the covering property.
    So $c_r\tau^\ell+c_c\tau^{\ell+1}\le c_c\tau^{\ell+k}$, which results $k\ge\lceil\log_\tau (c_r/c_c+\tau)\rceil$.
    Therefore, the total number of split touches from $p_j$ on $p_i$ is also constant.

    Finally, we prove that the number of merge touches is also constant.
    Recall that when a node is deleted from the tree, the PL algorithm only checks the uninserted points in its outer cell to determine which points should be moved to the inner or outer cells of its parent.
    Let $p_m^{\ell}$ be the node to be deleted and $p_i\in\cell(p_m^\ell,T)$.
    Here, we wish to find a level $\ell+k$, where $k>0$, such that $p_i$ goes to the inner cell of $p_m^{\ell+k}$ and as such does not recieve more merge touches from $p_j$.
    Therefore, $\dist(p_i,p_m)\le c_p\tau^{\ell+k-1}/2$.
    By the definition of a center, $p_i$ cannot be in a distance farther that $c_r\tau^\ell$ from $p_m$.
    So, we have $c_r\tau^{\ell}\le c_p\tau^{\ell+k-1}/2$, which results $k\ge 1+\log_\tau(2c_r/c_p)$.
    Therefore, $p_j$ can only touch $p_i$ a constant number of times via merge touches.
  \end{proof}

  In this section, our goal is finding the expected number of touches for each uninserted point in a random permutation.
  In Section~\ref{ssub:basic_touches}, we show that the expected number of basic touches is $O(n\log n)$.
  In Section~\ref{ssub:split_and_merge_touches}, we prove that the expected number of split touches is $O(n\log n)$, and then we show that the number of merge touches is bounded by the number of split touches.
  The following theorem states the total cost of point location.

  \begin{theorem}
    \label{thm:pl_time}
    The expected running time of point location in the randomized incremental construction algorithm is $O(\rho^{O(1)}n\log n)$.
  \end{theorem}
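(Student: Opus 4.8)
The plan is to reduce the cost of the PL algorithm to a count of \emph{touches} and then bound that count in expectation by a backwards analysis. Every distance computation the algorithm performs occurs when some node of a point $p$ tests an uninserted point $q$ against its cell, i.e.\ when $p$ touches $q$; the remaining bookkeeping when a node is created (by splitting a jump or being inserted as a child) or deleted is, per such structural change, proportional to $\rho^{O(1)}$ plus the number of touches it triggers, and by \Cref{thm:semi-compressed_linear_size} together with the amortization in \Cref{thm:construction_without_PL} there are only $\rho^{O(1)}n$ structural changes in all. Hence, up to an additive $O(\rho^{O(1)}n)$, the running time of point location is proportional to the total number of touches.

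Next I would invoke \Cref{lem:const_touch}: a single ordered pair $(p_j,p_i)$ with $j<i$ can account for at most $\rho^{O(1)}$ touches, so it suffices to bound the expected number of \emph{touching pairs} (pairs for which $p_j$ touches $p_i$ at least once), separated into basic, split, and merge touching pairs according to the classification of Section~\ref{sec:pl_events}. I would handle these independently: in \Cref{ssub:basic_touches} the expected number of basic touching pairs is shown to be $O(n\log n)$; in \Cref{ssub:split_and_merge_touches} the expected number of split touching pairs is shown to be $O(n\log n)$ and the number of merge touching pairs is bounded by the number of split touching pairs. Summing these three quantities, multiplying by the $\rho^{O(1)}$ factor from \Cref{lem:const_touch}, and adding the $O(\rho^{O(1)}n)$ overhead from the first step yields the claimed $O(\rho^{O(1)}n\log n)$.

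The crux — and the step I expect to be the real obstacle — is the backwards analysis behind the two $O(n\log n)$ bounds, because the net-tree is not a canonical function of $P$, so one cannot simply delete a random point and track what changes. The idea is to replace the tree by a purely combinatorial proxy: for each $p_i$ and each $j<i$, define an event depending only on the prefix $P_j$ (and on $p_i$), arrange that a basic (respectively split) touch between $p_j$ and $p_i$ implies the corresponding event, and prove that for a random permutation a fixed $p_i$ lies in only $O(\log i)$ of these events in expectation. Concretely, such a touch forces $\dist(p_i,p_j)$ to be within a constant factor — governed by the packing and covering constants and by the choice $c_r=\frac{2c_c\tau}{\tau-4}$, exactly as in the case analysis of \Cref{lem:pl_correct} — of the scale at which $p_j$ entered the tree, and that scale is essentially $\dist(p_j,P_{j})$; this makes $p_j$ one of only $\rho^{O(1)}$ points of $P_j$ that are near $p_i$ at the relevant scale. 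For a uniformly random ordering the expected number of $j\le i$ for which the $j$-th point is among the $\rho^{O(1)}$ nearest points of $P_j$ to $p_i$ is $\rho^{O(1)}\sum_{j\le i}\tfrac1j=O(\log i)$, so summing over $i$ gives $O(n\log n)$ basic touching pairs; the same argument applied scale by scale gives $O(n\log n)$ split touching pairs, and a direct charging argument bounds merge by split. What then remains is routine: checking that each event is blamed by only $\rho^{O(1)}$ touches, and pushing the packing estimates through with the stated constants $\tau\ge\max\{5,\tfrac{2c_c}{c_p}+2\}$ and $0<c_p\le c_c<\tfrac{c_p(\tau-1)}{2}$.
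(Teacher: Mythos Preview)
Your first two paragraphs are exactly the paper's proof: reduce the PL cost to a count of touches, note the $O(\rho^{O(1)}n)$ overhead for structural changes via \Cref{thm:construction_without_PL} and \Cref{thm:semi-compressed_linear_size}, invoke \Cref{lem:const_touch}, and then cite \Cref{ssub:basic_touches} and \Cref{ssub:split_and_merge_touches} for the three classes of touches. That is all the theorem needs, and it matches the paper.

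The third paragraph, however, is not a faithful sketch of what those subsections actually do, and the basic-touch part contains a real gap. You claim that a basic touch forces $p_j$ to be ``one of only $\rho^{O(1)}$ points of $P_j$ that are near $p_i$ at the relevant scale'' and then run the $k$-nearest backwards analysis. But a basic touch only gives $\dist(p_i,p_j)\le C\,\dist(p_i,P_{j-1})$ (\Cref{lem:NN_touching_point}); the ball $\ball(p_i, C\,\dist(p_i,P_{j-1}))$ can contain arbitrarily many points of $P_j$, so $p_j$ need not be among the $\rho^{O(1)}$ nearest. The paper instead partitions time into \emph{phases} during which the nearest neighbor of $p_i$ in $P_j$ is unchanged; within a phase the touching points are all at level $\ge h-1$ where $h$ is tied to $\dist(p_i,P_j)$, hence are $c_p\tau^{h-1}$-separated and confined to a ball of radius $O(\dist(p_i,P_j))$, giving the $\rho^{O(1)}$ bound per phase (\Cref{lem:const_basic_touch}). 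The backwards analysis is then on the event ``$p_j$ is the unique nearest neighbor of $p_i$ in $P_j$'', which has probability $1/j$.

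For split touches, ``the same argument applied scale by scale'' substantially understates what is needed. A split touch on $p_i$ is triggered not by $p_j$ being close to $p_i$ but by $p_j$ breaking a jump whose top is $\centersite(p_i)$; the relevant scale is $\dist(p_i,p_k)$ for $p_k$ the center, not $\dist(p_j,P_{j-1})$. The paper replaces the tree by a permutation-only proxy called a \emph{bunch} near $p_i$: a cluster $\ball(x,\alpha\,\dist(p_i,x))\cap P_j$ surrounded by an empty annulus out to radius $\beta\,\dist(p_i,x)$, with $x$ an approximate nearest neighbor of $p_i$. Jumps correspond to empty annuli (\Cref{lem:empty_annulus}), so a split-below (resp.\ split-above) touch forces $p_j$ to be, up to a constant charging, the farthest point inside (resp.\ closest point outside) some bunch near $p_i$; there are only $\rho^{O(1)}$ bunches (\Cref{lem:constant_bunches}), and for each the relevant event has probability $O(1/j)$. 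Merge touches are then charged to earlier split-above touches. This bunch machinery is the actual content of \Cref{ssub:split_and_merge_touches}; your sketch does not capture it.
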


  \subsubsection{Basic Touches} 
  \label{ssub:basic_touches}
    In this section, we first prove that the distance of every point touching $p_i$ with a basic touch is bounded by the distance of $p_i$ to its nearest neighbor among the inserted points.
    Using this observation, we divide a permutation into phases, where each phase is an interval in which the nearest neighbor of $p_i$ remains unchanged, see Fig.~\ref{fig:phase_basic_touch}.
    We show that the number of basic touches on $p_i$ in each phase is constant.
    Then, using a backwards analysis we show that the expected number of phases for each point in $O(\log n)$.
    Therefore, the expected number of basic touches for all points is $O(n\log n)$.

    \begin{figure}
      \centering
      \includegraphics[width=0.60\columnwidth]{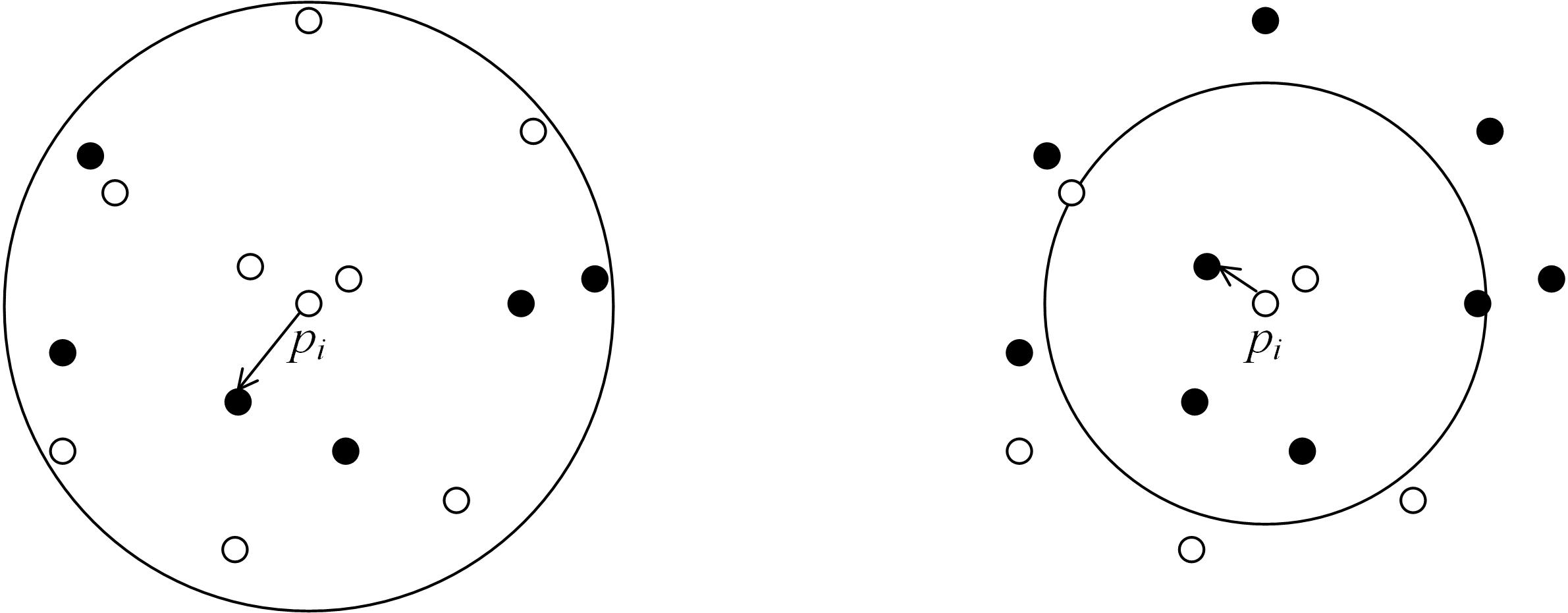}
      \caption{
      The solid and white dots show the inserted and uninserted points, respectively.
      An arrow shows the nearest neighbor of $p_i$ among the inserted points.
      The uninserted points in each ball are the only points that may touch $p_i$ via a basic touch.
      With the change of the nearest neighbor, the ball containing touching points shrinks.
      The points that get inserted from the the left figure to the right figure belong to the same phase.}
      \label{fig:phase_basic_touch}
    \end{figure}

    \begin{lemma}
    \label{lem:NN_touching_point}
      If $p_j$ touches $p_i$ via a basic touch, then $\dist(p_i,p_j)<\frac{24 c_c\tau^4(\tau-1)}{c_p(\tau+2)(\tau-4)}\dist(p_i,P_{j-1})$
    \end{lemma}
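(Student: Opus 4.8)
The plan is to follow a single basic touch back to a geometric configuration among $P_{j-1}$, $p_i$, and $p_j$, and then combine two facts: the node of $p_j$ that performs the touch must lie near $\centersite(p_i)$ in the tree $T$ built on $P_{j-1}$, and that center is simultaneously a constant-factor approximate nearest neighbor of $p_i$ (Lemma~\ref{lem:NN_center}) and \emph{far} from $p_i$, because $p_i$ sits in the \emph{outer} part of its cell.

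First I would fix notation. At the moment $p_j$ is inserted, $T$ is the semi-compressed local net-tree on $P_{j-1}$, and by Lemma~\ref{lem:promotion} we have $T\in\wnt(\tau,c_p,c_c)$. By definition a basic touch on $p_i$ is performed by some node $p_j^m$ of $p_j$ created during the insertion or the bottom-up promotion. Inspecting the PL algorithm, such a node tests exactly the points of $\{\cell_{out}(x^h,T)\mid x^h\in\rel(s^{m+1})\cup\child(\rel(s^{m+1}))\cup\child(\rel(p_j^m))\}$, where $s^{m+1}\coloneqq\parent(p_j^m)$. Hence there is a node $x^h$ with $\centersite(p_i)=x^h$ in $T$, with $h\in\{m-1,m,m+1\}$ (relatives stay on one level, children drop one level), and with $\tfrac{c_p}{2}\tau^{h-1}<\dist(p_i,x)\le c_r\tau^h$ (the first inequality because $p_i\in\cell_{out}(x^h,T)$, the second from the definition of $\vor(x^h)$).

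Second, I would bound $\dist(p_i,p_j)$ from above. Lemma~\ref{lem:dist_violatingnode_to_parent}, whose base case is Lemma~\ref{lem:insert}, gives $\dist(p_j,s)\le(c_c+\tfrac{c_r}{\tau})\tau^{m+1}\le\tfrac{c_r}{2}\tau^{m+1}$. A short case analysis over the three sets that may contain $x^h$, using the local covering bound $c_c\tau^{\mathrm{level}+1}$ for child-to-parent distances together with $c_r\tau^{\mathrm{level}}$ for relatives, shows $\dist(x,p_j)\le(c_c+\tfrac{3c_r}{2})\tau^{h+1}$ in every case. With the triangle inequality and $\dist(p_i,x)\le c_r\tau^h$ this yields
\[
  \dist(p_i,p_j)\le\Bigl(\tfrac{c_r}{\tau}+c_c+\tfrac{3c_r}{2}\Bigr)\tau^{h+1}\le 3c_r\tau^{h+1},
\]
where the last step uses $\tau\ge5$ and $c_r=\tfrac{2c_c\tau}{\tau-4}>c_c$, which make $\tfrac{c_r}{\tau}+c_c\le\tfrac{3c_r}{2}$.

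Third, I would bound $\dist(p_i,P_{j-1})$ from below and assemble. Since $\centersite(p_i)=x^h$ and $c_r>\tfrac{c_c\tau}{\tau-1}$, Lemma~\ref{lem:NN_center} shows $x$ is a $c_1$-ANN of $p_i$ in $P_{j-1}$ with $c_1=\tfrac{c_r\tau(\tau-1)}{c_r(\tau-1)-c_c\tau}$, so $\dist(p_i,P_{j-1})\ge\dist(p_i,x)/c_1>\tfrac{c_p}{2c_1}\tau^{h-1}$. Dividing the two estimates,
\[
  \frac{\dist(p_i,p_j)}{\dist(p_i,P_{j-1})}<\frac{6c_1c_r\tau^2}{c_p},
\]
and substituting $c_r=\tfrac{2c_c\tau}{\tau-4}$, which gives $c_1=\tfrac{2\tau(\tau-1)}{\tau+2}$, produces exactly $\tfrac{24c_c\tau^4(\tau-1)}{c_p(\tau+2)(\tau-4)}$. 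The main obstacle is not conceptual but bookkeeping in the second step: one must track which auxiliary nodes the insertion of $p_j$ may create (the nodes of $\centersite(p_j)$ at levels $h$ and $h+1$) or delete, confirm that touches by those nodes are split or merge touches rather than basic ones, and verify that during $p_j$'s own insertion the only node whose parent distance can exceed $c_c\tau^{\mathrm{level}+1}$ is $p_j$ itself, so that the bound $\dist(x,p_j)\le(c_c+\tfrac{3c_r}{2})\tau^{h+1}$ is unaffected.
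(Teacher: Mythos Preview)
Your proposal is correct and follows essentially the same approach as the paper: fix the center $p_k^\ell=\centersite(p_i)$, observe that the touching node $p_j^m$ lies within one level of $\ell$, bound $\dist(p_j,p_k)$ case-by-case using Lemma~\ref{lem:dist_violatingnode_to_parent} and the relative/covering radii to get $\dist(p_i,p_j)<3c_r\tau^{\ell+1}$, combine with the outer-cell lower bound $\dist(p_i,p_k)>\tfrac{c_p}{2}\tau^{\ell-1}$, and finish with Lemma~\ref{lem:NN_center}. The only cosmetic difference is that you index by the level $m$ of $p_j$ and deduce $h\in\{m-1,m,m+1\}$ for the center, whereas the paper indexes by the level $\ell$ of the center and places $p_j$ at level $\ell-1,\ell,$ or $\ell+1$; the resulting bounds and the final constant coincide.
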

    \begin{proof}
      Let $p_k^\ell$ be the center of $p_i$ in $P_{j-1}$, where $k<j<i$.
      According to the PL algorithm, $p_j$ can be in any level between $\ell-1$ and $\ell+1$.
      If $p_j$ is at level $\ell+1$, then $p_k\in\child(\rel(p_j^{\ell+1}))$.
      So, by the triangle inequality,
      \[\dist(p_j,p_k)\le \dist(p_j^{\ell+1},\parent(p_k^\ell))+\dist(\parent(p_k^\ell),p_k^\ell)\le c_r\tau^{\ell+1}+c_c\tau^{\ell+1}<2c_r\tau^{\ell+1}.\]
      If $p_j$ is at level $\ell-1$, then $p_k\in\rel(\parent(p_j^{\ell-1}))$.
      Using Lemma~\ref{lem:dist_violatingnode_to_parent} and the triangle inequality,
      \[\dist(p_j,p_k)\le \dist(p_j^{\ell-1},\parent(p_j^{\ell-1}))+\dist(\parent(p_j^{\ell-1}),p_k^\ell) \le (c_c+c_r/\tau)\tau^{\ell}+c_r\tau^\ell<2c_r\tau^\ell.\]
      If $p_j$ is at level $\ell$, then by the triangle inequality and Lemma~\ref{lem:dist_violatingnode_to_parent} we have
      \begin{align*}
        \dist(p_j,p_k)
        &\le \dist(p_j^\ell,\parent(p_j^\ell))+\dist(\parent(p_j^\ell),\parent(p_k^\ell))+\dist(\parent(p_k^\ell),p_k^\ell)\\
        &\le(c_c+c_r/\tau)\tau^{\ell+1}+c_r\tau^{\ell+1}+c_c\tau^{\ell+1}
        <2c_r\tau^{\ell+1}. \because{$c_r=2c_c\tau/(\tau-4)$}
      \end{align*}

      Therefore, we have $\dist(p_j,p_k)<2c_r\tau^{\ell+1}$ for all cases, as such $\dist(p_i,p_j)\le \dist(p_i,p_k)+\dist(p_k,p_j)\le c_r\tau^\ell+2c_r\tau^{\ell+1}<3c_r\tau^{\ell+1}$.
      As we saw earlier in Section~\ref{sec:pl_algorithm}, $p_j$ touches $p_i$ with a basic touch if $p_i$ is in the outer cell of $p_k$, which means $\dist(p_i,p_k)>c_p\tau^{\ell-1}/2$.
      Combining the last two inequalities results $\dist(p_i,p_j)<6 c_r\tau^2\dist(p_i,p_k)/c_p$.
      From Lemma~\ref{lem:NN_center}, for $c_r=2c_c\tau/(\tau-4)$, we have $\dist(p_i,p_k)<2\tau(\tau-1)/(\tau+2)\dist(p_i,P_{j-1})$.
      The lemma follows from the last two inequalities.
    \end{proof}

    \begin{lemma}
    \label{lem:const_basic_touch}
      If $\dist(p_i,P_k)=\dist(p_i,P_j)$ for $k<j<i$, then the number of basic touches on $p_i$ from $p_k$ to $p_j$ is $\rho^{O(1)}$.
    \end{lemma}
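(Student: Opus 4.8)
\emph{Proof plan.} Write $r \coloneqq \dist(p_i,P_k)=\dist(p_i,P_j)$. The plan is to confine all points that basic-touch $p_i$ during the phase to a ball of radius $O(r)$ about $p_i$, to show that any two of them are $\Omega(r)$ apart, and then invoke the Packing Lemma; Lemma~\ref{lem:const_touch} will then convert the bound on the number of such points into a bound on the number of touches. First I would observe that for every index $m$ with $k<m\le j$ we have $P_k\subseteq P_{m-1}\subseteq P_j$, so monotonicity of $\dist(p_i,\cdot)$ over nested sets gives $\dist(p_i,P_{m-1})=r$. Hence Lemma~\ref{lem:NN_touching_point} yields $\dist(p_i,p_m)<Cr$ with $C=\frac{24c_c\tau^4(\tau-1)}{c_p(\tau+2)(\tau-4)}$ for every $p_m$ that basic-touches $p_i$ during the phase (the touch caused by $p_k$ itself is trivially in this ball, since $p_k$ is then a nearest neighbor), and since $p_m\in P_j$ we also have $\dist(p_i,p_m)\ge\dist(p_i,P_j)=r$. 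So all basic-touching points of the phase lie in the annulus $r\le\dist(p_i,\cdot)<Cr$ around $p_i$.

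Next I would reduce to counting \emph{distinct} basic-touching points: the proof of Lemma~\ref{lem:const_touch} already bounds the number of basic touches of a single point on $p_i$ by three, so a bound of $\rho^{O(1)}$ on the number of distinct touching points suffices, and by the Packing Lemma this in turn follows once we know these points are pairwise $\Omega(r)$-separated. Here the tree structure enters. When $p_m$ basic-touches $p_i$, inspection of the PL algorithm shows the touch is performed by a node $p_m^{\ell'}$ whose level $\ell'$ differs by at most one from the level $\ell$ of $\centersite_{P_{m-1}}(p_i)=:x^\ell$. Since the point $x$ lies in $P_{m-1}$ we have $\dist(p_i,x)\ge r$, while $x^\ell$ being a center forces $\dist(p_i,x)\le c_r\tau^\ell$, so $\tau^{\ell'}\ge\tau^{\ell-1}\ge r/(c_r\tau)$; symmetrically, a basic touch requires $p_i$ to sit in the \emph{outer} cell of $x^\ell$, giving $\dist(p_i,x)>c_p\tau^{\ell-1}/2$, and Lemma~\ref{lem:NN_center} bounds $\dist(p_i,x)$ by $O(r)$, so $\tau^{\ell'}=\Theta(r)$. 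Thus $\ell'$ ranges over only $O(1)$ consecutive levels, uniformly in $m$.

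Now fix one admissible level $\ell'$ and take two basic-touching points $p_{m_1},p_{m_2}$ with $m_1<m_2$, both touching through a node at level $\ell'$. Since $p_{m_1}$ acquired a node at level $\ell'$ at time $m_1$, it belongs to $N_{\ell'}$ when $p_{m_2}$ is inserted, so the Local Packing invariant of the tree on $P_{m_2}$ forces $\dist(p_{m_1},p_{m_2})>c_p\tau^{\ell'}=\Omega(r)$. Taking the union over the $O(1)$ admissible levels $\ell'$, all basic-touching points of the phase are pairwise $\Omega(r)$-separated and lie in $\ball(p_i,Cr)$, so the Packing Lemma bounds their number, and hence the number of basic touches on $p_i$ in the phase, by $\rho^{O(1)}$.

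The main obstacle is the separation step of the third paragraph: it uses that the node $p_{m_1}^{\ell'}$ created at time $m_1$ is still alive at time $m_2$, i.e. that $p_{m_1}\in N_{\ell'}$ persists, whereas an intervening compression could in principle delete it (or raise the base of a jump past level $\ell'$). I would close this gap via the semi-compressed property: the node created for $p_{m_1}$ at level $\ell'$ by Lemma~\ref{lem:insert} retains as a relative the node at that level associated with the old center of $p_{m_1}$, so it satisfies the semi-compressed condition and is not removed; and if that relative is itself threatened, one propagates the argument upward using Lemma~\ref{lem:functorial}, or, more crudely, runs the pairwise bound against whichever node of $N_{\ell'}$ remains closest to $p_{m_1}$ at time $m_2$, losing only an $O(1)$ factor in the separation constant (hence only in the exponent of $\rho$). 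The remaining arithmetic is routine substitution of the constants $\tau,c_p,c_c,c_r$.
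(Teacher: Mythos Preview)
Your proposal is correct and follows the same strategy as the paper: confine the basic-touching points to $\ball(p_i,O(r))$ via Lemma~\ref{lem:NN_touching_point}, separate them pairwise by $\Omega(r)$ via local packing, and invoke the Packing Lemma. The paper streamlines the two places you elaborate. First, it uses only a \emph{lower} bound on the touching level---every basic touch occurs at level $\ge h-1$ with $h\coloneqq\lceil\log_\tau(r/c_r)\rceil$---and applies packing once at that single level, so your upper-bound on $\ell'$ and the level-by-level split are unnecessary. Second, the ``main obstacle'' you flag is in fact a non-issue: once $p_{m_1}$ acquires a node at level $\ge h-1$, its highest node can never be compressed away, because that node's parent is associated with a \emph{different} point and therefore, by nesting, has at least two children; hence $\h(p_{m_1})$ never decreases and $p_{m_1}$ stays in $N_{h-1}$ forever. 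Your workarounds are sound but not needed.
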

    \begin{proof}
      Let $q$ be the closest point to $p_i$ in both $P_k$ and $P_j$.
      Also, let $h\coloneqq\lceil\log_\tau (\dist(p_i,q)/c_r)\rceil$.
      If two points $x,y\in\{p_k,\ldots,p_j\}$ touch $p_i$ at levels $f$ and $g$ and the minimum distance remains unchanged, then $f,g\ge h-1$ because the PL algorithm in Section~\ref{sec:pl_algorithm} checks the nearby cells from one level down to one level up.
      By the packing property, $\dist(x,y)>c_p\tau^{h-1}$.
      By definition, $\dist(q,p_i)\le c_r\tau^h$.
      The last two inequalities result $\dist(x,y)>\frac{c_p}{c_r\tau}\dist(q,p_i)$.
      By Lemma~\ref{lem:NN_touching_point}, every touching point in $\{p_k,\ldots,p_j\}$ is within distance $\frac{24 c_c\tau^4(\tau-1)}{c_p(\tau+2)(\tau-4)}\dist(p_i,P_j)$ from $p_i$.
      Using the last two inequalities and the Packing Lemma, there are a constant number of points from $p_k$ to $p_j$ that can touch $p_i$ but not changing the minimum distance from $p_i$.
    \end{proof}

    \begin{theorem}
    \label{thm:expected_basic_touches}
      The expected number of basic touches in a random permutation is $O(\rho^{O(1)}n\log n)$.
    \end{theorem}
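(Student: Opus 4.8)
The plan is to combine the two preceding lemmas (Lemma~\ref{lem:const_basic_touch} and Lemma~\ref{lem:NN_touching_point}) with a backwards analysis on the random permutation, paralleling the phase argument sketched before the statement. First I would fix a point $p_i$ and partition the prefix $\langle p_1,\ldots,p_{i-1}\rangle$ into \emph{phases}, where a new phase begins exactly at each index $j<i$ for which $\dist(p_i,P_j)<\dist(p_i,P_{j-1})$, i.e.\ at each step where the nearest already-inserted neighbor of $p_i$ strictly improves. By Lemma~\ref{lem:const_basic_touch}, the number of basic touches on $p_i$ during any single phase is $\rho^{O(1)}$, so the total number of basic touches on $p_i$ is $\rho^{O(1)}$ times the number of phases, and summing over $i$ reduces the theorem to bounding the expected number of phases.

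Next I would bound the expected number of phases for a fixed $p_i$ by a backwards-analysis argument on $P_{i-1}=\{p_1,\ldots,p_{i-1}\}$. Condition on the \emph{set} $P_{i-1}$ and on the position of $p_i$; then $\langle p_1,\ldots,p_{i-1}\rangle$ is a uniformly random ordering of $P_{i-1}$. Index $j$ starts a new phase iff $p_j$ is strictly closer to $p_i$ than all of $p_1,\ldots,p_{j-1}$, which (breaking ties by index, or noting distances are distinct after an arbitrary tie-break) happens with probability exactly $1/j$ when the $j$ points inserted so far are a uniformly random $j$-subset in random order and $p_j$ is the last of them. Hence the expected number of phase-starts among the first $i-1$ insertions is $\sum_{j=1}^{i-1} 1/j = O(\log i) = O(\log n)$. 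Combining with the previous paragraph, the expected number of basic touches on $p_i$ is $\rho^{O(1)}O(\log n)$, and summing over $i\in\{1,\ldots,n\}$ gives expected total $O(\rho^{O(1)} n\log n)$, using linearity of expectation.

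The step I expect to be the main obstacle is making the ``phase'' decomposition and the backwards-analysis indicator argument fully rigorous in the presence of touches that are spread over several insertions (a single $p_j$ can touch $p_i$ up to $\rho^{O(1)}$ times by Lemma~\ref{lem:const_touch}, and promotions can cause extra touches) and in the presence of distance ties. The care needed is: (i) the touching points in a phase all lie in the ball of radius $\frac{24 c_c\tau^4(\tau-1)}{c_p(\tau+2)(\tau-4)}\dist(p_i,P_{j-1})$ guaranteed by Lemma~\ref{lem:NN_touching_point} and are pairwise $>c_p\tau^{h-1}$ apart, so the Packing Lemma caps their number per phase — this is exactly Lemma~\ref{lem:const_basic_touch}, which I would simply invoke; and (ii) the event ``index $j$ improves the nearest neighbor of $p_i$'' must be expressed purely in terms of $P_{i-1}$ and its random order so that the $1/j$ bound is valid — here I would fix an arbitrary total order on $P$ refining the distances-to-$p_i$ to resolve ties deterministically, so that ``$p_j$ is closest to $p_i$ among $p_1,\ldots,p_j$'' is a well-defined event of probability $1/j$. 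Once these two points are pinned down, the theorem follows by the linearity-of-expectation sum $\sum_i O(\log n)$ described above.
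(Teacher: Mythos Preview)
Your proposal is correct and follows essentially the same approach as the paper: define phases by the indices at which $\dist(p_i,P_j)$ strictly decreases, invoke Lemma~\ref{lem:const_basic_touch} to bound the touches per phase by $\rho^{O(1)}$, and use the standard backwards-analysis bound $\Pr[p_j\text{ is the nearest to }p_i\text{ in }P_j]\le 1/j$ to get $O(\log n)$ expected phases. Your treatment of ties and the conditioning on $(P_{i-1},p_i)$ is a bit more explicit than the paper's, but the argument is the same.
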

    \begin{proof}
      Using Lemma~\ref{lem:const_basic_touch}, only a constant number of basic touches on a point $p_i$ can occur before the distance from $p_i$ to the inserted points must go down.
      Therefore, it suffices to bound $E\big[|\{j\mid\dist(p_i,P_{j-1})\neq\dist(p_i,P_j)\}|\big]$.
      We observe that $\dist(p_i,P_{j-1})\neq\dist(p_i,P_j)$ only if $p_j$ is the unique nearest neighbor of $p_i$ in $P_j$.
      Using a standard backwards analysis, this event occurs with probability $1/j$.
      Therefore, $E\big[|\{j\mid\dist(p_i,P_{j-1})\neq\dist(p_i,P_j)\}|\big]\le\sum_{j=1}^{n}1/j=O(\log n)$.
      By Lemma~\ref{lem:const_touch}, each point $p_j$ may cause a constant number of basic touches on $p_i$.
      So, the expected number of basic touches on $p_i$ is $O(\rho^{O(1)}\log n)$, which results $O(\rho^{O(1)}n\log n)$ touches in expectation for the permutation.
    \end{proof}


  \subsubsection{Split and Merge Touches} 
  \label{ssub:split_and_merge_touches}
    In this section, we define bunches near a point $p_i$.
    These bunches are sufficently-separated disjoint groups of points around $p_i$.
    We show that each point has a constant number of bunches nearby.
    Then, we define two random events for each point based on its nearby bunches.
    We prove that the expected number of events for each point of a permutation is $O(\log n)$.
    Then, we show that the number of split touches can be counted by such events.
    Finally, we prove that the number of merge touches can be bounded in terms of the number of split touches.

    \begin{definition}
    \label{def:bunch}
      $\bunch\subseteq P_j$ is a \emph{bunch} near $p_i$, if there exists a center $x\in \bunch$ such that
      \begin{enumerate}
        \item \label{def:bunch1} $\bunch=\ball(x,\alpha\dist(p_i,x))\cap P_j$,
        \item \label{def:bunch2} $\big[\ball(x,\beta\dist(p_i,x))\setminus\ball(x,\alpha\dist(p_i,x))\big]\cap P_j=\emptyset$,
        \item \label{def:bunch3} $\dist(p_i,x)\le \frac{2\tau(\tau-1)}{\tau+2}\dist(p_i,P_j)$,
      \end{enumerate}
      where $j<i$, $0<\alpha\le 0.5$ and $\beta\ge 2\alpha$.
    \end{definition}
    \begin{figure}
      \centering
      \includegraphics[width=0.37\columnwidth]{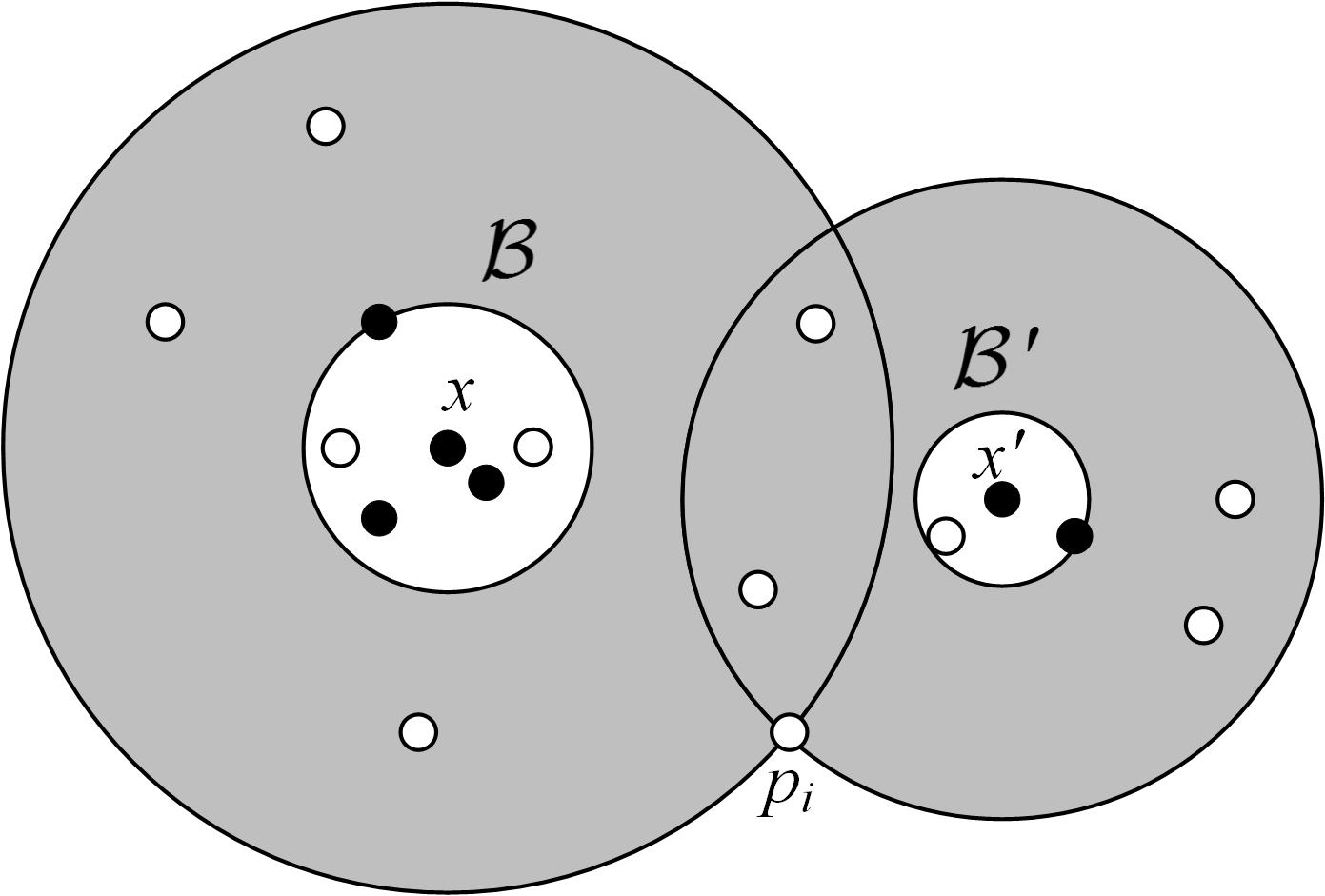}
      \caption{
      The solid dots are points of $P_j$ and white dots are points in $P\setminus P_j$.
      $\bunch$ and $\bunch'$ are two bunches near $p_i$ centered at $x$ and $x'$.
      The smaller balls contain the points in the corresponding bunches.
      No points of $P_j$ lies in the shaded region.}
      \label{fig:bunch}
    \end{figure}

    See Fig.~\ref{fig:bunch} for an illustration of bunches.
    Note that the third property is the result of Lemma~\ref{lem:NN_center} and $c_r=\frac{2c_c\tau}{\tau-4}$.
    Next, we show that there is a constant number of bunches near each point in a permutation.

    \begin{lemma}
    \label{lem:bunch_min_dist}
      Let $\bunch$ and $\bunch'$ be two distinct bunches with centers $x$ and $x'$ near $p_i$ in $P_j$, respectively.
      Then $\dist(x,x')>\frac{\alpha}{\alpha+1}\max\{\dist(p_i,x),\dist(p_i,x')\}$.
    \end{lemma}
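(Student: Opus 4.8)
The plan is to argue by contradiction and show that if $\dist(x,x')$ were too small, then $\bunch$ and $\bunch'$ would in fact be the same set of points. Since both $\dist(x,x')$ and the right-hand side $\frac{\alpha}{\alpha+1}\max\{\dist(p_i,x),\dist(p_i,x')\}$ are symmetric under swapping the two bunches, I may assume without loss of generality that $\dist(p_i,x)\le\dist(p_i,x')$; write $d\coloneqq\dist(p_i,x)$ and $d'\coloneqq\dist(p_i,x')$. Note $d,d'>0$ since $x,x'\in P_j$ while $p_i\notin P_j$ (as $j<i$). Suppose for contradiction that $\dist(x,x')\le\frac{\alpha}{\alpha+1}d'$.

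The first step is to locate $x$ and $x'$ inside the two bunches. The triangle inequality gives $d'\le d+\dist(x,x')\le d+\frac{\alpha}{\alpha+1}d'$, hence $d'\le(\alpha+1)d$, and therefore $\dist(x,x')\le\frac{\alpha}{\alpha+1}d'\le\alpha d\le\alpha d'$. Since $x\in P_j$ lies in $\ball(x',\alpha d')$ and $x'\in P_j$ lies in $\ball(x,\alpha d)$, property~\ref{def:bunch1} of \Cref{def:bunch} yields $x\in\bunch'$ and $x'\in\bunch$.

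The second step is to show $\bunch=\bunch'$, contradicting the assumption that the two bunches are distinct. For the inclusion $\bunch\subseteq\bunch'$, take $y\in\bunch$, so $\dist(x,y)\le\alpha d$; then $\dist(x',y)\le\dist(x',x)+\dist(x,y)\le 2\alpha d\le 2\alpha d'\le\beta d'$, using $\beta\ge 2\alpha$ and $d\le d'$. Because $y\in P_j$ and property~\ref{def:bunch2} forbids any point of $P_j$ at distance strictly between $\alpha d'$ and $\beta d'$ from $x'$, we must have $\dist(x',y)\le\alpha d'$, i.e.\ $y\in\bunch'$. The reverse inclusion $\bunch'\subseteq\bunch$ is handled the same way: for $y\in\bunch'$ one bounds $\dist(x,y)\le\dist(x,x')+\dist(x',y)\le\frac{\alpha}{\alpha+1}d'+\alpha d'$ and then uses $d'\le(\alpha+1)d$ together with the relation between $\alpha$ and $\beta$ to keep this below $\beta d$, so that the empty-annulus property~\ref{def:bunch2} around $x$ again forces $\dist(x,y)\le\alpha d$ and hence $y\in\bunch$.

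I expect the reverse inclusion to be the delicate step: there the bound on $\dist(x,y)$ must be pushed below $\beta d$, which is exactly where the interplay of $\dist(x,x')\le\frac{\alpha}{\alpha+1}d'$, its consequence $d'\le(\alpha+1)d$, and the constraints on $\alpha$ and $\beta$ is used most tightly. The first inclusion, by contrast, only needs $\beta\ge 2\alpha$ and is routine; once both inclusions are in hand, the contradiction with distinctness is immediate and the lemma follows.
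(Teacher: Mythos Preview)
Your first inclusion $\bunch\subseteq\bunch'$ is fine, but the reverse inclusion does not go through with the hypotheses of \Cref{def:bunch}. Carrying out the computation you sketch, for $y\in\bunch'$ one gets
\[
  \dist(x,y)\;\le\;\dist(x,x')+\dist(x',y)\;\le\;\frac{\alpha}{\alpha+1}\,d'+\alpha d'\;=\;\frac{\alpha(\alpha+2)}{\alpha+1}\,d',
\]
and combining with $d'\le(\alpha+1)d$ this yields only $\dist(x,y)\le\alpha(\alpha+2)\,d$. To invoke the empty annulus of $\bunch$ you would need $\alpha(\alpha+2)\le\beta$, i.e.\ $\beta\ge 2\alpha+\alpha^2$, whereas the definition only guarantees $\beta\ge 2\alpha$. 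For example with $\alpha=\tfrac12$ and $\beta=1$ (which satisfy all constraints) one has $\alpha(\alpha+2)=\tfrac54>1=\beta$, so the bound does not land inside $\ball(x,\beta d)$ and property~\ref{def:bunch2} cannot be applied. The step you yourself flagged as ``delicate'' is exactly where the argument breaks.

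The paper's proof avoids this by working asymmetrically rather than proving both inclusions. It takes the opposite WLOG, $\dist(p_i,x)>\dist(p_i,x')$, and instead of establishing $\bunch=\bunch'$ it picks a single witness $y'\in\bunch'\setminus\bunch$. The empty annulus of $\bunch$ forces $\dist(x,y')>\beta\,\dist(p_i,x)$, while membership in $\bunch'$ gives $\dist(x',y')\le\alpha\,\dist(p_i,x')$; chaining these through the triangle inequality yields $\beta<2\alpha$ on the nose, contradicting $\beta\ge 2\alpha$. The key is that the annulus of the bunch with the \emph{larger} center distance is played against the containment radius of the bunch with the \emph{smaller} one; your choice of WLOG reverses these roles, which is why the constants no longer close up.
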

    \begin{proof}
      Without loss of generality, let $\dist(p_i,x)>\dist(p_i,x')$.
      Suppose for contradiction, $\dist(x,x')\le\frac{\alpha}{\alpha+1}\dist(p_i,x)<\alpha\dist(p_i,x)$.
      The first property of a bunch results $x'\in\bunch$.
      Let $y'\in \bunch'\setminus\bunch$.
      The second property of $\bunch$ results $\dist(x,y')>\beta\dist(p_i,x)$.
      Using the triangle inequality,
      \begin{equation}
      \label{eq:bunch_min_dist1}
        \dist(x',y')\ge\dist(y',x)-\dist(x,x')>\beta\dist(p_i,x)-\frac{\alpha}{\alpha+1}\dist(p_i,x)=(\beta-\frac{\alpha}{\alpha+1})\dist(p_i,x).
      \end{equation}
      Also, using the first property of $\bunch'$ and the triangle inequality,
      \begin{equation}
      \label{eq:bunch_min_dist2}
        \dist(x',y')\le\alpha\dist(p_i,x')\le\alpha(\dist(p_i,x)+\dist(x,x'))\le\alpha(1+\frac{\alpha}{\alpha+1})\dist(p_i,x)
      \end{equation}
      By (\ref{eq:bunch_min_dist1}) and (\ref{eq:bunch_min_dist2}), $(\beta-\frac{\alpha}{\alpha+1})\dist(p_i,x)<\alpha(1+\frac{\alpha}{\alpha+1})\dist(p_i,x)$.
      Therefore, $\beta<\frac{\alpha}{\alpha+1}+\frac{2\alpha^2+\alpha}{\alpha+1}=2\alpha$, which is a contradiction because by definition $\beta\ge 2\alpha$.
      Thus, $\dist(x,x')>\frac{\alpha}{\alpha+1}\dist(p_i,x)$, as required.
    \end{proof}

    \begin{lemma}
    \label{lem:constant_bunches}
      For some constants $\alpha$ and $\beta$, there are $\rho^{O(1)}$ bunches near $p_i$.
    \end{lemma}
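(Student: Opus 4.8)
The plan is to bound the number of bunch centers near $p_i$ using the Packing Lemma, exactly as in the other packing arguments of the paper. By property~\ref{def:bunch3} of Definition~\ref{def:bunch}, every bunch center $x$ satisfies $\dist(p_i,x)\le\frac{2\tau(\tau-1)}{\tau+2}\dist(p_i,P_j)$, so all centers lie in the ball $\ball(p_i, R)$ where $R\coloneqq\frac{2\tau(\tau-1)}{\tau+2}\dist(p_i,P_j)$. This gives the upper radius bound. For the lower bound on pairwise distances, I would invoke Lemma~\ref{lem:bunch_min_dist}: any two distinct centers $x,x'$ satisfy $\dist(x,x')>\frac{\alpha}{\alpha+1}\max\{\dist(p_i,x),\dist(p_i,x')\}$.

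The mild obstacle is that the lower bound from Lemma~\ref{lem:bunch_min_dist} is in terms of $\max\{\dist(p_i,x),\dist(p_i,x')\}$ rather than a fixed absolute constant, so one cannot apply the Packing Lemma directly to all centers at once. The standard fix is to bucket the centers by scale: group center $x$ into bucket $B_t$ if $\dist(p_i,x)\in(2^{t-1}\,d_0, 2^{t}\,d_0]$ where $d_0\coloneqq\dist(p_i,P_j)$. Since $\dist(p_i,x)\ge\dist(p_i,P_j)=d_0$ for every center $x$ (as $x\in P_j$) and $\dist(p_i,x)\le R$, there are only $O(\log\frac{R}{d_0})=O(1)$ nonempty buckets, the bound depending only on $\tau$. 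Within a single bucket $B_t$, all centers have $\dist(p_i,\cdot)$ within a factor of $2$ of $2^t d_0$, and Lemma~\ref{lem:bunch_min_dist} then gives that any two of them are separated by more than $\frac{\alpha}{\alpha+1}\cdot 2^{t-1}d_0$, while all of them lie in $\ball(p_i, 2^t d_0)$. Applying the Packing Lemma to this bucket bounds its size by $\rho^{\lfloor\lg(2(\alpha+1)/\alpha)\rfloor+1}=\rho^{O(1)}$.

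Summing over the $O(1)$ buckets yields $\rho^{O(1)}$ centers in total, and since a bunch is determined by its center (being $\ball(x,\alpha\dist(p_i,x))\cap P_j$ once $x$ is fixed), there are $\rho^{O(1)}$ bunches near $p_i$. I expect the only thing requiring care is making the bucketing step clean — checking that $\dist(p_i,x)\ge d_0$ so the number of buckets is genuinely $O(1)$ and not dependent on the spread, and tracking that the packing constant in each bucket is uniform. Everything else is a direct application of already-established lemmas.
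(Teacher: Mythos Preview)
Your argument is correct, but the bucketing step is unnecessary, and the paper's proof avoids it. You already observe (in your last paragraph) that every center $x$ lies in $P_j$, so $\dist(p_i,x)\ge\dist(p_i,P_j)$. Feeding this directly into Lemma~\ref{lem:bunch_min_dist} gives
\[
\dist(x,x')>\frac{\alpha}{\alpha+1}\max\{\dist(p_i,x),\dist(p_i,x')\}\ge\frac{\alpha}{\alpha+1}\dist(p_i,P_j),
\]
which is a \emph{uniform} lower bound on the pairwise separation of all centers. Combined with the upper radius $R=\frac{2\tau(\tau-1)}{\tau+2}\dist(p_i,P_j)$ from property~\ref{def:bunch3}, a single application of the Packing Lemma finishes the proof. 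Your scale-bucketing is a valid workaround, but it solves an obstacle that isn't actually there; the very inequality you use to bound the number of buckets already collapses the argument to one bucket.
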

    \begin{proof}
      From Lemma~\ref{lem:bunch_min_dist}, for any bunches $\bunch$ and $\bunch'$ near $p_i$ with centers $x$ and $x'$, we have $\dist(x,x')>\frac{\alpha}{\alpha+1}\max\{\dist(p_i,x),\dist(p_i,x')\}\ge\frac{\alpha}{\alpha+1}\dist(p_i,P_j)$.
      So, the third property of bunches and the Packing Lemma imply the bound.
    \end{proof}

    The following lemma shows that each jump in a local net-tree corresponds to an empty annulus around the corresponding point.
    We will use this lemma to show the relation between jumps in a net-tree and bunches in a point set.

    \begin{lemma}
    \label{lem:empty_annulus}
      In a semi-compressed local net-tree $T\in\wnt(\tau ,c_p,c_c)$ with $c_r=\frac{2c_c\tau}{\tau-4}$, if there is a jump from $p^\ell$ to $p^h$, then $P_{p^h}\subset \ball(p,\frac{1}{2}c_r\tau^{h})$ and $\ball(p,\frac{1}{2}c_r\tau^{\ell-1})\setminus\ball(p,\frac{1}{2}c_r\tau^{h})=\emptyset$.
    \end{lemma}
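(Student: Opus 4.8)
The plan is to unpack the two claims separately, relying on the covering bound of Lemma~\ref{lem:covering} for the first and on the packing structure of a jump for the second. For the containment $P_{p^h}\subset\ball(p,\tfrac12 c_r\tau^h)$: by Lemma~\ref{lem:covering}, every $x\in P_{p^h}$ satisfies $\dist(p,x)<\tfrac{c_c\tau}{\tau-1}\tau^h$, so it suffices to check $\tfrac{c_c\tau}{\tau-1}\le\tfrac12 c_r=\tfrac{c_c\tau}{\tau-4}$, which holds since $\tau-4<\tau-1$. That part is a one-line constant comparison.

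The substantive part is the empty annulus. Recall that a jump from $p^\ell$ to $p^h$ with $\ell>h+1$ exists in the semi-compressed tree precisely because all the intermediate nodes $p^{\ell-1},\dots,p^{h+1}$ were compressed away, i.e.\ each was the unique child of its parent, had a unique child, and had no relative other than itself. The first step is to observe that this forces, for each intermediate level $m$ with $h<m<\ell$, that $p$ is the only point of $N_m$ within distance roughly $c_r\tau^m$ of $p$ — otherwise $p^m$ would have a relative and could not have been compressed. More carefully, I would argue: if some point $z\ne p$ lies in the punctured ball $\ball(p,\tfrac12 c_r\tau^{\ell-1})\setminus\ball(p,\tfrac12 c_r\tau^h)$, pick the level $m$ with $c_r\tau^{m-1}<2\dist(p,z)\le c_r\tau^m$; the lower bound $\dist(p,z)>\tfrac12 c_r\tau^h$ gives $m\ge h+1$ and the upper bound $\dist(p,z)<\tfrac12 c_r\tau^{\ell-1}$ gives $m\le\ell-1$, so $m$ is an intermediate (compressed) level. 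Then I need to produce, at level $m$ (or nearby), a node that would be a relative of $p^m$, contradicting compression. The natural candidate is the lowest ancestor $w^f$ of $z$ with $f\ge m$; by the covering bound (Lemma~\ref{lem:covering}) the leaf $z$ is within $\tfrac{c_c\tau}{\tau-1}\tau^f$ of $w$, and combined with the choice of $m$ and $c_r=\tfrac{2c_c\tau}{\tau-4}$ one gets $\dist(p,w)\le c_r\tau^m$ with $w^f$'s parent above level $m$, which makes $w^f\in\rel(p^m)$ with $w\ne p$. This contradicts the fact that $p^m$ was compressed.

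I expect the main obstacle to be the bookkeeping at the top and bottom of the annulus — specifically, making sure the ancestor-of-$z$ argument lands at a genuinely intermediate level and that the relative condition $f\le m<\parent\text{-level}$ is met, since $z$'s ancestor chain may itself contain jumps. The constant $c_r=\tfrac{2c_c\tau}{\tau-4}$ (rather than the smaller $\tfrac{2c_c\tau}{\tau-2}$ used elsewhere) is presumably exactly what buys the slack needed to absorb the $\tfrac{c_c\tau}{\tau-1}$ covering error and the factor-2 radius convention; I would verify that chain of inequalities carefully, but it should reduce to the same comparison $\tfrac{1}{\tau-4}$ versus $\tfrac{1}{\tau-1}$-type bounds already appearing in Lemmas~\ref{lem:insert} and~\ref{lem:pl_correct}. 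The nesting property ensures every intermediate level really does have a node $p^m$ to speak of, so the compression-implies-no-relative reasoning applies uniformly.
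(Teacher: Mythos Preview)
Your proposal is correct and hinges on the same mechanism as the paper---compression of an intermediate node $p^m$ forces any would-be relative to lie far from $p$. The organization differs: you choose $m$ adaptively from $\dist(p,z)$ and then manufacture a relative of $p^m$ to reach a contradiction, whereas the paper fixes $m=\ell-1$ once and for all and argues forward. Concretely, the paper shows directly that every $q\notin P_{p^\ell}=P_{p^h}$ lies outside $\ball(p,\tfrac{1}{2}c_r\tau^{\ell-1})$: taking the ancestor $x$ of $q$ at (or just below) level $\ell-1$, the absence of relatives for the compressed $p^{\ell-1}$ gives $\dist(p,x)>c_r\tau^{\ell-1}$, and Lemma~\ref{lem:covering} plus the triangle inequality yield $\dist(p,q)>c_r\tau^{\ell-1}-\tfrac{c_c\tau}{\tau-1}\tau^{\ell-1}>\tfrac{1}{2}c_r\tau^{\ell-1}$. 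Combined with the first containment this empties the annulus. This sidesteps both the level-selection step and the jump-in-$z$'s-chain bookkeeping you flagged; your version also goes through (when $f>m$, the relative to exhibit is the child of $w^f$ on $z$'s path, which sits at level $<m$ with parent above $m$ and is within $c_r\tau^m$ of $p$ by the same constant comparison), but the fixed-level argument is shorter and avoids the case split.
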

    \begin{proof}
      Using Lemma~\ref{lem:covering}, $P_{p^h}\subseteq\ball(p,\frac{c_c\tau}{\tau-1}\tau^{h})=\ball(p,\frac{c_r(\tau-4)}{2(\tau-1)}\tau^h)\subset\ball(p,\frac{1}{2}c_r\tau^{h})$.
      Now, let $q\notin P_{p^\ell}$ and $x^{\ell-1}$ be the ancestor of $q$ at level $\ell-1$.
      Since $p^{\ell-1}\notin T$ and $T$ is semi-compressed, $\dist(p,x)>c_r\tau^{\ell-1}$.
      By the triangle inequality,
      \begin{align*}
        \dist(p,q)\ge\dist(p,x)-\dist(x,q)>c_r\tau^{\ell-1}-\frac{c_c\tau}{\tau-1}\tau^{\ell-1}>c_r\tau^{\ell-1}-\frac{c_r(\tau-4)}{2(\tau-1)}\tau^{\ell-1}>\frac{1}{2}c_r\tau^{\ell-1}.
      \end{align*}
      Notice that if $q$ does not have an ancestor in level $\ell-1$, then the radius of the outer ball becomes even larger because $\dist(x,q)\le \frac{c_c\tau}{\tau-1}\tau^{\ell-2}$.
      See Fig.~\ref{fig:fig4}.
    \end{proof}

    \begin{figure}[!tbh]
      \centering
      \includegraphics[width=0.25\columnwidth]{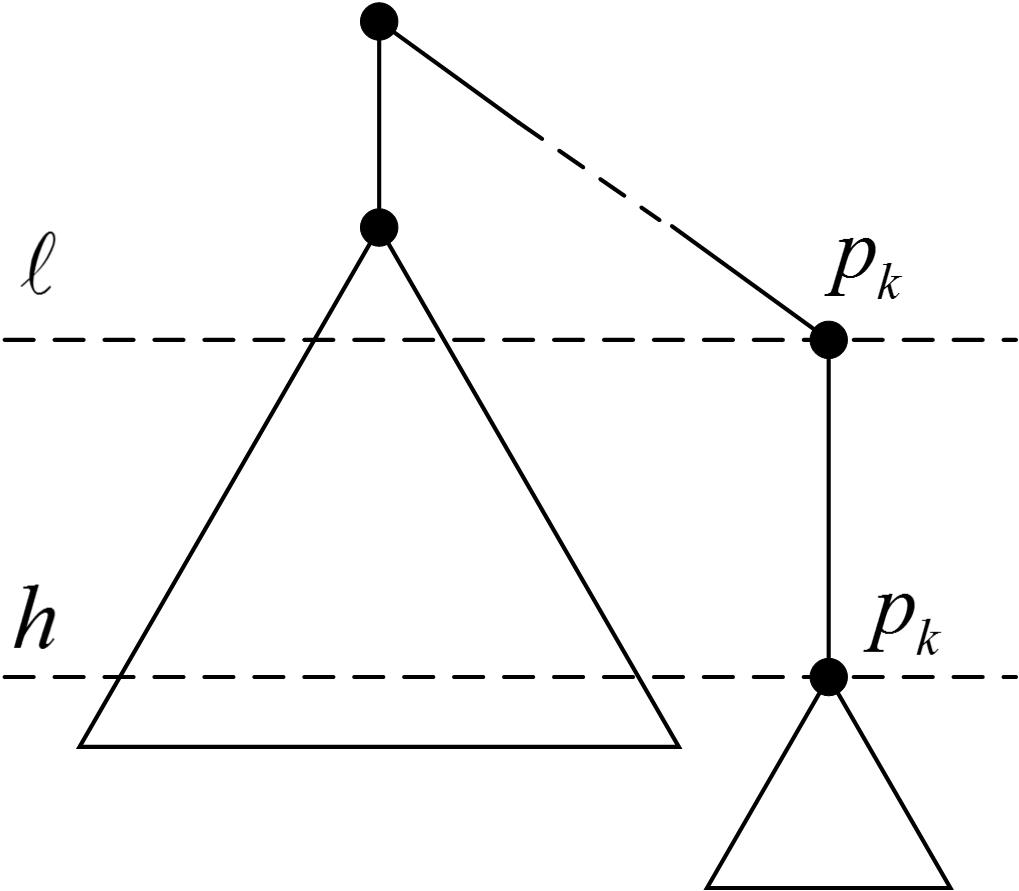}
      ~~~
      \includegraphics[width=0.25\columnwidth]{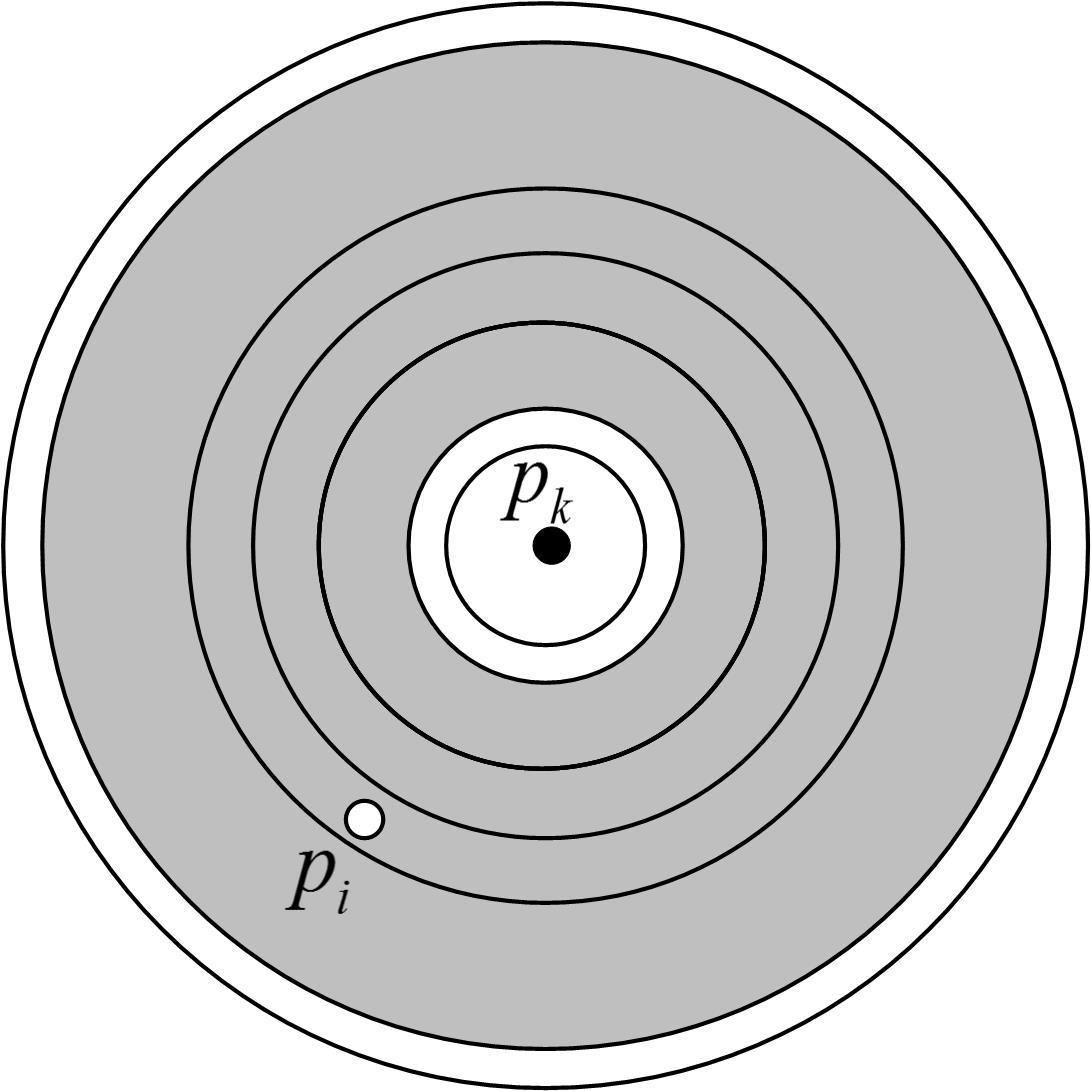}
      ~~~
      \includegraphics[width=0.25\columnwidth]{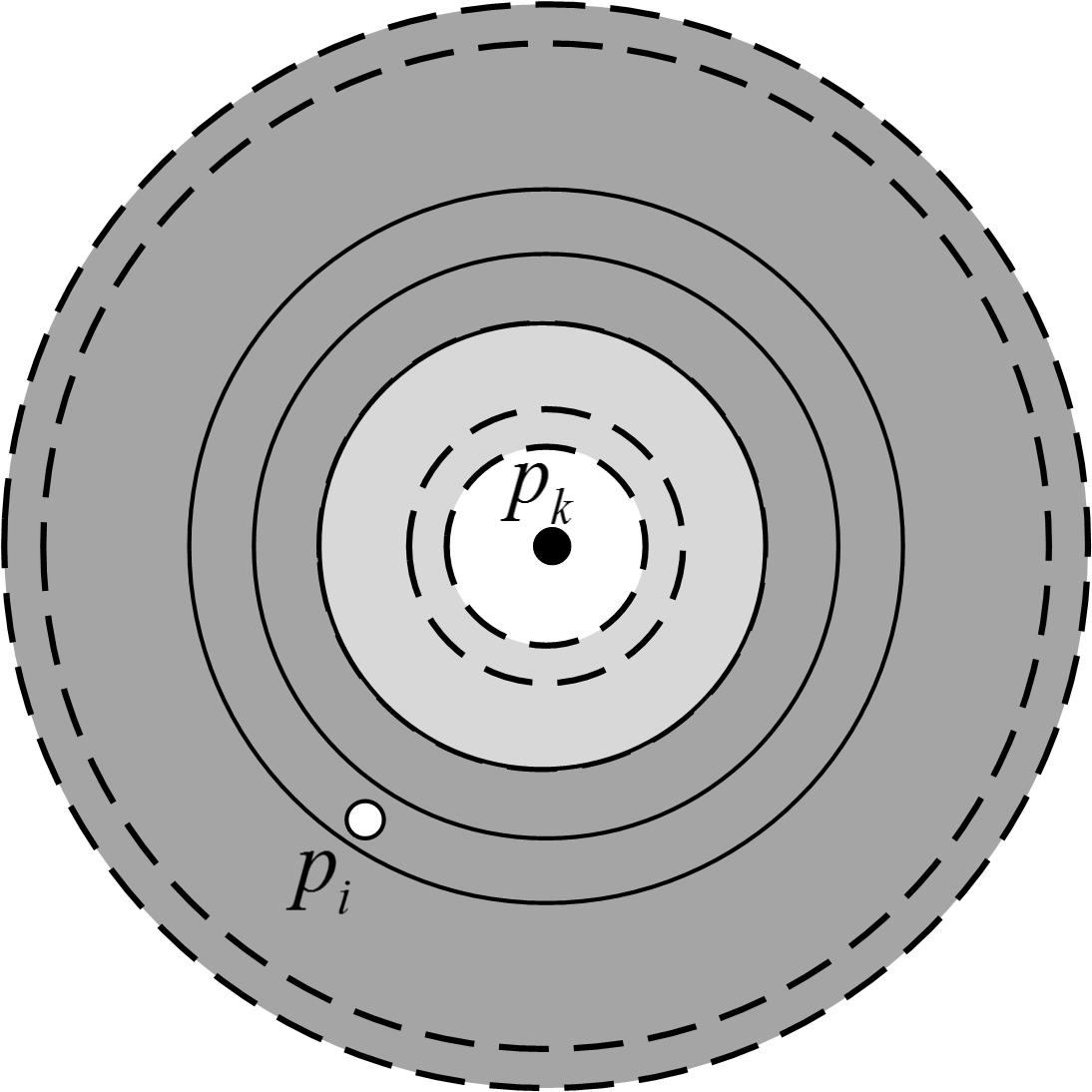}
      \caption{Point $p_i$ with $\centersite(p_i)=p_k^\ell$ in $P_{j-1}$, where $k<j<i$.
      The balls centered at $p_k$ from smaller to larger have radii $c_c\tau^h$, $\frac{c_c\tau}{\tau-1}\tau^h$, $c_r\tau^{m-2}$, $c_r\tau^{m-1}$, $c_r\tau^m$, $\frac{1}{2}c_r\tau^{\ell-1}$, and $c_r\tau^{\ell-1}$, where $m\coloneqq\lceil\log_\tau(\dist(p_i,p_k)/c_r)\rceil$.
      Left: a jump from $p_k^\ell$ to $p_k^h$ in a local net-tree on $P_{j-1}$.
      Center: the shaded annulus does not contain any points of $P_{j-1}$ and corresponds to the previous jump.
      Right: if $p_j$ causes a split below touch on $p_i$, then it is in the light shaded region.
      If $p_j$ causes a split above touch on $p_i$, then it is in the dark shaded region.
      The annuli defined by the two consecutive dashed balls are not necessarily empty in $P_{j-1}$.}
      \label{fig:fig4}
    \end{figure}

    In the following, we divide split touches into two categories of above and below.
    Also, we define two types of split events for the points in a permutation.
    Since these events only depend on the ordering of the points, and not the tree structure, we apply a backwards analysis to find the expected number of such events.
    Then, we show that these events can be used to bound the number of split touches.

    Let $p_k^\ell\coloneqq\centersite(p_i)$ in $P_{j-1}$, where $k<j<i$, and $p_k^\ell$ be the top of a jump with the bottom node $p_k^h$, where $\ell\ge h+2$ (by the definition of a jump).
    Also, let $m\coloneqq\lceil\log_\tau(\dist(p_i,p_k)/c_r)\rceil$.
    By the definition of a center, $h+1\le m\le\ell$.
    Then, the insertion of $p_j$ at some level $f$, where $h\le f\le\ell-1$, results that jump to be split at lower levels.
    After splitting a jump, either $p_i$ stays in the same cell or it moves to the new cell of the new created node for $p_k$ in a lower level.
    When $p_i$ changes its center to the new node of $p_k$, we call that touch a \emph{split above}.
    A split above touch implies that $p_j$ will no longer touch $p_i$ via a split touch.
    If $p_i$ remains in the cell of $p_k^\ell$, then we call that touch a \emph{split below}, see Fig.~\ref{fig:splittouch}.

    Now, we define two types of split events.
    A split below event $\splitbelowevent_{i,j}$ is defined as follows.
    There is a bunch $\bunch$ near $p_i$, for some constant values of $\alpha$ and $\beta$, and $p_j$ is the unique farthest point in $\bunch$ to the first point in that bunch.
    Also we define a split above event $\splitaboveevent_{ij}$ as follows.
    There is a bunch $\bunch$ near $p_i$, for some constant values of $\alpha$ and $\beta$, and $p_j$ is the unique closest point not in the bunch to the first point in $\bunch$.
    In the following lemmas, we find the expected number of these events.

    \begin{lemma}
    \label{lem:exp_split_below_event}
      The expected number of split below events for a point $p_i$ in a permutation $\langle p_1,\ldots,p_n\rangle$ is $O(\rho^{O(1)}\log n)$.
    \end{lemma}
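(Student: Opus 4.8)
The plan is to bound $\sum_{j}\Pr[\splitbelowevent_{i,j}]$ by a backwards analysis: fix $p_i$ and a value $j$, condition on the \emph{unordered} set $P_j=\{p_1,\dots,p_j\}$, and use the standard fact that, given $P_j$, the point $p_j$ is uniformly distributed over $P_j$ while the order of $P_j\setminus\{p_j\}$ is still uniform. Conditioned on $P_j$, Lemma~\ref{lem:constant_bunches} guarantees that only $\rho^{O(1)}$ bunches lie near $p_i$ (for the fixed constants $\alpha,\beta$ used in the definition of $\splitbelowevent$), so it suffices to bound, for a single bunch $\bunch\subseteq P_j$ with $|\bunch|=m$, the conditional probability that $p_j$ is the \emph{unique} farthest point of $\bunch$ from the first-inserted point of $\bunch$. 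Multiplying that bound by $\rho^{O(1)}$ and summing over $j$ will give $\sum_j\Pr[\splitbelowevent_{i,j}]=\rho^{O(1)}\sum_j 1/j=O(\rho^{O(1)}\log n)$.

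For the per-bunch bound, first observe that if $p_j\in\bunch$ then $p_j$ is the last of $\bunch$'s points to appear in the permutation (since $\bunch\subseteq P_j$), and conditioned on that the remaining $m-1$ points of $\bunch$ are still uniformly ordered, so the first-inserted point $f$ of $\bunch$ is uniform over $\bunch\setminus\{p_j\}$. Writing $g(a)$ for the farthest point of $\bunch$ from $a$ when that point is unique, a direct computation gives $\Pr[f=a,\ p_j=b\mid P_j]=\tfrac1j\cdot\tfrac1{m-1}$ for distinct $a,b\in\bunch$: probability $1/j$ that the last point of $P_j$ is $b$, then probability $1/(m-1)$ that $a$ is the earliest among the remaining points of $\bunch$. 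Since each $a$ has at most one witness $b=g(a)$, the event in question has conditional probability at most $m\cdot\tfrac1{j(m-1)}\le\tfrac2j$ when $m\ge2$ (the degenerate case $m=1$ contributes at most $1/j$, and ties in $g(a)$ only help). The ``uniqueness'' clause in the definition of $\splitbelowevent_{i,j}$ is exactly what caps the witness count at one per potential first point, and is therefore essential to the argument.

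I expect the main obstacle to be not the probabilistic core above but aligning the bookkeeping with the earlier definitions: confirming that a ``bunch near $p_i$ in $P_j$'' in the sense of Definition~\ref{def:bunch} is precisely what Lemma~\ref{lem:constant_bunches} counts (same constants $\alpha,\beta$), that the passage from the uniform order on $P_j$ to the induced uniform order on $\bunch$ is valid, and that the interplay between ``$p_j\in\bunch$'' and ``$p_j$ is last in $\bunch$'' is handled correctly. Once those points are nailed down, summing over the $\rho^{O(1)}$ bunches and over $j\le n$ is routine and yields the stated $O(\rho^{O(1)}\log n)$ bound.
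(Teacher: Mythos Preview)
Your proposal is correct and follows essentially the same backwards-analysis approach as the paper: condition on $P_j$, invoke Lemma~\ref{lem:constant_bunches} to bound the number of nearby bunches by $d=\rho^{O(1)}$, bound the per-bunch probability by $\tfrac{m}{j(m-1)}\le\tfrac{2}{j}$, and sum over $j$ to get $O(d\log n)$. Your probability decomposition (fix $p_j=b$ with probability $1/j$, then argue the first bunch point is uniform over the remaining $m-1$) is in fact slightly more explicit than the paper's compressed total-probability computation, but the argument and the resulting $2d/j$ bound coincide.
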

    \begin{proof}
      Let $F(q,\bunch)$ be a random event that $q\in \bunch$ proceeds all points of $\bunch\setminus\{q\}$.
      Also let $\bunch_1,\ldots,\bunch_d$ be the bunches near $p_i$ containing more than one point.
      By Lemma~\ref{lem:constant_bunches}, $d$ is a constant.
      So,
      \begin{align*}
        Pr\big[\splitbelowevent_{i,j}\big]
        &=\sum_{c=1}^d Pr\big[\splitbelowevent_{i,j}\mid p_j\in \bunch_c\big]Pr\big[p_j\in \bunch_c\big]\\
        &=\sum_{c=1}^d\bigg(\sum_{q\in \bunch_c} Pr\big[\splitbelowevent_{i,j}\mid F(q,\bunch_c)\big]Pr\big[F(q,\bunch_c)\big]\bigg)Pr\big[p_j\in \bunch_c\big] \\
        &\le \sum_{c=1}^d\bigg(\sum_{q\in \bunch_c}(\frac{1}{|\bunch_c|-1})(\frac{1}{|\bunch_c|})\bigg)\bigg(\frac{|\bunch_c|}{j}\bigg)
        =\sum_{c=1}^d (\frac{1}{|\bunch_c|-1})(\frac{|\bunch_c|}{j})\le \frac{2d}{j}.
      \end{align*}
      Therefore, $\sum_{j=1}^{i-1}Pr\big[\splitbelowevent_{i,j}\big]\le 2d\sum_{j=1}^n 1/j=O(\rho^{O(1)}\log n)$.
    \end{proof}

    \begin{lemma}
    \label{lem:exp_split_above_event}
      The expected number of split above events for a point $p_i$ in a permutation $\langle p_1,\ldots,p_n\rangle$ is $O(\rho^{O(1)}\log n)$.
    \end{lemma}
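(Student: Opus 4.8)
The plan is to follow the proof of Lemma~\ref{lem:exp_split_below_event} almost verbatim, replacing the combinatorial description of a split below event by that of a split above event. I will bound $\sum_{j=1}^{i-1}\Pr\big[\splitaboveevent_{i,j}\big]$ directly. For a fixed $j$, any split above event at step $j$ happens ``through'' one of the bunches near $p_i$ in $P_j$, of which there are at most $d=\rho^{O(1)}$ by Lemma~\ref{lem:constant_bunches}; so it is enough to show that for each such bunch $\bunch$ the probability $p_j$ realizes a split above event through $\bunch$ is at most $1/j$. Then $\Pr\big[\splitaboveevent_{i,j}\big]\le d/j$, and $\sum_{j=1}^{i-1}\Pr\big[\splitaboveevent_{i,j}\big]\le d\sum_{j=1}^{n}\tfrac1j=O(\rho^{O(1)}\log n)$, just as in the split below case.

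For the per-bunch estimate I condition on the unordered set $P_j$ and show the conditional probability is exactly $\tfrac1j$. Let $B:=\ball(x,\alpha\dist(p_i,x))$ be the ball defining $\bunch$, so the bunch is $B\cap P_j$ and its complement is $P_j\setminus B$; for $q\in B\cap P_j$ let $z(q)$ be the unique nearest point of $P_j\setminus B$ to $q$ (if $P_j\setminus B=\varnothing$ the event is empty). By the definitions, a split above event through $\bunch$ at step $j$ means exactly that $p_j\in P_j\setminus B$ and $p_j=z(q)$, where $q$ is the first-inserted point of $B\cap P_j$ (so $q\in P_{j-1}$, since $p_j\notin B$). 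Conditioned on the set $P_j$, the ordering of the first $j$ points is uniform: for each $w\in P_j\setminus B$ the probability that $p_j=w$ is $\tfrac1j$, and conditioned on $p_j=w$ the first-inserted point of $B\cap P_j$ is uniform over $B\cap P_j$. Hence the contribution of $w$ is $\tfrac1j\cdot\tfrac{|Q_w|}{|B\cap P_j|}$ with $Q_w:=\{q\in B\cap P_j\mid z(q)=w\}$; since each $q$ has a unique nearest point outside $B$, the sets $\{Q_w\}$ partition $B\cap P_j$, and summing over $w$ gives $\tfrac1j$.

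I expect the last telescoping to be the only place that needs care, and it is the sole real departure from Lemma~\ref{lem:exp_split_below_event}: in the split below case the witness (the farthest point) lies inside the bunch and all the randomness is internal to the bunch, whereas here the witness $z(q)$ lies outside it, so one must verify that summing the contributions over all possible ``first points'' $q$ of the bunch collapses to $O(1/j)$ irrespective of $|B\cap P_j|$ --- which it does precisely because $q\mapsto z(q)$ partitions the bunch. Properties~(\ref{def:bunch2}) and~(\ref{def:bunch3}) of Definition~\ref{def:bunch} are used only to restrict which balls are bunches (and thus to bound $d$ through Lemma~\ref{lem:constant_bunches}); imposing them only makes the event rarer, so the $\tfrac1j$ bound is unaffected. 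The bunch count and the final harmonic-series summation then mirror the split below proof.
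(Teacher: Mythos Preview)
Your argument is correct and follows essentially the same route as the paper: both condition on the (at most $d=\rho^{O(1)}$) bunches near $p_i$ in $P_j$ and show each bunch contributes $O(1/j)$ to $\Pr[\splitaboveevent_{i,j}]$. The only cosmetic difference is the order of conditioning---the paper conditions first on which $q\in\bunch$ is earliest (the event $F(q,\bunch)$) and then bounds $\Pr[\splitaboveevent_{i,j}\mid F(q,\bunch)]\le 1/(j-1)$ directly, whereas you condition first on $p_j=w$ and use the partition $\{Q_w\}$ to collapse the sum to $1/j$; both yield $\Pr[\splitaboveevent_{i,j}]\le d/(j-1)$ (resp.\ $d/j$) and the same harmonic bound.
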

    \begin{proof}
      We define $F(q,\bunch)$ similar to the proof of Lemma~\ref{lem:exp_split_below_event}.
      Let $\bunch_1,\ldots,\bunch_d$ be the bunches near $p_i$.
      Then,
      \begin{align*}
        Pr\big[\splitaboveevent_{i,j}\big]
        =\sum_{c=1}^d \sum_{q\in \bunch_c} Pr\big[\splitaboveevent_{i,j}\mid F(q,\bunch_c)\big]Pr\big[F(q,\bunch_c)\big]
        \le \sum_{c=1}^d\sum_{q\in \bunch_c}(\frac{1}{j-1})(\frac{1}{|\bunch_c|})\le \frac{d}{j-1},
      \end{align*}
      which implies $\sum_{j=2}^{i-1}Pr\big[\splitaboveevent_{i,j}\big]\le d\sum_{j=1}^n 1/j=O(\rho^{O(1)}\log n)$.
    \end{proof}

    In the following, we show that how split touches can be counted by the previous events.

    \begin{theorem}
    \label{thm:exp_split_below_touch}
      The expected number of split below touches in a random permutation is $O(\rho^{O(1)}n\log n)$.
    \end{theorem}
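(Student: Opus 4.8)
The plan is to charge each split below touch to a split below event $\splitbelowevent_{i,j}$ and then combine three facts already in hand: by Lemma~\ref{lem:const_touch} each ordered pair $(p_j,p_i)$ accounts for at most $\rho^{O(1)}$ touches; by Lemma~\ref{lem:exp_split_below_event} the expected number of split below events involving a fixed $p_i$ is $O(\rho^{O(1)}\log n)$; and summing over the $n$ choices of $p_i$ gives $O(\rho^{O(1)}n\log n)$. So everything reduces to the charging claim: \emph{whenever $p_j$ causes a split below touch on $p_i$, the event $\splitbelowevent_{i,j}$ occurs}, i.e.\ for the fixed constants $\alpha,\beta$ there is a bunch $\bunch$ near $p_i$ with $p_j\in\bunch$ such that $p_j$ is the unique farthest point of $\bunch$ from its first-inserted point.

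To establish the charging claim I would work with $p_k^\ell\coloneqq\centersite(p_i)$ in $P_{j-1}$, which by hypothesis is the top of a jump down to $p_k^h$ with $\ell\ge h+2$; put $m\coloneqq\lceil\log_\tau(\dist(p_i,p_k)/c_r)\rceil$, so $h+1\le m\le\ell$ and $c_r\tau^{m-1}<\dist(p_i,p_k)\le c_r\tau^m$. Lemma~\ref{lem:empty_annulus} (with $c_r=\frac{2c_c\tau}{\tau-4}$) is the geometric engine: $P_{p_k^h}=P_{p_k^\ell}\subseteq\ball(p_k,\frac{c_c\tau}{\tau-1}\tau^h)$ and $\ball(p_k,\frac12 c_r\tau^{\ell-1})\setminus\ball(p_k,\frac12 c_r\tau^{h})=\emptyset$ in $P_{j-1}$, and because the jump is a single-child chain this sharpens to $\ball(p_k,\frac12 c_r\tau^{\ell-1})\cap P_{j-1}=P_{p_k^h}$. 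The next step is to pin down where a split below touch forces $p_j$: for the node of $p_k$ created by inserting $p_j$ to sit strictly below level $m$---so $p_i$ remains in the cell of $p_k^\ell$ rather than migrating to it---$p_j$ must be inserted at level at most $m-2$, which forces $c_r\tau^{h}<\dist(p_k,p_j)$ (so the jump really splits) and $\dist(p_k,p_j)$ bounded by roughly $c_r\tau^{m-2}$ (or $c_p\tau^{m-1}$ in the decremented case); with $\dist(p_i,p_k)>c_r\tau^{m-1}$ and the standing bounds $c_p\le c_c$, $\tau\ge5$ this gives $\dist(p_k,p_j)\le\alpha\dist(p_i,p_k)$ for an absolute $\alpha<\frac12$ while keeping $\dist(p_k,p_j)<\frac12 c_r\tau^{\ell-1}$. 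Then $\bunch\coloneqq\ball(p_k,\alpha\dist(p_i,p_k))\cap P_j=P_{p_k^h}\cup\{p_j\}$, a suitable absolute $\beta\ge2\alpha$ places the annular shell inside the empty annulus and away from $p_j$ so that properties~(1)--(2) of Definition~\ref{def:bunch} hold, and property~(3) is Lemma~\ref{lem:NN_center} applied to the tree on $P_{j-1}$ (with a short separate argument when $p_j$ becomes the new nearest neighbor of $p_i$, which still keeps $p_k$ a good approximate nearest neighbor of $p_i$ in $P_j$).

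The ``unique farthest'' clause is then short: since $P_{p_k^h}\subseteq P_{j-1}$ and $p_j\notin P_{j-1}$, $p_j$ is the last point of $\bunch$ in the permutation and the first point $q_1$ lies in $P_{p_k^h}\subseteq\ball(p_k,\frac{c_c\tau}{\tau-1}\tau^{h})$, so every other point of $\bunch$ is within $\frac{2c_c\tau}{\tau-1}\tau^{h}$ of $q_1$, whereas $\dist(q_1,p_j)\ge\dist(p_k,p_j)-\dist(p_k,q_1)$ is strictly larger (using $\dist(p_k,p_j)>c_r\tau^{h}$ and the size of $c_r$ relative to $\frac{c_c\tau}{\tau-1}$, or the extra slack from $p_j$'s insertion level); uniqueness holds because any point of $\bunch$ tying $p_j$ from $q_1$ would have to lie outside $P_{j-1}$, hence equal $p_j$. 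This proves $\splitbelowevent_{i,j}$ and completes the charging.

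The main obstacle is the middle step: forcing a \emph{single} pair of absolute constants $(\alpha,\beta)$ to work across all configurations. The delicate cases are short jumps---where $m$ is as large as $\ell$ and $p_j$ may be inserted at level $\ell-1$---compounded by the $\pm1$ level adjustment in the insertion routine and the factor-of-two slack in Lemma~\ref{lem:empty_annulus}; in these one must check that $\dist(p_k,p_j)$ still lands in the window $\big(c_r\tau^{h},\ \frac12 c_r\tau^{\ell-1}\big]$ and stays below $\alpha\dist(p_i,p_k)$, perhaps by re-deriving a wider emptiness radius around $p_k$ from the fact that $p_k^{\ell-1}$ was compressed away (so $P_{p_k^{\ell-1}}=P_{p_k^h}$ is tiny and the nearest point of $N_{\ell-1}$ to $p_k$ other than $p_k$ is far), or by supplying an alternative bunch in the residual cases. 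Once the charging is in place the rest---Lemmas~\ref{lem:const_touch} and~\ref{lem:exp_split_below_event}, and the union over $p_i$---is mechanical.
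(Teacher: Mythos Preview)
Your overall strategy---charge each split below touch to a split below event $\splitbelowevent_{i,j}$, then invoke Lemmas~\ref{lem:const_touch} and~\ref{lem:exp_split_below_event}---is exactly the paper's, and your use of Lemma~\ref{lem:empty_annulus} to manufacture a bunch centered at $p_k$ is on the right track. But two sub-cases that the paper treats separately are genuine gaps in your argument, not merely ``delicate'' boundary cases to be cleaned up.

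First, the short-jump case you flag (specifically $m=\ell$ with $f\in\{\ell-1,\ell-2\}$) cannot be absorbed by tweaking $(\alpha,\beta)$ or by widening the empty annulus: when $p_j$ lands at level $\ell-1$ there simply need not be any bunch containing $p_j$ and satisfying Definition~\ref{def:bunch} in $P_j$. The paper does not try; instead it observes that in these cases $p_j$ also produces a \emph{basic} touch on $p_i$ (or, in the $f=\ell-2$ non-promoted sub-case, that this can happen at most once per point $p_j$), and charges there. Without this detour through Theorem~\ref{thm:expected_basic_touches} the charging is incomplete.

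Second, your ``unique farthest'' paragraph assumes $\dist(p_k,p_j)>c_r\tau^{h}$, but the jump splitting only guarantees $\dist(p_k,p_j)>c_c\tau^{h}$ (otherwise $p_j$ would sit under $p_k^{h}$ and never split the jump). When $c_c\tau^{h}<\dist(p_k,p_j)\le\frac{c_c}{\tau-1}\tau^{h+1}$, the point $p_j$ can be \emph{closer} to $q_1$ than some other member of $P_{p_k^{h}}$ is, so $p_j$ is not the farthest point of the bunch and $\splitbelowevent_{i,j}$ need not occur. The paper handles this narrow annulus with a separate one-shot argument: among all points of $P_j$ in that annulus, only the first one to arrive creates $p_k^{h+1}$ and touches $p_i$; the rest become children of $p_k^{h+1}$ and do not, so one may charge that single touch to the genuine farthest point (for which $\splitbelowevent$ does fire), after checking that $p_k^{h+1}$ is never later removed. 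Your argument needs both of these patches to close.
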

    \begin{proof}
      In order to relate split below touches to split below events, we should specify when such touches occur in a tree and then find the right bunches for the corresponding events.
      First, we show that if $p_j$ touches $p_i$ with a split below touch then either $p_j$ belongs to a bunch near $p_i$ or it is followed by a basic touch.
      We have $h+2\le m\le\ell$, because for $m=h+1$ we always have split above touches (the cell of $p_i$ will be changed).
      Now, we have two cases: either $m=\ell$ or $h+2\le m\le\ell-1$.
      \begin{enumerate}
        \item[(a)]
        If $m=\ell$, then the insertion of $p_j$ at any level between $h$ and $\ell-1$ results a split below touch, i.e. $h\le f\le \ell-1$.
        If $f=\ell-1$, then $p_j$ will also touch $p_i$ with a basic touch, so we can charge the basic touch to pay the cost of the split below touch.
        If $f=\ell-2$, then either $p_j$ requires to be promoted to level $\ell-1$ or it stays at the same level.
        If $p_j$ is promoted to level $\ell-1$, then it touches $p_i$ with a basic touch, so we can charge the basic touch to pay the cost of the split below touch.
        If $p_j$ stays at level $\ell-2$, then no further promotion is needed for $p_j$, which implies that the highest level of $p_j$ after the insertion of all $n$ points will be $\ell-2$.
        Therefore, each point may fall in this situation at most once, as such the total number of split below touches for all $n$ points falling in this category is $O(n)$.

        Now, let $h\le f \le \ell-3$.
        In this case, $h\le\ell-3=m-3$.
        From Lemma~\ref{lem:empty_annulus} and $c_r\tau^{m-1}<\dist(p_i,p_k)\le c_r\tau^m$, \[P_{p_k^h}\subset\ball(p_k,\frac{1}{2}c_r\tau^h)\subset\ball(p_k,\frac{1}{2}c_r\tau^{m-3})\subset\ball(p_k,\frac{1}{2\tau^2}\dist(p_i,p_k))\]
        and
        \[[\ball(p_k,\frac{1}{2\tau}\dist(p_i,p_k))\setminus\ball(p_k,\frac{1}{2\tau^2}\dist(p_i,p_k))]\cap P_{j-1}=\emptyset.\]
        Also,
        \[\dist(p_j,p_k)\le c_r\tau^f\le c_r\tau^{m-3}<\frac{1}{\tau^2}\dist(p_i,p_k).\]
        Since for $\tau>2$, we have $\frac{1}{2\tau^2}<\frac{1}{\tau^2}<\frac{1}{2\tau}$, if we set $\alpha=\frac{1}{\tau^2}$ and $\beta=\frac{1}{2\tau}$ in Definition~\ref{def:bunch}, then $p_j$ will be in a bunch near $p_i$.
        \item[(b)]
        If $h+2\le m\le\ell-1$, then the insertion of $p_j$ at any level between $h$ and $m-2$ results a split below touch, i.e. $h\le f \le m-2$, see Fig.~\ref{fig:fig4}.
        Using Lemma~\ref{lem:empty_annulus} and $c_r\tau^{m-1}<\dist(p_i,p_k)\le c_r\tau^m$, \[P_{p_k^h}\subset\ball(p_k,\frac{1}{2}c_r\tau^h)\subset\ball(p_k,\frac{1}{2}c_r\tau^{m-2})\subset\ball(p_k,\frac{1}{2\tau}\dist(p_i,p_k))\]
        and
        \[[\ball(p_k,\frac{1}{2}\dist(p_i,p_k))\setminus\ball(p_k,\frac{1}{2\tau}\dist(p_i,p_k))]\cap P_{j-1}=\emptyset.\]
        Also,
        \[\dist(p_j,p_k)\le c_r\tau^f\le c_r\tau^{m-2}<\frac{1}{\tau}\dist(p_i,p_k).\]
        For $\tau>2$, we have $\frac{1}{2\tau}<\frac{1}{\tau}<\frac{1}{2}$.
        Therefore, if we set $\alpha=\frac{1}{\tau}$ and $\beta=\frac{1}{2}$ in Definition~\ref{def:bunch}, then $p_j$ belongs to a bunch near $p_i$.
      \end{enumerate}
      So far, we proved that if $p_j$ causes a split below touch on $p_i$ and is not followed by a basic touch, then $p_j$ is in a bunch near $p_i$.
      However, $p_j$ is not necessarily the farthest point in the bunch.
      If $\dist(p_j,p_k)>\frac{c_c}{\tau-1}\tau^{h+1}$, then by Lemma~\ref{lem:covering}, $p_j$ is the unique farthest point to $p_k$.
      If $\dist(p_j,p_k)\le c_c\tau^h$, then $p_j$ is in the subtree rooted at $p_k^h$ and should not be promoted to a level greater than $h-1$, so $p_j$ will never results a split touch on $p_i$.
      The remaining case is when $c_c\tau^h<\dist(p_j,p_k)\le c_c/(\tau-1)\tau^{h+1}$, see Fig.~\ref{fig:fig4}.

      For all points in $P_j$ of a distance in $(c_c\tau^h,\frac{c_c}{\tau-1}\tau^{h+1}]$ from $p_k$, only one can touch $p_i$ via a split below touch, because the first point creates a node $p_k^{h+1}$ and touches $p_i$ and the remaining will be added as children of $p_k^{h+1}$, so they will not touch $p_i$ with a split below touch.
      In this case, we charge the farthest point to pay the cost of this split below touch.
      This argument is valid only if $p_k^{h+1}$ is not removed later, and this removal happens when the only child of $p_k^{h+1}$ finds a closer parent.
      By the triangle inequality, we can easily show that $p_k^{h+1}$ and the new parent are relatives, so $p_k^{h+1}$ should remain in the tree.
      Thus, either the farthest point touches $p_i$ via a split below touch or it pays for another point that touches $p_i$.

      Also note that the order of points in a permutation specifies the first point of a bunch, and it is important because one of its associated nodes is the center of $p_i$ in $P_{j-1}$.
      Furthermore, using Lemma~\ref{lem:const_touch}, the maximum number of split below touches from $p_j$ on $p_i$ is constant.
      In conclusion, the expected number of split below touches on $p_i$ is bounded by the summation of the expected number of split below events (Lemma~\ref{lem:exp_split_below_event}) and basic touches (Theorem~\ref{thm:expected_basic_touches}).
    \end{proof}

    \begin{theorem}
    \label{thm:exp_split_above_touch}
      The expected number of split above touches in a random permutation is $O(\rho^{O(1)}n\log n)$.
    \end{theorem}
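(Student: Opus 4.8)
The plan is to mirror the proof of Theorem~\ref{thm:exp_split_below_touch}, this time charging split above touches to the split above events of Lemma~\ref{lem:exp_split_above_event} (and, in the boundary subcases, to basic touches via Theorem~\ref{thm:expected_basic_touches}) instead of to split below events. Fix $p_i$ and suppose $p_j$ causes a split above touch on $p_i$. As before, let $p_k^\ell\coloneqq\centersite(p_i)$ in $P_{j-1}$, let $p_k^h$ be the bottom of the jump out of $p_k^\ell$ (so $\ell\ge h+2$), and let $m\coloneqq\lceil\log_\tau(\dist(p_i,p_k)/c_r)\rceil$, so that $h+1\le m\le\ell$ and $c_r\tau^{m-1}<\dist(p_i,p_k)\le c_r\tau^m$. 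Unwinding the insertion and PL algorithms, a split above touch means the insertion of $p_j$ splits this jump at some level $f$ and a newly created $p_k$-node, at level $g\in\{f,f+1\}$ with $m\le g\le\ell-1$, becomes $p_i$'s center; in particular $m-1\le f\le\ell-1$. The terminal property ``a split above touch implies $p_j$ will no longer split-touch $p_i$'' is then automatic, and by Lemma~\ref{lem:const_touch} each such pair $(i,j)$ accounts for only $\rho^{O(1)}$ actual touches.

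First I would pin down where $p_j$ sits relative to $p_k$. From the insertion rule, $\dist(p_j,p_k)$ is, up to a bounded factor of $\tau$, equal to $c_r\tau^f$, with $m-1\le f\le\ell-1$. The case $f=\ell-1$ is exactly like the $f=\ell-1$ case in Theorem~\ref{thm:exp_split_below_touch}: there $p_j$ is inserted at level $\ell-1$ and also touches $p_i$ with a basic touch, so the split above touch can be paid for by that basic touch. So assume $m-1\le f\le\ell-2$, which gives $c_r\tau^{m-2}<\dist(p_j,p_k)\le c_r\tau^{\ell-2}<\tfrac12 c_r\tau^{\ell-1}$ (using $\tau\ge5>2$). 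Now invoke Lemma~\ref{lem:empty_annulus} for the jump $p_k^\ell\to p_k^h$ in the net-tree on $P_{j-1}$: $P_{p_k^h}\subset\ball(p_k,\tfrac12 c_r\tau^{h})$ and $\big[\ball(p_k,\tfrac12 c_r\tau^{\ell-1})\setminus\ball(p_k,\tfrac12 c_r\tau^{h})\big]\cap P_{j-1}=\emptyset$. Since $\dist(p_j,p_k)>c_r\tau^{m-2}$ and, when $m\ge h+2$, $c_r\tau^{m-2}\ge\tfrac12 c_r\tau^{h}$, while also $\dist(p_j,p_k)<\tfrac12 c_r\tau^{\ell-1}$, the point $p_j$ is the first point of $P_j$ to fall into this formerly empty annulus, hence the closest point of $P_j$ lying outside $\ball(p_k,\tfrac12 c_r\tau^{h})\supseteq P_{p_k^h}$.

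Next I would package this as a split above event. Take the bunch with center $x=p_k$ and inner radius $\alpha\dist(p_i,p_k)$, choosing $\alpha$ (of the form $1/\tau$ or $1/\tau^2$, following the split below proof according to whether $m=\ell$ or $m<\ell$) so that $\tfrac12 c_r\tau^{h}\le\alpha\dist(p_i,p_k)<\dist(p_j,p_k)$, and choosing $\beta$ with $\beta\ge2\alpha$ and $\alpha\dist(p_i,p_k)\le\beta\dist(p_i,p_k)\le\tfrac12 c_r\tau^{\ell-1}$. Then $\bunch\coloneqq\ball(p_k,\alpha\dist(p_i,p_k))\cap P_j$ contains $P_{p_k^h}$; since the $(\alpha,\beta)$-annulus around $p_k$ is contained in the empty annulus of Lemma~\ref{lem:empty_annulus} and $p_j$ lies beyond it, property~\ref{def:bunch2} of Definition~\ref{def:bunch} holds in $P_j$; and property~\ref{def:bunch3} is exactly Lemma~\ref{lem:NN_center} with $c_r=\tfrac{2c_c\tau}{\tau-4}$. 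Because $p_j\notin\bunch$ and $\bunch\subseteq P_j$, in fact $\bunch\subseteq P_{j-1}$, so every point of $\bunch$ precedes $p_j$; the first such point $q$ lies in $\ball(p_k,\tfrac12 c_r\tau^{h})$, so $\dist(p_j,q)$ is within a bounded factor of $\dist(p_j,p_k)$, and the emptiness of the annulus makes $p_j$ the closest point of $P_j$ outside $\bunch$ to $q$. If it is the \emph{unique} such point this is precisely the event $\splitaboveevent_{i,j}$; otherwise (ties, or the degenerate cases $m=h+1$ and $m\in\{\ell-1,\ell\}$ where the annulus is too thin) I would charge to a basic touch or argue, as in Theorem~\ref{thm:exp_split_below_touch}, that the configuration occurs at most once per point of $P$ and so contributes only $O(n)$ overall. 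Summing Lemma~\ref{lem:exp_split_above_event} over the $n$ choices of $p_i$, multiplying by the $\rho^{O(1)}$ factor of Lemma~\ref{lem:const_touch}, and adding the $O(\rho^{O(1)}n\log n)$ basic touches of Theorem~\ref{thm:expected_basic_touches} yields the claimed bound.

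The main obstacle I anticipate is exactly the reconciliation in the middle step: extracting from the insertion and PL algorithms the precise window of distances $\dist(p_j,p_k)$ that a (non-basic-charged) split above touch forces, and then fitting that window, together with the factor-$2$ slack coming from Lemma~\ref{lem:empty_annulus} and the factor-$\tau$ slack coming from the ceiling in the insertion level, inside a single $(\alpha,\beta)$ pair with $\beta\ge2\alpha$ in each of the cases $m=\ell$, $h+2\le m\le\ell-1$, and $m=h+1$. The boundary cases --- especially $m=h+1$, where the jump is short and the empty annulus around $p_k$ shrinks to almost nothing --- will need the same ad hoc charging (to basic touches, or a ``once per point'' accounting) that appears in the proof of Theorem~\ref{thm:exp_split_below_touch}, and getting all of them to line up cleanly is where most of the care goes.
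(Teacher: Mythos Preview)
Your overall architecture---package the jump around $p_k$ as a bunch near $p_i$, place $p_j$ just outside it, and charge the touch to a split above event---is exactly what the paper does. The substantive gap is in how you dispose of the boundary cases, and this is not a matter of bookkeeping: the device the paper uses there is different in kind from anything in your proposal.

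The paper does \emph{not} charge the low-$f$ boundary to basic touches or to a ``once per point'' argument. Instead it observes that if $h>-\infty$ then the node $p_k^h$ itself was created earlier by some insertion that split the (then longer) jump below $m$; that insertion was a split \emph{below} touch on $p_i$. That single earlier split below touch is then charged to pay for the split above touches with $f\in\{h,h+1,h+2\}$, after which one may assume $f\ge h+3$ and hence $m\ge h+4$. Only with this much separation between $m$ and $h$ does the empty-annulus bunch (with $\alpha=\tfrac{1}{2\tau^3}$, $\beta=\tfrac{1}{2\tau^2}$, not $1/\tau$ or $1/\tau^2$) fit; the paper then finishes with a separate charging for the high boundary $\tfrac12 c_r\tau^{\ell-1}<\dist(p_j,p_k)\le c_r\tau^{\ell-1}$. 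So the dependency chain is: split above bounded by split above events \emph{plus} split below touches (Theorem~\ref{thm:exp_split_below_touch}), not plus basic touches.

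Your alternatives do not cover the same ground. Charging $f=\ell-1$ to a basic touch, which you lifted from the split below proof, used $m=\ell$ there to force $p_i\in\cell_{out}(p_k^\ell)$; in the split above regime $m\le\ell-1$, so $p_i$ may well sit in $\cell_{in}(p_k^\ell)$ and then $p_j$ at level $\ell-1$ never basic-touches it. More importantly, the hard boundary is the \emph{low} end, $m$ close to $h$, and nothing in your outline controls it: there is no reason a given $p_i$ cannot experience many split above touches with $m=h+1$ as its center descends through successive jump-splits, so a ``once per point'' write-off is unjustified. The missing idea is precisely the backward charge to an earlier split below touch that created $p_k^h$.
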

    \begin{proof}
      Our aim is to make a connection between split above touches and events.
      For a split above touch we have $h+1\le m\le \ell-1$ and $m-1\le f\le \ell-1$, see Fig.~\ref{fig:fig4}.
      If $h>-\infty$, then the subtree rooted at $p_k^h$ has more than one point.
      Therefore, before $p_j$ touches $p_i$ via a split above touch, $p_i$ should have been touched with a split below touch by a node in level $h-1$.
      We charge that split below to pay the cost of split above touches on $p_i$ for levels $h$, $h+1$, and $h+2$.
      In other words, if $h\le f\le h+2$, then the cost of split above touches has been already paid by an earlier split below touch.
      As such, we only need to handle the remaining split above touches when $f\ge h+3$ and $h+4\le m\le\ell-1$.

      Using Lemma~\ref{lem:empty_annulus} and $c_r\tau^{m-1}<\dist(p_i,p_k)\le c_r\tau^m$, \[P_{p_k^h}\subset\ball(p_k,\frac{1}{2}c_r\tau^h)\subset\ball(p_k,\frac{1}{2}c_r\tau^{m-4})\subset\ball(p_k,\frac{1}{2\tau^3}\dist(p_i,p_k))\]
      and
      \[[\ball(p_k,\frac{1}{2}\dist(p_i,p_k))\setminus\ball(p_k,\frac{1}{2\tau^3}\dist(p_i,p_k))]\cap P_{j-1}=\emptyset.\]
      If $p_j$ is promoted to level $f$, then $p_j^{f-1}$ violates the covering property and as such, \[\dist(p_j,p_k)>c_c\tau^f\ge c_c\tau^{m-1}=\frac{\tau-4}{2\tau}c_r\tau^{m-1}\ge\frac{1}{10\tau} c_r\tau^{m}\ge\frac{1}{10\tau}\dist(p_i,p_k).\]
      Otherwise, if $p_j$ is directly inserted into level $f$, $p_j$ cannot be a relative of $p_k$ at level $f-1$, as such $\dist(p_j,p_k)>c_r\tau^{f-1}\ge c_r\tau^{m-2}\ge\frac{1}{\tau^2}\dist(p_i,p_k)$.
      Since $\tau\ge 5$, $\dist(p_j,p_k)>\min\{\frac{1}{10\tau},\frac{1}{\tau^2}\}\dist(p_i,p_k)\ge\frac{1}{2\tau^2}\dist(p_i,p_k)$.
      Also, we have $\frac{1}{2\tau^3}<\frac{1}{2\tau^2}<\frac{1}{2}$, so if we set $\alpha=\frac{1}{2\tau^3}$ and $\beta=\frac{1}{2\tau^2}$ in Definition~\ref{def:bunch}, then $p_j$ does not belong to any bunch near $p_i$.

      From Lemma~\ref{lem:empty_annulus}, the distance of every point of $P_{j-1}$ not in $P_{p_k^h}$ to $p_k$ is greater than $\frac{1}{2}c_r\tau^{\ell-1}$.
      We know that if $p_j$ touches $p_i$ via a split above touch, then $\dist(p_j,p_k)\le c_r\tau^{\ell-1}$.
      So, if $\dist(p_j,p_k)\le \frac{1}{2}c_r\tau^{\ell-1}$, then $p_j$ is the closest point to $p_k$ not in that bunch, as desired.
      Otherwise, if $\frac{1}{2}c_r\tau^{\ell-1}<\dist(p_j,p_k)\le c_r\tau^{\ell-1}$, we use a charging argument similar to the proof of Theorem~\ref{thm:exp_split_below_touch}.
      In other words, if there are many points of a distance in $(\frac{1}{2}c_r\tau^{\ell-1},c_r\tau^{\ell-1}]$ from $p_k$, then only one of them results a split above touch, and it is the first one that splits the jump at level $\ell-1$.
      Therefore, we can charge the closest point to $p_k$ not in the bunch to pay the cost of that split above touch.
      Notice that because $p_k^{\ell-1}$ is relative to the point that creates it, $p_k^{\ell-1}$ will not be removed later and will be in the output.

      For $h=-\infty$, i.e. the subtree rooted at $p_k^h$ only contains $p_k^h$, we can use a similar argument (without charging split below touches).
      In a nutshell, the expected number of split above touches on $p_i$ is bounded by the expected number of split above events (Lemma~\ref{lem:exp_split_above_event}) and split below touches (Theorem~\ref{thm:exp_split_below_touch}) on $p_i$.
    \end{proof}

    \begin{theorem}
    \label{thm:exp_merge_touch}
      The expected number of merge touches in a random permutation is $O(\rho^{O(1)}n\log n)$.
    \end{theorem}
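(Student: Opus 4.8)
The plan is to bound the number of merge touches by the number of split touches, reusing the machinery already built. A merge touch on $p_i$ occurs when the insertion of some $p_j$ forces the center $\centersite(p_i) = p_m^\ell$ to be deleted because it fails the semi-compressed condition (it loses its last relative other than itself, and it is an only child with a single child), so its two incident tree edges merge into a jump. The PL algorithm then moves $p_i$, together with the rest of $\cell(p_m^\ell,T)$, into the cell of $\parent(p_m^\ell)$. The key structural observation is that a node $p_m^\ell$ can only be deleted if it was created earlier, and — crucially — in our construction nodes are only ever created when they acquire a relative (during insertion or promotion). So every merge touch on $p_i$ can be paired with the moment that node $p_m^\ell$ was \emph{created}, which was itself triggered by a split touch (when a jump through $p_m$ was split at level $\ell$, putting $p_i$ into the fresh cell $\cell(p_m^\ell,\cdot)$) or by the insertion of $p_m$ at that level. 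First I would make this pairing precise: charge each merge touch on $p_i$ caused by deleting $p_m^\ell$ back to the split (or insertion) touch that created $p_m^\ell$ while $p_i$ was in its cell.

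Next I would argue the charging is $O(1)$-to-one. A given node $p_m^\ell$ is created at most once and deleted at most once over the whole construction (once deleted by compression it is gone, and if $p_m^{\ell}$ is ever recreated it is a genuinely new creation event with its own split touch to charge against). By Lemma~\ref{lem:const_touch}, each $p_j$ can cause only $\rho^{O(1)}$ merge touches on $p_i$, and by the same lemma only $\rho^{O(1)}$ split touches, so the per-pair multiplicities are all constant. Combining, the total number of merge touches on $p_i$ is at most a constant times the number of split (above and below) touches on $p_i$, plus the number of direct insertions of points $p_m$ with $p_i$ in the resulting cell — but the latter is itself a basic touch (the inserting point $p_m$ touches $p_i$), already counted in Theorem~\ref{thm:expected_basic_touches}. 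Summing the bounds from \Cref{thm:expected_basic_touches,thm:exp_split_below_touch,thm:exp_split_above_touch} gives the claimed $O(\rho^{O(1)} n \log n)$ in expectation.

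The subtle point I expect to be the main obstacle is verifying that when $p_m^\ell$ is deleted, the point $p_i$ currently residing in $\cell(p_m^\ell,\cdot)$ was \emph{already} in that cell at the time $p_m^\ell$ was created — i.e. that $p_i$ did not wander into $\cell(p_m^\ell,\cdot)$ in between via some intermediate touch. If $p_i$ entered $\cell(p_m^\ell,\cdot)$ through a later basic, split, or merge touch, that intervening touch has its own charge and we can instead charge the merge touch to that more recent event; so the real task is to set up the charging so that each merge touch is charged to the most recent touch that placed $p_i$ into $\cell(\centersite(p_i),\cdot)$, and to check that this "most recent placement" touch is always of basic, split, or (recursively) merge type, with the recursion on merge bottoming out because each merge strictly decreases the level of $\centersite(p_i)$ or strictly decreases $\dist(p_i,\centersite(p_i))$ relative to the scale (Lemma~\ref{lem:const_touch} already quantifies that a single $p_j$ can only do this a constant number of times). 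Once that bookkeeping is pinned down, the counting is immediate from the earlier theorems.
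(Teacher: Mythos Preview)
Your approach is essentially the paper's: charge each merge touch on $p_i$ at $p_k^\ell$ back to the earlier split above touch that placed $p_i$ in $\cell(p_k^\ell)$ when $p_k^\ell$ was created, and conclude via \Cref{thm:exp_split_above_touch}. The paper's proof is terser than yours—it simply asserts that ``since $p_i$ is in the cell of $p_k^\ell$, $p_i$ belongs to the cell of $\parent(p_k^\ell)$ before $p_a$ is inserted'' and stops there—so you are right to flag the gap: what if $p_i$ drifted into $\cell(p_k^\ell)$ later via an intermediate merge from a child rather than at creation time?

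Your proposed fix (charge to the most recent placement, recurse on merges) is the right instinct, but your stated reason for termination is backwards: a merge \emph{increases} the level of $\centersite(p_i)$, not decreases it. What actually makes the recursion harmless is the inner/outer split of cells. A merge only \emph{touches} $p_i$ when $p_i\in\cell_{out}(p_k^\ell)$, i.e.\ $\dist(p_i,p_k)>c_p\tau^{\ell-1}/2$; combined with $\dist(p_i,p_k)\le c_r\tau^\ell$, the level $\ell$ at which a merge can touch $p_i$ lies in a window of width $O(1)$. Each merge raises $\ell$ by at least one within that window, and once $p_i$ lands in $\cell_{in}$ it is moved up without being touched. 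Hence between any two consecutive split above (or basic) touches on $p_i$ there are only $O(1)$ merge touches, giving $(\text{\# merges})\le O(1)\cdot\bigl((\text{\# split aboves})+(\text{\# basics})\bigr)$, and the bound follows from \Cref{thm:expected_basic_touches,thm:exp_split_above_touch}. With that amortization in place of the level-decrease claim, your argument is complete and matches the paper's.
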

    \begin{proof}
      Recall that the insertion of $p_j$ results a merge touch on $p_i$ if $p_k^\ell\coloneqq\centersite(p_i)$ is deleted from the tree and $p_i$ moves to the cell of $\parent(p_k^\ell)$.
      Let $p_k^\ell$ be added to the tree after the insertion of a point $p_a$, where $a<j$.
      Since $p_i$ is in the cell of $p_k^\ell$, $p_i$ belongs to the cell of $\parent(p_k^\ell)$ before $p_a$ is inserted.
      Therefore, the insertion of $p_a$ results a split above touch on $p_i$.
      In other words, before $p_j$ touches $p_i$ via a merge touch, there exists another point $p_a$,where $a<j$ , such that it touches $p_i$ via a split above touch.
      Therefore, it suffices to pay the cost of a merge touch when an earlier split above touch occurs.
      Thus, the total number of merge touches is bounded by the total number of split above touches.
    \end{proof}




  \section{Conclusion} 
\label{sec:conclusion}
  In this paper, we proposed local net-trees as a variation of net-trees with much easier to maintain properties.
  We proved that local net-trees are also net-trees with a slightly different parameters.
  Then, we presented a simple algorithm to construct local net-trees incrementally from an arbitrary permutation of points in a doubling metric space.
  We relegated the challenge of achieving $O(n\log n)$ time complexity to the analysis part.
  To analyze our algorithm, we defined a notion of touches corresponding to the number of distance computations, and proved that the total expected number of touches in a permutation is $O(n\log n)$.


  \clearpage
  \bibliographystyle{abbrv}
  \bibliography{bibliography}
\end{document}